\def\d{\mathrm{d}}
\newcommand{\VaR}{\mathrm{VaR}}
\newcommand{\ES}{\mathrm{ES}}
\newcommand{\E}{\mathbb{E}}
\newcommand{\R}{\mathbb{R}}
\newcommand{\p}{\mathbb{P}}
\newcommand{\id}{\mathds{1}}
\newcommand{\X}{\mathcal X}
\renewcommand{\le}{\leqslant}
\renewcommand{\geq}{\geqslant}
\renewcommand{\leq}{\leqslant}
\renewcommand{\epsilon}{\varepsilon}
\theoremstyle{plain}
\newtheorem{theorem}{Theorem}
\newtheorem{lemma}{Lemma}
\newtheorem{proposition}{Proposition}
\theoremstyle{definition}
\newtheorem{definition}{Definition}
\newtheorem{example}{Example}
\newtheorem{assumption}{Assumption}
\theoremstyle{remark}
\newtheorem{remark}{Remark}
\theoremstyle{definition}
\renewcommand{\cite}{\citet}
\renewcommand{\cdots}{\dots}
\begin{document} 
	
\title{
Risk-sensitive Reinforcement Learning Based on\\ Convex Scoring Functions} 

\author{
    Shanyu Han\thanks{\scriptsize School of Mathematical Sciences, Peking University, Beijing, China. \Letter~\texttt{hsy.1123@pku.edu.cn
}}
    \and
	Yang Liu\thanks{\scriptsize Corresponding Author. School of Science and Engineering, The Chinese University of Hong Kong (Shenzhen), Shenzhen, China. \Letter~\texttt{yangliu16@cuhk.edu.cn}}
    \and 
    Xiang Yu\thanks{\scriptsize Department of Applied Mathematics, The Hong Kong Polytechnic University, Kowloon, Hong Kong, China. \Letter~\texttt{xiang.yu@polyu.edu.hk}}
}

\date{\vspace{-0.5in}}

\maketitle
\begin{abstract}
{We propose a reinforcement learning (RL) framework under a broad class of risk objectives, characterized by convex scoring functions.  This class covers many common risk measures, such as variance, Expected Shortfall, entropic Value-at-Risk, and mean-risk utility. To resolve the time-inconsistency issue, we consider an augmented state space and an auxiliary variable and recast the problem as a two-state optimization problem. We propose a customized Actor-Critic algorithm and establish some theoretical approximation guarantees. A key theoretical contribution is that our results do not require the Markov decision process to be continuous. Additionally, we propose an auxiliary variable sampling method inspired by the alternating minimization algorithm, which is convergent under certain conditions. We validate our approach in simulation experiments with a financial application in statistical arbitrage trading, demonstrating the effectiveness of the algorithm.
}

\vspace{0.1in}
\noindent \textbf{Keywords}: Risk-sensitive reinforcement learning,  Actor-Critic algorithm, convex scoring function, augmented state process, two-stage optimization, error analysis
\end{abstract}

\vspace{0.3in}
\section{Introduction}

In the new trend of financial innovations, reinforcement learning (RL) has gained significant attention in wide financial decision-making when the environment is unknown (\cite{Hambly2023}). 
RL features the learning process when the agent perceives and interacts with the unknown environment and updates the decisions based on the trial-and-error procedure (\cite{Garcia2015}). 
In the standard RL algorithms, one considers the linear accumulation of observed rewards in the optimal decision making: 
\begin{equation}\label{prob:neutral}
\begin{aligned}
&\quad \inf_{\pi \in \Pi} \E \left[  \sum_{t = 0}^T c(S_t, A_t,S_{t+1})  \right],
\end{aligned}
\end{equation}
where $\pi \in \Pi$ is a policy and $c$ is a cost function. Note that maximizing the reward is equivalent to minimizing the cost. This traditional formulation actually shows a risk-neutral attitude towards the cost. However, from a practical point of view, a decision maker often exhibits strong risk avoidance rather than risk neutrality throughout the decision process, such as in autonomous driving and clinical treatment planning, because some extreme scenarios during the learning process may cause a significant or fatal loss. As a remedy, risk-sensitive RL algorithms have caught a lot of attention; see, e.g., \cite{Garcia2015,CGJP,FYCWX,FYCW,zhang2021mean,JPWT2022,bauerle2022markov,ni2022policy,du2022provably,wang2024reductions,Jia} for some recent developments. Typically, some types of risk preferences are incorporated into the observed rewards to depict the agent's risk aversion attitude on the learned policy. The goal of the present paper is to contribute to the topic of risk-sensitive RL methods by featuring the general risk measures associated with convex scoring functions.  



Scoring functions, a fundamental tool in statistics and quantitative risk management, offer a robust framework for evaluating forecasting methodologies. This makes them a natural choice for the risk preference to integrate into our RL approach. Specifically, they serve as metrics to assess and compare competing point forecasters or forecasting procedures. Some common examples include the absolute error and the squared error (see \cite{FZ16} and \cite{Brehmer2019}). 
In our formulation, the decision-maker aims to minimize the scoring function; see also the related study in  \cite{frongillo2021elicitation}. In the statistical contexts, the utilization of the scoring function is to align the predicted value closely to the true value, thereby the scoring function is often regarded as a loss function (see \cite{akbari2021does} and \cite{tyralis2025transformations}).

The concept of (classical) scoring functions is closely tied to the statistical idea of elicitability, which refers to the ability to compare competing forecasts using scoring functions (see \cite{gneiting2011making}, \cite{Ziegel16}, and the discussion in Section \ref{sec:riskmeasure}). This connection has profound implications for risk management theory (e.g., \cite{FS16}, \cite{EMWW21}, and \cite{FLWW2024}). A key application of scoring functions lies in the study of elicitable risk measures, where scoring functions are employed under specific conditions to analyze their properties (\cite{bellini2015elicitable}). Scoring functions also naturally align with various risk functionals, including generalized deviation measures (\cite{Rockafellar2006}) and sensitivity measures (\cite{fissler2023sensitivity}). Moreover, they serve as a bridge to the development of robust risk-averse optimization frameworks, as demonstrated in \cite{FLW24} and \cite{RMM25}. Recently, the distributional misspecification in risk analysis and forecast evaluation are addressed in different aspects and various robustness techniques are adopted; 
see \cite{Brehmer2019}, \cite{chen2022ordering}, \cite{Miao2024}, and \cite{BLLW25} for recent progress.


In the present paper, we consider a class of risk-sensitive RL problems by employing convex scoring functions on the expected cumulative costs within the MDP framework. The convex scoring function $f$ aims to quantify the discrepancy $f(y, \upsilon)$ between a cumulative cost $y$ and an auxiliary variable $\upsilon.$ The formal definition of convex scoring functions is given in Section \ref{sec:setup}. Overall, we are interested in devising some RL algorithms for the following risk-sensitive MDP problem when the model parameters are unknown: 
\begin{equation}
\label{prob:main}
\begin{aligned}
&\quad \inf_{\pi \in \Pi} \inf_{\upsilon\in\mathbb{R}} h\left(\E \left[  f\left( \sum_{t = 0}^T c(S_t, A_t,S_{t+1}) ,\upsilon\right) \right],\upsilon\right)
\end{aligned}
\end{equation} for some transform function $h.$
This framework covers important examples such as Expected Shortfall (ES), variance, and mean-risk utilities. In contrast to prior approaches that rely primarily on monetary or coherent risk measures, the proposed framework admits a broader class of risk functionals defined via convex scoring functions. This generality enables the treatment of both standard and non-standard measures. The inclusion of mean-variance utility demonstrates the framework’s applicability to a wide range of portfolio optimization problems. The risk functionals induced by convex scoring functions are closely related to optimized certainty equivalent (OCE) risk measures (see \cite{ben2007old,wang2024reductions}), which can be embedded into our framework. Beyond the OCE class, the framework can also accommodate other risk measures such as the entropic Value-at-Risk (EVaR) and utility-based risk measures; see Section \ref{sec:method} for details.

However, due to the time-non-separable nature in \eqref{prob:main}, it is well known that the time-inconsistency issue arises so that the standard dynamic programming principle cannot be employed directly in devising the algorithm. Our technical and practical contributions are summarized below.     

Firstly, we reformulate the risk-sensitive RL problem as a two-stage optimization problem, for which we consider the augmentation of the state space for the inner dynamic optimization problem. As a result, the inner optimization problem on the enlarged state space becomes time-consistent, and a recursive Bellman equation can be utilized for policy evaluation. Notably, our framework does not require continuity assumptions on the MDP, and we can show that the optimal value function can be effectively approximated by a sequence of policies (Theorem \ref{t_Bellman}). 

Secondly, we develop an Actor-Critic (AC) algorithm to optimize the policy and value function using neural networks. The Critic-step ensures that the value function approximates the true value uniformly, while the Actor-step updates the policy based on the policy gradient. Theoretical results demonstrate the approximation of the algorithm and provide bounds on the approximation error for both the value function (Theorem \ref{t1}) and the optimal auxiliary variable $\upsilon$ (Theorem \ref{t2}).

Thirdly, for the outer optimization over the auxiliary variable, we contribute to the selection and sampling of the optimal auxiliary variable to enhance computational efficiency. The selection step is conducted through stochastic gradient descent (SGD), minimizing a composite objective involving the value function and the scoring function. Theoretical guarantees ensure the approximation of this approach, and the distance between the selected $\upsilon$ and the true minimizers is rigorously analyzed (Theorem \ref{t2}). The efficient sampling of $\upsilon$ is conducted by an alternating minimization algorithm, combining critic and actor updates with adaptive sampling. This method ensures convergence under mild conditions (Theorem \ref{thm:limit}).

Finally, we apply the proposed AC algorithm to a statistical arbitrage problem in finance, where an RL agent manages trading strategies under risk constraints. We compare policies trained with different risk measures, such as ES and variance. Simulation experiments demonstrate the advantages of the proposed approach in minimizing risk-sensitive objectives compared with other baseline RL algorithms. 

The rest of the paper is structured as follows. Section \ref{sec:setup} provides preliminaries of the model setup, risk measures, and scoring functions. Section \ref{sec:model} formulates the risk-sensitive RL under the static risk measures, where we introduce the augmented state process to recast the problem as a time-consistent RL problem. Section \ref{sec:method} presents the theoretical foundations and the methodology of the proposed Actor-Critic RL algorithm, including the Bellman equation and the approximation of the proposed value function. Particularly, Sections \ref{sec:select} and \ref{sec:sample} elaborate on the selection and sampling of the optimal auxiliary variable $v$ in the RL algorithm. Section \ref{sec:app} applies the proposed methodology to a statistical arbitrage problem, comparing the performance of various risk measures through simulation experiments. Finally, Section \ref{sec:proof} collects all theoretical proofs. 

\section{Preliminaries}\label{sec:setup}

\subsection{Markov Decision Process}
In the RL setting, the agent learns an optimal policy through interactions with the unknown environment, which is modeled by a discrete-time stationary MDP $(\mathcal{S}, \mathcal{A}, p, c)$
in the present paper. Here $\mathcal{S}$ stands for the state space and $\mathcal{A} $ denotes the action space, both of which can be either discrete or continuous. A state transition probability kernel is denoted by $p,$ which is unknown to the agent. That is, for fixed $(s, a) \in \mathcal{S} \times \mathcal{A}$, $p(s'\mid s, a)$ is 
a transition probability (density) function from the state $s\in\mathcal{S}$ to the state $s'\in\mathcal{S}$ via the action $a$. 
Moreover, a cost function is denoted by $c: \mathcal{S} \times \mathcal{A}  \times\mathcal{S}\to \mathbb{R}$, where a realized cost is given by $c(s, a,s')$.  A single episode consists
of $T$ periods. Let $\mathcal{T}:=\{0, \ldots, T-1\}$ be the sequence of time spots. Conventionally, an admissible randomized (feedback) policy at $(t,s)$, denoted by $\pi(a|t,s)$, is interpreted as a probability density function (p.d.f.) on the action space $\mathcal A$ satisfying: 
\begin{itemize}
    \item $\pi(a | t, s)$ is a p.d.f. on $\mathcal{A}$ with respect to (w.r.t.) measure $\nu_{\mathcal{A}}$ for any fixed $(t,s)\in \mathcal{T}\times\mathcal{S};$
    \item $\pi(a | t, s)$ is a measurable function w.r.t. $(t,s)$ for any fixed $a\in\mathcal{A}.$
\end{itemize}
Define the set of all admissible (feedback) policies as $\Pi$. 
When the action space $\mathcal{A}$ is discrete, $\nu_{\mathcal{A}}$ is a counting measure and $\pi(a| t, s)$ means the probability mass of choosing action $a$ in state $s$ at time $t.$ When the action space $\mathcal{A}$ is continuous,  $\nu_{\mathcal{A}}$ is the Lebesgue measure. 

\subsection{Risk Measures and Scoring Functions}
\label{sec:riskmeasure}
Let $(\Omega,\mathscr{F},\mathbb{P})$ be a probability space and define $\mathcal{Z}:=\mathcal{L}_m(\Omega,\mathscr{F},\mathbb{P})$ with $m\in [1,\infty],$ which is the space of measurable random variables with finite $m$-th moment (essentially bounded measurable random variables if $m = \infty$). A risk measure  $\rho:\mathcal{X}\to\mathbb{R}$ is a functional mapping from a random variable to a real number representing its risk value, where $\mathcal{X}$ is a space of random variables. From now on, we assume that $\mathcal{X}=\mathcal{Z}$ (one may fix an appropriate $m$ for integrability according to the risk objective) and the random variable $Y\in\mathcal{X}$ is interpreted as a random cost. For example, with $m=1$, for a level $\alpha \in (0,1]$, the risk measures of VaR and ES are defined as 
$$
\VaR _{\alpha} (Y) \triangleq \inf \{x \in \mathbb R :   \p(Y\le  x) \geq \alpha  \}, \;\;\;\; 
\ES_{\alpha}(Y) \triangleq  \frac{1}{1-\alpha}  \int_{\alpha}^{1} \VaR_\beta (Y) \d \beta,  \quad Y\in \mathcal X.\notag
$$
In the context of quantitative risk management, a scoring function is a metric to evaluate and compare competing risk forecasts. As a definition (Chapter 9 of \cite{MFE15}), a classical scoring function $f:\mathbb{R}\times\mathbb{R}\to\mathbb{R}$ satisfies: 
\begin{itemize}
\item for any $y$ and $v$, $f(y,\upsilon)\geq0$ and $f(y,\upsilon)=0$ if and only if $y=\upsilon;$
\item for a fixed $y$, $f(y,\upsilon)$ is increasing in $\upsilon$ for $\upsilon>y$ and decreasing for $\upsilon<y;$
\item for a fixed $y$, $f(y,\upsilon)$ is continuous in $\upsilon.$
\end{itemize}
An elicitable risk measure is defined by 
$
\arg\min_{\upsilon
 \in\mathbb{R}}\E [f(Y,\upsilon)]$,  
where $f$ is a classical scoring function (see \cite{gneiting2011making,FLWW2024}). Expectation, VaR, and expectile are common examples of elicitable risk measures (not examples of our risk objective in later Problems \eqref{prob:SRP} and \eqref{problem_to_solve}).

In the present paper, we consider convex scoring functions, which are defined in Definition \ref{def1}. A convex scoring function is bivariate and convex w.r.t. its second argument. The two arguments of a convex scoring function are interpreted as the cost value $y$ and the auxiliary variable $\upsilon,$ respectively. 

\begin{definition}[Convex scoring function]
\label{def1}
A function $f: \mathcal{D} \times \R \rightarrow \mathbb{R}$ is called a convex scoring function if $\mathcal{D}\subset \mathbb{R}$ is the domain of $y$ and $f(y,\upsilon)$ is convex in $\upsilon$ for any fixed $y\in\mathcal{D}$. 
\end{definition}
Definition \ref{def1} is quite general because the convexity is closely tied to unimodality. If the convexity is replaced by the condition that $-f(y,\upsilon)$ is unimodal in $\upsilon$ (i.e., for a fixed $y$, there exists some $\upsilon_0\in\mathbb{R}$ such that $f(y,\cdot)$ is decreasing on $(-\infty,\upsilon_0]$ and increasing on $[\upsilon_0,\infty)$), then the scoring function satisfies the second condition in the definition of the classical scoring function above; see also \cite{bellini2015elicitable}. In our paper, we focus on the convexity-based formulation to enhance the clarity and facilitate some mathematical arguments. The next result for stochastic optimization readily follows from Definition \ref{def1}.

\begin{lemma}
\label{simpleProposition1}
For any random variable $Y\in\mathcal{X},$ $\E[ f\left(Y,\upsilon\right)]$ is convex in $\upsilon.$
\end{lemma}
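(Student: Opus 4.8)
The plan is to prove that $\upsilon \mapsto \E[f(Y,\upsilon)]$ is convex by reducing it directly to the pointwise convexity of $f$ in its second argument, which is precisely the hypothesis granted by Definition \ref{def1}, and then preserving convexity under the expectation operator. The key observation is that convexity is an inequality that, when it holds pointwise in $\omega$ (or pointwise in the realized value $y$), is inherited by integration because expectation is a positive linear functional.

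Concretely, I would fix $\upsilon_1,\upsilon_2\in\R$ and $\lambda\in[0,1]$. For each fixed value $y\in\mathcal{D}$, Definition \ref{def1} gives the pointwise convexity inequality
\begin{equation*}
f\bigl(y,\lambda\upsilon_1+(1-\lambda)\upsilon_2\bigr)\le \lambda f(y,\upsilon_1)+(1-\lambda)f(y,\upsilon_2).
\end{equation*}
Substituting the random variable $Y$ for $y$, this inequality holds almost surely:
\begin{equation*}
f\bigl(Y,\lambda\upsilon_1+(1-\lambda)\upsilon_2\bigr)\le \lambda f(Y,\upsilon_1)+(1-\lambda)f(Y,\upsilon_2),\qquad \text{a.s.}
\end{equation*}
Taking expectations on both sides and using the linearity and monotonicity of $\E$ yields
\begin{equation*}
\E\bigl[f\bigl(Y,\lambda\upsilon_1+(1-\lambda)\upsilon_2\bigr)\bigr]\le \lambda\,\E\bigl[f(Y,\upsilon_1)\bigr]+(1-\lambda)\,\E\bigl[f(Y,\upsilon_2)\bigr],
\end{equation*}
which is exactly the statement that $\upsilon\mapsto\E[f(Y,\upsilon)]$ is convex.

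The only genuine technical point, rather than a true obstacle, is integrability: the argument implicitly requires that $\E[f(Y,\upsilon)]$ be well-defined (finite, or at least unambiguously $+\infty$) for the values of $\upsilon$ involved, so that the expectations and the linear combination on the right-hand side make sense and the inequality is not an indeterminate comparison. Since the paper takes $\mathcal{X}=\mathcal{Z}=\mathcal{L}_m$ and the scoring functions of interest (squared error, absolute error, and the like) have at most polynomial growth in their arguments, I would note that finiteness of $\E[f(Y,\upsilon)]$ holds under the standing moment assumptions; if one wishes to be fully rigorous, the convexity inequality remains valid in $[-\infty,+\infty]$ regardless, since a convex combination of terms that may equal $+\infty$ still dominates the left-hand side. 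I do not expect this step to require measurability arguments beyond the implicit assumption that $f$ is jointly measurable so that $f(Y,\upsilon)$ is a random variable; this is standard for scoring functions and can be taken for granted. Thus the proof is essentially a one-line transfer of pointwise convexity through the expectation, with integrability being the only bookkeeping to address.
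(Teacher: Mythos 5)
Your proof is correct and matches the paper's treatment: the paper simply asserts that the lemma ``readily follows from Definition \ref{def1},'' and your argument---pointwise convexity of $f(y,\cdot)$ transferred through the expectation by monotonicity and linearity---is exactly the one-line argument the paper has in mind. Your additional remarks on integrability and measurability are reasonable bookkeeping but not a departure from the paper's approach.
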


\section{Problem Formulation}\label{sec:model}

In the present paper, the underlying MDP is assumed to be unknown, and we are interested in the RL problem by adopting the static risk measure on the accumulated costs and aim to optimize 
\begin{equation}\label{prob:SRP}
    \inf_{\pi\in\Pi}\rho\left(\sum_{t=0}^{T-1}c(S_t,A_t,S_{t+1})\right),
\end{equation}
which is a risk-sensitive RL problem under a static risk problem (SRP) formulation. \cite{tamar2016sequential} provides a policy gradient approach for SRP when $\rho$ is a coherent risk measure with a risk envelope of a general form (see Assumption II.2 in \cite{tamar2016sequential}). Here we consider $\rho$ in another class of risk measures associated to convex scoring functions: 
\begin{equation}\label{eq:scoring}
    \rho(Y) = \inf_{\upsilon \in \mathbb{R}}h(\E[ f\left(Y,\upsilon\right)],\upsilon),
\end{equation}
where the function $f$ is a convex scoring function and $h$ is a bivariate transform function strictly increasing w.r.t. its first argument, and the second argument $\upsilon$ is the auxiliary variable. Such risk measures are not required to be coherent or even monetary, but they encompass a broad class of functionals and many cases of coherent risk measures, such as expectation $\rho(Y)=\inf_{\upsilon\in\mathbb{R}}\E [Y]$ and Expected Shortfall (ES) $\rho(Y) = \inf_{\upsilon\in\mathbb{R}} \E[\upsilon+\frac{1}{\alpha}(Y-\upsilon)^+].$ 
Many convex (but may not be coherent) risk measures are also covered, such as 
entropic risk measures
$\rho(Y) = \inf_{\upsilon\in\mathbb{R}} \frac{1}{\gamma} \log \left(\E [\exp(\gamma Y)]\right)$ with a risk-aversion parameter $\gamma > 0$ 
and expected utility $\rho(Y) = \inf_{\upsilon\in\mathbb{R}} \E [u(Y)]$ with a general convex function $u$ (see \cite{rockafellar2013fundamental}). 

Note that some risk measures in our consideration are not even monetary, such as median absolute deviation (MAD) $\rho(Y) =\inf_{\upsilon\in\mathbb{R}} \E [|Y-\upsilon|],$ variance $\rho(Y)  =\inf_{\upsilon\in\mathbb{R}} \E [|Y-\upsilon|^2]$ and asymmetric variance $\rho(Y) = \inf_{\upsilon \in \mathbb{R}} \, \mathbb{E} \left[ 
\alpha (Y - \upsilon)^2 \cdot \mathds{1}_{\left\{Y \geq \upsilon\right\}} + 
(1 - \alpha)(Y - \upsilon)^2 \cdot \mathds{1}_{\left\{Y \leq \upsilon\right\}} 
\right]$. Another important class of risk measures is the family of mean-risk utilities in Example \ref{rm_exam2}, which are constructed as the combination of an expectation and a risk penalty. 
\begin{example} \label{rm_exam2}
The mean-risk utility is defined as
\begin{align}
        \rho(Y) &= \E [Y] + \lambda\tilde{\rho}(Y),\notag
\end{align}
where $\lambda>0$ and $\tilde{\rho}$ can be chosen as Expected Shortfall, MAD, variance, and asymmetric variance.
\end{example}
The mean-variance utility in Example \ref{rm_exam2} is of great importance in portfolio optimization, asset allocation, and operations research. \cite{zhang2021mean} and \cite{huang2022achieving} provide RL methods for mean-variance objectives in discrete and continuous time, respectively. Our approach differs from both and can accommodate various risk penalty terms. 

As another important class of risk measures, the OCE (see \cite{ben2007old}) is defined as $\rho(Y)=\min_{\upsilon\in\mathbb{R}}\left\{\upsilon+\E[u(Y-\upsilon)]\right\}$ for some concave function $u(\cdot).$ By taking $f(y,\upsilon)=u(y-\upsilon)$ and $h(x,\upsilon)=\upsilon+x,$ these OCE risk measures can be represented using our formulation in Eq. \eqref{eq:scoring}. Examples beyond the OCE class include the EVaR (Example \ref{rm_exam3}) and some utility-based risk measures  (e.g., $\rho(Y)=\inf_{\upsilon\in\mathbb{R}}\E \left[{Y^2}e^{-\upsilon}+e^{\upsilon}\right]$); see Chapter 4 of \cite{FS16}. 
\begin{example}
\label{rm_exam3}
The entropic Value-at-Risk (EVaR) is defined as 
\begin{align}
        \rho(Y) = \inf_{t>0}\frac{1}{t}\log\left(\frac{1}{\alpha} \E [e^{tY}] \right)= \inf_{\upsilon\in\mathbb{R}}\frac{1}{e^\upsilon}\log \left(\frac{1}{\alpha}\E [e^{e^\upsilon Y}]\right).\notag
\end{align}
The scoring function is $f(y,\upsilon)=\frac{1}{\alpha}\exp(e^{\upsilon} y)$, which is convex if $y \in [0,\infty)$ (i.e., $Y$ is nonnegative  almost surely).
\end{example}

Note that the RL problem in \eqref{prob:SRP} is time-inconsistent, i.e., the learned optimal policy at time $t$ may no longer hold its optimality at a later time. To resolve the time-inconsistent issue, we first rewrite the dynamic optimization problem as a two-stage optimization problem
\begin{equation}
\label{problem_to_solve}
\begin{aligned}
\inf_{\pi \in \Pi} \rho \left(  \sum_{t = 0}^{T-1}c(S_t, A_t,S_{t+1}) \right)
=&\inf_{\pi \in \Pi} \inf_{\upsilon \in \R} h\left(\E \left[f\left(\sum_{t = 0}^{T-1}c(S_t, A_t,S_{t+1}),\upsilon\right)\right] ,\upsilon\right) \\
=& \inf_{\upsilon \in \R}\inf_{\pi \in \Pi} h\left(\E \left[f\left(\sum_{t = 0}^{T-1}c(S_t, A_t,S_{t+1}),\upsilon\right)\right],\upsilon\right)\\
=& \inf_{\upsilon \in \R} h\left(\inf_{\pi \in \Pi}\E \left[f\left(\sum_{t = 0}^{T-1}c(S_t, A_t,S_{t+1}),\upsilon\right)\right],\upsilon\right).
\end{aligned}
\end{equation}

We then introduce an augmented state process $\{Y_t\}_{t \in \mathcal{T}}$ for the inner dynamic optimization problem, which is defined as the negative of accumulated costs over time, i.e., 
\begin{equation}\label{eq:def_Y}
Y_t =-\sum_{\tau=0}^{t-1}\text{cost}_{\tau}, 
\end{equation}
where $\text{cost}_{\tau}$ represents the realized cost at time $\tau$.  By enlarging the state space for the decision making with the augmented state process $\left\{Y_t\right\}_{t\in\mathcal{T}}$, the inner dynamic optimization problem becomes time-consistent. Given $Y_t = y$, we can consider the policy $\pi(\cdot | t, s, y)$ induced from three states $t$, $s$ and $y$. The modified feedback policy is then defined if the following hold:
\begin{itemize}
    \item $\pi(a | t, s,y)$ is a p.d.f. on $\mathcal{A}$ w.r.t. measure $\nu_{\mathcal{A}}$ for any fixed $(t,s,y)\in \mathcal{T}\times\mathcal{S}\times\mathbb{R};$
    \item $\pi(a | t, s,y)$ is a measurable function w.r.t. $(t,s,y)$ for any fixed $a\in\mathcal{A}.$
\end{itemize}
We denote $\tilde{\Pi}$ the set of all modified feedback policies satisfying the above conditions.

We emphasize that at time $t$, the augmented state $Y_t$ does not directly affect the next state $S_{t+1}$ or the cost $C_{t+1}.$ Problem \eqref{problem_to_solve} is equivalent to 
\begin{equation}
\label{simplified_problem}
\inf_{\upsilon \in \R} h\left(\inf_{\pi \in \tilde{\Pi}}\E \left[f\left(\sum_{t = 0}^{T-1}c(S_t, A_t,S_{t+1}),\upsilon\right)\right],\upsilon\right), 
\end{equation}
which plays a key role in our RL algorithm design. At the end of this subsection, we illustrate in Figure \ref{Framework} the proposed learning procedure with the augmented state.  

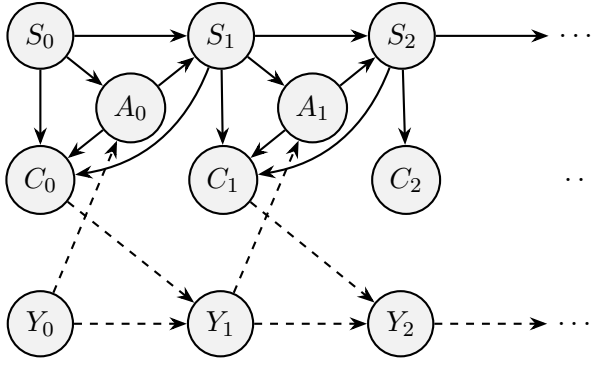
\begin{figure}[h]
\centering
\begin{tikzpicture}[
    node distance=1.0cm and 1.5cm, 
    every state/.style={thick, fill=gray!10, minimum size=0.25cm, draw},
    >=Stealth, 
    auto,
    align=center
]
    \node[state] (s0) {$S_0$};
    \node[state, right=of s0] (s1) {$S_1$};
    \node[state, right=of s1] (s2) {$S_2$};
    \node[draw=none, right=of s2] (ellipsis1) {$\cdots$};
    
    \node[state, below=of s0] (c0) {$C_0$};
    \node[state, right=of c0] (c1) {$C_1$};
    \node[state, right=of c1] (c2) {$C_2$};
    \node[draw=none, right=of c2] (ellipsis2) {$\cdots$};
    
    \node[state, below=of c0] (y0) {$Y_0$};
    \node[state, right=of y0] (y1) {$Y_1$};
    \node[state, right=of y1] (y2) {$Y_2$};
    \node[draw=none, right=of y2] (ellipsis3) {$\cdots$};
    
    \node[state] (a0) at ($(c0)!0.5!(s1)$) {$A_0$};
    \node[state] (a1) at ($(c1)!0.5!(s2)$) {$A_1$};
    
    \path[->, thick] 
        (s0) edge node[above] {}(s1)
        (s1) edge node[above] {} (s2)
        (s2) edge node[above] {} (ellipsis1)
        
        (s0) edge node[right] {} (c0)
        
        (c0) edge[->, dashed] node[right] {} (y1)
        (s1) edge[bend left=30] node[right] {} (c0)
        (c1) edge[->, dashed] node[right] {} (y2)
        (s2) edge[bend left=30] node[right] {} (c1)
        
        (s1) edge node[above] {} (c1)
        (s2) edge node[above] {} (c2)
        
        (y1) edge[->, dashed] node[above] {} (y2)
        (y2) edge[->, dashed] node[above] {} (ellipsis3)
        (y0) edge[->, dashed] node[above] {} (y1)
        
        (a0) edge[->] (c0)
        (a0) edge[->] (s1)
        
        (a1) edge[->] (c1)
        (a1) edge[->] (s2)

        (y0) edge[->, dashed] (a0)
        (y1) edge[->, dashed] (a1)
        
        (s0) edge[->] (a0)
        (s1) edge[->] (a1);

\end{tikzpicture}
\caption{Proposed Learning Procedure}
\label{Framework}
\end{figure}

\section{Methodology}\label{sec:method}

We assume that $\mathcal{S}\times\mathcal{A}$ is a compact domain and $c$ is bounded on $\mathcal{S}\times\mathcal{A}\times\mathcal{S}.$ Then the range of $y$  is restricted to the compact set $\mathcal{Y} \triangleq [\underline{y}, \bar{y}]$ with
$$
\underline{y} < T\cdot\inf_{s,s' \in \mathcal{S}, a \in \mathcal{A}} c(s, a,s')\in\mathbb{R}, \;\;\;\; \bar{y} > T\cdot\sup_{s,s' \in \mathcal{S}, a \in \mathcal{A}} c(s, a,s')\in\mathbb{R}.
$$
However, we do not assume any continuity or semi-continuity on the transition probability $p$ or the cost function $c.$ This differs significantly from the vast majority of the literature, in which the theoretical result is often based on the continuous MDP assumption (see Assumption 3.1 in \cite{bauerle2022markov} and Assumption C in \cite{bauerle2011markov}). \cite{CJ2024} and \cite{tamar2016sequential} also implicitly use the continuity of transition probabilities. The case with discontinuities in risk-sensitive reinforcement learning is underexplored. To ensure the feasibility of the approximation, we introduce an additional Assumption \ref{ass1} on the risk measure $\rho$, specifically concerning the scoring functions $f$ and $h$, with some mild requirements.

\begin{assumption}[Continuity of $f$ and $h$]
\label{ass1}
There exists a compact set $\mathcal{D}\subset\mathbb{R}$ containing $\mathcal{Y}$ such that
\begin{itemize}
\item[(1)] $f(y,\upsilon)$ is continuous on $\mathcal{D}\times\mathbb{R}$; and 
\item[(2)] $h$ is continuous on $f(\mathcal{D},\mathbb{R})\times\mathbb{R},$ where $f(\mathcal{D},\mathbb{R})\triangleq \left\{f(y,\upsilon):y\in\mathcal{D},\upsilon\in\mathbb{R}\right\}.$
\end{itemize}
\end{assumption}

\begin{remark}
If Assumption C in \cite{bauerle2011markov} is further satisfied, i.e., $p$ is continuous in $(s,a)$ and $c$ is lower semi-continuous, we can remove Assumption \ref{ass1}. Furthermore, under Assumption C in \cite{bauerle2011markov}, our approach also accommodates risk measures induced from classical scoring functions.
\end{remark}


We need to further restrict the range of $\upsilon$ to a compact set such that it contains at least one optimal solution $\upsilon^*$ of Problem \eqref{simplified_problem}. To this end, we first show in Proposition \ref{simpleProposition2} below that the range of $\upsilon$ can be restricted to a compact set. 
However, in some cases, the conditions in Proposition \ref{simpleProposition2} may not be satisfied (see Remark \ref{entropic_var_case}) and we have to assume that the minimum of the objective is attained on $[\underline{\upsilon},\bar{\upsilon}]$ for some pre-known $\underline{\upsilon},\bar{\upsilon}\in\mathbb{R}.$ This assumption is commonly used in some existing studies in RL; see \cite{baxter2001infinite}. 
Note that even when this assumption is not fulfilled, we can still find $\epsilon$-correct range for $\upsilon,$ i.e., that for any $\epsilon>0,$ there exists an interval $[\underline{\upsilon}^{\epsilon},\bar{\upsilon}^{\epsilon}]$ such that
\begin{equation}
    \min_{\upsilon \in [\underline{\upsilon}^{\epsilon},\bar{\upsilon}^{\epsilon}]} h(\E[ f\left(Y,\upsilon\right)],\upsilon)\leq \inf_{\upsilon\in\mathbb{R}}h(\E[ f\left(Y,\upsilon\right)],\upsilon) +\epsilon,\notag
\end{equation}
for any random variable $Y$ satisfying $\underline{y}\leq Y \leq \bar{Y}$ almost surely.
To address a similar issue, \cite{ni2022policy} derives a lower bound of the optimal $\upsilon$ (which agrees with Remark \ref{entropic_var_case}) and imposes an assumption on its upper bound for the entropic VaR. 

\begin{proposition}
     \label{simpleProposition2}
Assume that there exists a strictly increasing function $g$ such that
\begin{itemize}
\item[(1)] $g(h(\cdot,\cdot))$ is convex on $f(\mathcal{D})\times\mathbb{R}$; and  
\item[(2)]  for any $y\in \mathcal[\underline{y},\bar{y}],$ the minimal value of $h(f(y,\cdot),\cdot)$ can be attained at some $\upsilon^*\in\mathbb{R}$; and
\item[(3)] it holds that \begin{equation}
\label{condition3_lim}
\lim_{\upsilon\to \pm\infty}g\left(h\left(\min_{y\in[\underline{y},\bar{y}]}f(y,\upsilon),\upsilon\right)\right) = \infty,
\end{equation}
or $h(y,\cdot)$ is constant for any fixed $y\in[\underline{y},\bar{y}],$ and $f(y,\upsilon) =w(y)+\phi(\upsilon-\psi(y)),$ for some continuous functions $w,\phi,\psi.$
\end{itemize}
Then there exist some $m^{\underline{y},\bar{y}},M^{\underline{y},\bar{y}}\in\mathbb{R}$ such that $m^{\underline{y},\bar{y}}<M^{\underline{y},\bar{y}}$ and the minimum of $h(\E[ f\left(Y,\cdot\right)],\cdot)$ is attained on $
\left[m^{\underline{y},\bar{y}},M^{\underline{y},\bar{y}}\right],$  for any random variable $Y$ satisfying  $\underline{y}\leq Y\leq \bar{y}$ almost surely. 
\end{proposition}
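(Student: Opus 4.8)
The plan is to reduce the problem by strictly monotone transforms, establish convexity of the transformed objective, and then treat the two alternatives in condition (3) separately. First I would write $J_Y(\upsilon):=h(\E[f(Y,\upsilon)],\upsilon)$ and, using that $g$ is strictly increasing, pass to $\Psi_Y(\upsilon):=g(h(\E[f(Y,\upsilon)],\upsilon))=\Phi(\E[f(Y,\upsilon)],\upsilon)$ with $\Phi:=g\circ h$; since $g$ is strictly increasing, $\Psi_Y$ and $J_Y$ share the same minimizers and attainment transfers. By Lemma \ref{simpleProposition1} the map $\upsilon\mapsto\E[f(Y,\upsilon)]$ is convex, and $\Phi$ is jointly convex (condition (1)) and nondecreasing in its first argument (composition of the strictly increasing $g$ with $h$, which is strictly increasing in its first argument). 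The standard composition rule for a convex function that is nondecreasing in the argument fed by a convex inner map, with the second slot fed by the affine map $\upsilon\mapsto\upsilon$, then gives that $\Psi_Y$ is convex. Continuity of $\upsilon\mapsto\E[f(Y,\upsilon)]$ follows from Assumption \ref{ass1} and dominated convergence (using $\underline{y}\le Y\le\bar{y}$ so $f(Y,\cdot)$ is dominated by $\sup$ of $f$ over a compact set), and convex $\Phi$ is continuous on the interior of its domain, so $\Psi_Y$ is continuous.

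For the first case of condition (3) I would argue by coercivity. Pointwise in $\omega$, $\min_{y\in[\underline{y},\bar{y}]}f(y,\upsilon)\le f(Y,\upsilon)\le\max_{y\in[\underline{y},\bar{y}]}f(y,\upsilon)$ whenever $\underline{y}\le Y\le\bar{y}$; taking expectations and using monotonicity of $\Phi$ in its first argument yields the $Y$-independent lower bound $L(\upsilon):=\Phi\left(\min_{y\in[\underline{y},\bar{y}]}f(y,\upsilon),\upsilon\right)\le\Psi_Y(\upsilon)$, which tends to $\infty$ as $\upsilon\to\pm\infty$ exactly by \eqref{condition3_lim}. Fixing a reference point $\upsilon_0$, the same bound gives $\Psi_Y(\upsilon_0)\le\Phi\left(\max_{y\in[\underline{y},\bar{y}]}f(y,\upsilon_0),\upsilon_0\right)=:U_0<\infty$ uniformly in $Y$ (the $\max$ and $\min$ over the compact interval are finite and continuous by Assumption \ref{ass1}). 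Since $L$ is coercive, its sublevel set $\{L\le U_0\}$ is bounded, so I choose a compact $[m^{\underline{y},\bar{y}},M^{\underline{y},\bar{y}}]$ containing both $\upsilon_0$ and this sublevel set. For any $\upsilon$ outside this interval, $\Psi_Y(\upsilon)\ge L(\upsilon)>U_0\ge\Psi_Y(\upsilon_0)$, so no minimizer can lie outside; and $\Psi_Y$ being continuous and coercive attains its minimum, necessarily inside $[m^{\underline{y},\bar{y}},M^{\underline{y},\bar{y}}]$.

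For the second (degenerate) case, $h(y,\cdot)$ is constant and $f(y,\upsilon)=w(y)+\phi(\upsilon-\psi(y))$. Here minimizing $J_Y$ reduces, after dropping the $\upsilon$-independent term $\E[w(Y)]$ and the strictly increasing outer maps, to minimizing $M(\upsilon):=\E[\phi(\upsilon-\psi(Y))]$, where $\phi$ is convex (as $f$ is convex in $\upsilon$) and attains its minimum at some finite $c^{*}$ by condition (2). With $\psi$ continuous on the compact $[\underline{y},\bar{y}]$, set $a:=\min_{y}\psi(y)$ and $b:=\max_{y}\psi(y)$, so $\psi(Y)\in[a,b]$ almost surely. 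Working with one-sided derivatives $M'_{\pm}(\upsilon)=\E[\phi'_{\pm}(\upsilon-\psi(Y))]$ (the interchange justified by monotone difference quotients and dominated convergence), at $\upsilon=c^{*}+b$ one has $\upsilon-\psi(Y)\ge c^{*}$, hence $\phi'_{+}(\upsilon-\psi(Y))\ge\phi'_{+}(c^{*})\ge0$ and $M'_{+}(c^{*}+b)\ge0$; at $\upsilon=c^{*}+a$ one has $\upsilon-\psi(Y)\le c^{*}$, hence $\phi'_{-}(\upsilon-\psi(Y))\le\phi'_{-}(c^{*})\le0$ and $M'_{-}(c^{*}+a)\le0$. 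Convexity of $M$ then forces a minimizer $\upsilon^{*}\in[c^{*}+a,c^{*}+b]$, a compact interval independent of $Y$; I take $m^{\underline{y},\bar{y}}=c^{*}+a$ and $M^{\underline{y},\bar{y}}=c^{*}+b$, enlarging slightly if needed to ensure strict inequality.

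I expect the degenerate case to be the main obstacle. There the objective is genuinely \emph{not} coercive (for instance $\phi(t)=t^{+}$ makes $M$ constant on a whole ray), so the value-growth argument of the first case collapses and the set of minimizers may be unbounded; one must instead \emph{locate} a single minimizer inside the compact interval via the monotonicity of the subgradients, taking care with flat regions of $\phi$. The convexity reduction in the first paragraph is the key enabling step but is otherwise routine, as are the continuity and integrability verifications, which rest only on boundedness of $Y$, compactness of $[\underline{y},\bar{y}]$, and Assumption \ref{ass1}.
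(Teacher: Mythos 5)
Your proof is correct, and while it shares the paper's overall skeleton (pass to $g\circ h$, use joint convexity plus monotonicity in the first argument to get convexity of $\upsilon\mapsto g(h(\E[f(Y,\upsilon)],\upsilon))$, then split condition (3) into two cases), the execution of both cases is genuinely different. In the coercive case, the paper argues by contradiction: it supposes no uniform interval exists, extracts sequences $Y_n$ and $\upsilon_n\to\pm\infty$ on which the objective is monotone, and derives a contradiction from the sandwich bounds $g(h(\min_y f(y,\upsilon),\upsilon))\le g(h(\E[f(Y,\upsilon)],\upsilon))\le g(h(\max_y f(y,\upsilon),\upsilon))$; you use exactly the same sandwich but run it forwards, comparing the $Y$-independent coercive lower bound $L$ against the $Y$-independent upper bound $U_0$ at a reference point and taking the bounded sublevel set $\{L\le U_0\}$, which is cleaner and produces the interval constructively. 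The divergence is larger in the translation case: the paper defines the extremal minimizers $\bar{\upsilon}^*$ and $\underline{\upsilon}^*$ over $y$, proves their finiteness via the identity $\upsilon_n-\psi(y_n)=\mathrm{const}$, and then finishes with a four-way case analysis on whether each is finite; you instead fix one finite minimizer $c^*$ of $\phi$ (which condition (2) supplies, since $\tilde h$ is strictly increasing), pass the one-sided derivatives through the expectation, and read off $M'_-(c^*+\min\psi)\le 0\le M'_+(c^*+\max\psi)$ from monotonicity of the convex subgradient, localizing a global minimizer in the explicit $Y$-independent interval $[c^*+\min_y\psi(y),\,c^*+\max_y\psi(y)]$. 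Your route buys an explicit interval and avoids the paper's case analysis, at the cost of having to justify the derivative--expectation interchange (which you do, via monotone difference quotients and boundedness on compacts); it also handles flat regions of $\phi$ (unbounded argmin sets, e.g.\ $\phi(t)=t^+$) without any extra branching, since only one finite minimizer $c^*$ is needed.
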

\begin{remark}
If $f$ and $h$ are constants of $\upsilon,$ i.e., there admit univariate functions $\tilde{h}$ and $\tilde{f}$ such that $\rho(Y) = \tilde{h}(\E[\tilde{f}(Y)]),$ then we take $g = \tilde{h}^{-1}$ and the conditions (1) and (2) are satisfied (since $h$ is strictly increasing w.r.t. its first argument, $\tilde{h}$ must be strictly increasing). We further take $w = \tilde{f},\phi=\psi=0$ and the condition (3) is satisfied. This is true for the expectation and the entropic risk measure.

If $h(x,\upsilon) = \tilde{h}(x),$ i.e. $\rho(Y)=\tilde{h}(\min_{\upsilon\in\mathbb{R}}\E [f(Y,\upsilon)]),$ then we take $g(x) = \tilde{h}^{-1}(x)$ and Assumption \ref{ass1} is sufficient for condition (1). In this case, condition (2) is equivalent to the statement that $f(y,\upsilon)$ has a minimum for any fixed $y\in[\underline{y},\bar{y}],$ which is true for Expected Shortfall, variance, MAD, and mean-risk utility. \eqref{condition3_lim} is true for Expected Shortfall, MAD, variance, asymmetric variance, and mean-variance utility. 
\end{remark}
\begin{remark}
\label{entropic_var_case}
Consider the entropic VaR mentioned in Example \ref{rm_exam3}. With $f(y,\upsilon)=\frac{1}{\alpha}\exp(e^{\upsilon}y)$, $h(x,\upsilon)=e^{-\upsilon}\log(x)$ and $g(x)=\exp(x),$ condition (1) holds for the entropic VaR if $[\underline{y},\bar{y}]\subset [0,\infty)$. However, conditions (2) and (3) are not satisfied because $g(h(f(y,\upsilon),\upsilon))$ is strictly decreasing for $\upsilon\in\mathbb{R}$ for fixed $y.$ In this case, for any $\epsilon>0,$ we can find $[\underline{\upsilon}^{\epsilon},\bar{\upsilon}^{\epsilon}]$ such that
\begin{equation}
    \min_{\upsilon \in [\underline{\upsilon}^{\epsilon},\bar{\upsilon}^{\epsilon}]} h(\E[ f\left(Y,\upsilon\right)],\upsilon)\leq \inf_{\upsilon\in\mathbb{R}}h(\E[ f\left(Y,\upsilon\right)],\upsilon) +\epsilon,\notag
\end{equation}
for any $Y$ satisfying $\underline{y}\leq Y \leq \bar{Y}$ almost surely.
\end{remark}

For a fixed $\upsilon\in \Upsilon,$ denote by $\mathbb{E}_{\pi} $ the expectation under policy $\pi$ and by $\mathbb{P}_{\pi} $ the probability measure under policy $\pi.$ The value function under policy $\pi$ is defined by 
\begin{equation}
\label{hh}
    V^{\pi}_{t,\upsilon}(s,y) = \mathbb{E}_{\pi} \left.\left[f\left(\sum_{k=t}^{T-1}c(S_k,A_k,S_{k+1})-Y_t,\upsilon\right)\right|S_t=s,Y_t=y\right],\  \text{for} \ t=T-1,\ldots,0
\end{equation}
and 
\begin{equation}
    V^{\pi}_{T,\upsilon}(s,y) = f(-y,\upsilon).\notag
    \end{equation}
Then we have the recursive formulation of the value function
\begin{align}
    V^{\pi}_{t,\upsilon}(s,y) &= \mathbb{E}_{\pi} \left.\left[f\left(\sum_{k=t+1}^{T-1}c(S_k,A_k,S_{k+1})-\left(y-c\left(S_t,A_t,S_{t+1}\right)\right),\upsilon\right)\right|S_t = s,Y_t = y\right]\notag\\
    & = \mathbb{E}_{\pi} \left.\left[V^{\pi}_{t+1,\upsilon}\left(S_{t+1},y-c\left(s,A_t,S_{t+1}\right)\right)\right|S_t=s,Y_t=y\right]\notag\\
   & = \mathbb{E}_{\pi} \left.\left[V^{\pi}_{t+1,\upsilon}\left(S_{t+1},Y_{t+1}\right)\right|S_t=s,Y_t=y\right].\label{resur} 
\end{align}
The idea behind this formula is similar to the approach proposed by \cite{bauerle2011markov} to reduce the MDP under ES. However, their approach relies on the specific form of $(y-\upsilon)^+$ in the scoring function for ES, which facilitates the simple fusion of $Y$ and $\upsilon$ into one state process. For the convex scoring function, their approach no longer holds. However, our proposed method separates the variable $\upsilon$ and the augmented state process $Y$ such that we can devise the RL algorithm to accommodate more types of risk measures.

For a function $V:\mathcal{S}\times\mathcal{Y}\to\R,$ define the Bellman recursive operators 
\begin{align}
    \mathcal{J}_{t,\pi} V(s,y) &=\int_{\mathcal{A}} \int_{\mathcal{S}} \pi(a|t,s,y) p(s'|s,a)V\left(s',y-c\left(s,a,s'\right)\right) \nu_{\mathcal{S}}(\d s') \nu_{\mathcal{A}}(\d a), \notag\\
    \mathcal{J}_{t} V(s,y) &=  \inf_{a\in\mathcal{A}}\int_{\mathcal{S}}  p(s'|s,a)V\left(s',y-c\left(s,a,s'\right)\right) \nu_{\mathcal{S}}(\d s') .\notag
\end{align}

\begin{theorem}[Bellman Equation]
\label{t_Bellman}
For any fixed  $\upsilon \in \Upsilon,$ there exists a family of value functions $\left\{V_{t,\upsilon}^*(\cdot,\cdot)\right\}_{t=0,1\ldots,T}$ satisfying the following recursive system
\begin{equation}\label{BellmanEq}
\begin{aligned}
V_{T,\upsilon}^*(s,y) &= f(-y,\upsilon),\\
     V_{t,\upsilon}^*(s,y) &= \mathcal{J}_t V^*_{t+1,\upsilon}(s,y),\quad t=T-1,\ldots,0,
\end{aligned}
\end{equation}
and the inequality
\begin{equation}
    V_{t,\upsilon}^*(s,y) \leq  V^{\pi}_{t,\upsilon}(s,y), \notag
\end{equation}
for any $t=0,1,\ldots,T,$ $s \in \mathcal{S},$  $y \in \mathcal{Y},$ and policy $\pi\in\tilde{\Pi}.$ Furthermore, there exists a policy sequence $\left\{\pi_{\upsilon,n}\right\}^{\infty}_{n=1} \subset \tilde {\Pi}$ such that 
\begin{equation}
  \left[ V_{t,\upsilon}^{\pi_{\upsilon,n}}(s,y) -V_{t,\upsilon}^*(s,y)\right]\to 0 \ (n\to\infty),\notag
\end{equation}
for any $t=0,1\ldots,T,s\in\mathcal{S},y\in\mathcal{Y}.$
\end{theorem}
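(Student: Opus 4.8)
The plan is to take the backward system \eqref{BellmanEq} as the \emph{definition} of the family $\{V^*_{t,\upsilon}\}_{t}$, and then verify the three assertions — well-posedness, optimality, and approximability — by downward induction on $t$. First I would confirm that the recursion stays inside a tractable function class. The terminal map $V^*_{T,\upsilon}(s,y)=f(-y,\upsilon)$ is bounded and measurable (indeed continuous in $y$), since the argument of $f$ ranges over a compact subset of $\mathcal{D}$ and $f$ is continuous there by Assumption \ref{ass1}. Assuming $V^*_{t+1,\upsilon}$ bounded and measurable, the stage cost-to-go
\[
Q^*_{t,\upsilon}(s,y,a):=\int_{\mathcal{S}}p(s'\mid s,a)\,V^*_{t+1,\upsilon}\bigl(s',y-c(s,a,s')\bigr)\,\nu_{\mathcal{S}}(\d s')
\]
is bounded and jointly measurable in $(s,y,a)$, because $p$ is a measurable kernel and $c$ is measurable — no continuity of $p$ or $c$ is invoked. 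Then $V^*_{t,\upsilon}(s,y)=\inf_{a\in\mathcal{A}}Q^*_{t,\upsilon}(s,y,a)=\mathcal{J}_t V^*_{t+1,\upsilon}(s,y)$ is bounded, and its measurability is the one delicate point (deferred to the last paragraph).

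For the optimality inequality I would argue downward from $t=T$, where both sides equal $f(-y,\upsilon)$. Given $V^*_{t+1,\upsilon}\le V^\pi_{t+1,\upsilon}$ pointwise, the recursion \eqref{resur} and positivity of the averaging operator $\mathcal{J}_{t,\pi}$ give
\[
V^\pi_{t,\upsilon}(s,y)=\mathcal{J}_{t,\pi}V^\pi_{t+1,\upsilon}(s,y)\ge\int_{\mathcal{A}}\pi(a\mid t,s,y)\,Q^*_{t,\upsilon}(s,y,a)\,\nu_{\mathcal{A}}(\d a)\ge V^*_{t,\upsilon}(s,y),
\]
the last step because $Q^*_{t,\upsilon}(s,y,a)\ge\inf_{a'}Q^*_{t,\upsilon}(s,y,a')=V^*_{t,\upsilon}(s,y)$ for every $a$ while $\pi(\cdot\mid t,s,y)$ has unit mass. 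This closes the induction and yields $V^*_{t,\upsilon}\le V^\pi_{t,\upsilon}$ for all $\pi\in\tilde{\Pi}$.

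I would then reduce the approximation claim to producing, for each $\epsilon>0$, a single $\pi^\epsilon\in\tilde{\Pi}$ that is $\epsilon$-greedy at every stage, i.e.\ $\mathcal{J}_{t,\pi^\epsilon}V^*_{t+1,\upsilon}\le V^*_{t,\upsilon}+\epsilon$ for all $t,s,y$. Writing $e_t:=V^{\pi^\epsilon}_{t,\upsilon}-V^*_{t,\upsilon}\ge0$ and decomposing
\[
e_t(s,y)=\mathcal{J}_{t,\pi^\epsilon}e_{t+1}(s,y)+\bigl[\mathcal{J}_{t,\pi^\epsilon}V^*_{t+1,\upsilon}(s,y)-V^*_{t,\upsilon}(s,y)\bigr],
\]
the first term is at most $\|e_{t+1}\|_\infty$ (positivity and unit mass of $\mathcal{J}_{t,\pi^\epsilon}$) and the second at most $\epsilon$; since $e_T\equiv0$, downward induction gives $\|e_t\|_\infty\le(T-t)\epsilon$. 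Taking $\epsilon=1/n$ and $\pi_{\upsilon,n}:=\pi^{1/n}$ then forces $0\le V^{\pi_{\upsilon,n}}_{t,\upsilon}(s,y)-V^*_{t,\upsilon}(s,y)\le(T-t)/n\to0$, in fact uniformly in $(s,y)$.

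The main obstacle is thus the existence of the $\epsilon$-greedy selector together with the measurability of $V^*_{t,\upsilon}$, both \emph{without} continuity of the MDP. I would settle both through the Jankov–von Neumann measurable selection theorem (equivalently the semianalytic-value-function machinery of Bertsekas–Shreve): joint measurability of $Q^*_{t,\upsilon}$ makes every set $\{(s,y):V^*_{t,\upsilon}(s,y)<c\}$ a projection of a Borel set, hence analytic and universally measurable, legitimizing the backward recursion; and the same theorem delivers, for each $\epsilon$, a universally measurable map $\phi^t_\epsilon$ with $Q^*_{t,\upsilon}(s,y,\phi^t_\epsilon(s,y))\le V^*_{t,\upsilon}(s,y)+\epsilon$. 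When $\mathcal{A}$ is discrete, this deterministic Markov map is itself admissible and the argument is complete. The truly delicate case is continuous $\mathcal{A}$, where $\tilde{\Pi}$ consists of densities: here I would realize $\phi^t_\epsilon$ by spreading, measurably in $(s,y)$, a density over the nonempty sublevel set $\{a:Q^*_{t,\upsilon}(s,y,a)\le V^*_{t,\upsilon}(s,y)+\epsilon\}$, and the point demanding the most care is to ensure this set carries positive $\nu_{\mathcal{A}}$-mass, so that the averaged cost still meets the $\epsilon$-target — equivalently, that the pointwise infimum over actions agrees with what admissible randomizations attain in the mean. Establishing this coincidence in the absence of lower semicontinuity of $p$ and $c$ is where I expect the real difficulty to sit; the standard fallback, admitting deterministic Markov selectors as admissible, makes the construction go through verbatim.
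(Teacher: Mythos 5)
Your definition of $V^*_{t,\upsilon}$ via the backward recursion and your induction for the inequality $V^*_{t,\upsilon}\le V^{\pi}_{t,\upsilon}$ coincide with the paper's argument. Where you genuinely diverge is the approximation claim. The paper fixes $(s,y)$, picks at each stage actions $a_n=a_n(s,y)$ whose one-step integrals approach the infimum, lets $\pi_{\upsilon,n}(\cdot\,|\,t,s,y)$ be the point mass at $a_n$, and concludes pointwise convergence ``by the dominated convergence theorem''; you instead construct one $\epsilon$-greedy policy per accuracy level and propagate the error backward to get the uniform bound $\sup_{s,y}\bigl|V^{\pi^\epsilon}_{t,\upsilon}(s,y)-V^*_{t,\upsilon}(s,y)\bigr|\le (T-t)\epsilon$. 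This buys two things. First, your conclusion is stronger (uniform convergence with an explicit rate) and, more importantly, your error-propagation inequality is exactly what the paper's limit passage tacitly needs: at stage $t$ the integrand $\bigl[V^{\pi_{\upsilon,n}}_{t+1,\upsilon}-V^*_{t+1,\upsilon}\bigr]\bigl(s',y-c(s,a_n,s')\bigr)$ is evaluated at points that move with $n$, so pointwise convergence of the stage-$(t+1)$ error does not suffice; uniform control does, and your $(T-t)\epsilon$ bound supplies it. Second, you confront the measurability of $V^*_{t,\upsilon}$ and of the near-optimal selector $(s,y)\mapsto a$ without continuity of $p$ and $c$, which the paper never verifies; invoking Jankov--von Neumann/Bertsekas--Shreve selection is the standard fix, with the caveat that it yields universally measurable selectors while $\tilde\Pi$ as defined asks for measurability in $(t,s,y)$, a mismatch worth a sentence if you write this up. Finally, the difficulty you flag for continuous $\mathcal{A}$ is real and is not resolved by the paper either: when $\nu_{\mathcal{A}}$ is Lebesgue measure, a point mass is not a p.d.f., so the paper's $\pi_{\upsilon,n}$ are strictly speaking not in $\tilde\Pi$; and if $Q^*_{t,\upsilon}(s,y,\cdot)$ dips below its $\nu_{\mathcal{A}}$-essential infimum only on a $\nu_{\mathcal{A}}$-null set (possible absent any regularity in $a$), no density can approach $\inf_{a}Q^*_{t,\upsilon}(s,y,a)$, so the approximation claim genuinely requires either admitting degenerate policies or positive mass of near-optimal action sets. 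You correctly locate this as the crux; the paper passes over it by treating point masses as admissible, so your proposal is, at this point, the more honest of the two rather than the more deficient.
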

\begin{remark}
\label{remark_optimalpolicy}
If Assumption C in \cite{bauerle2011markov} is satisfied, then there exists an optimal policy $\pi^*_{\upsilon}\in\tilde{\Pi}$ such that 
\begin{equation}
   V_{t,\upsilon}^{\pi^*_{\upsilon}}(s,y) =V_{t,\upsilon}^*(s,y)\notag
\end{equation}
for any $t\in\mathcal{T},s\in\mathcal{S},y\in\mathcal{Y}.$ Such a conclusion is widely discussed in the MDP-related literature (see 
\cite{sutton2018reinforcement} and \cite{bauerle2022markov}).
\end{remark}

 Combining Theorem \ref{t_Bellman}, the dominated convergence theorem and the continuity of $h,$ the next proposition holds.
\begin{proposition}
\label{prop_appro}
For any $\upsilon\in\Upsilon$ and $\epsilon>0,$ there exists a policy $\pi_{\upsilon,\epsilon} \in \tilde {\Pi}$ such that \begin{equation}
        \mathbb E [V^{\pi_{\upsilon,\epsilon}}_{0,\upsilon}(S_0,0)]-\mathbb E [V^{*}_{0,\upsilon}(S_0,0)]\leq \epsilon,\notag
    \end{equation}
    and \begin{equation}
        h(\mathbb E [V^{\pi_{\upsilon,\epsilon}}_{0,\upsilon}(S_0,0)],\upsilon)-h(\mathbb E [V^{*}_{0,\upsilon}(S_0,0)],\upsilon)\leq \epsilon.\notag
    \end{equation}
\end{proposition}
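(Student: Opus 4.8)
The plan is to read off the Proposition directly from the approximation property in Theorem \ref{t_Bellman}, transferring the pointwise convergence of the value functions first to their expectations and then to their images under $h$. Fix $\upsilon\in\Upsilon$. Theorem \ref{t_Bellman} supplies a sequence $\{\pi_{\upsilon,n}\}_{n=1}^{\infty}\subset\tilde{\Pi}$ with $V^{\pi_{\upsilon,n}}_{0,\upsilon}(s,0)-V^*_{0,\upsilon}(s,0)\to 0$ for every $s\in\mathcal{S}$, and the same theorem gives $V^*_{0,\upsilon}\le V^{\pi}_{0,\upsilon}$ pointwise for every $\pi$, so this difference is nonnegative. Taking expectations over the initial state $S_0$ and passing the limit inside the expectation by dominated convergence will yield $\mathbb{E}[V^{\pi_{\upsilon,n}}_{0,\upsilon}(S_0,0)]\to\mathbb{E}[V^*_{0,\upsilon}(S_0,0)]$; picking $n$ large and setting $\pi_{\upsilon,\epsilon}:=\pi_{\upsilon,n}$ then gives the first inequality.

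The dominating bound required for this interchange comes from the standing boundedness hypotheses. Since $\mathcal{S}\times\mathcal{A}$ is compact and $c$ is bounded, every realized total cost $\sum_{t=0}^{T-1}c(S_t,A_t,S_{t+1})$ lies in a fixed compact subinterval $[a,b]\subset(\underline{y},\bar{y})\subseteq\mathcal{Y}\subseteq\mathcal{D}$, independently of the policy. For the fixed $\upsilon$, Assumption \ref{ass1}(1) makes $y\mapsto f(y,\upsilon)$ continuous, hence bounded, on the compact set $\mathcal{D}$; write $M_{\upsilon}:=\sup_{y\in\mathcal{D}}|f(y,\upsilon)|<\infty$. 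As each $V^{\pi}_{t,\upsilon}(s,y)$ is a conditional expectation of $f(\cdot,\upsilon)$ evaluated at total costs lying in $\mathcal{D}$, we get $|V^{\pi_{\upsilon,n}}_{0,\upsilon}(S_0,0)|\le M_{\upsilon}$ and $|V^*_{0,\upsilon}(S_0,0)|\le M_{\upsilon}$ uniformly in $n$ and $s$. The constant $2M_{\upsilon}$ is integrable over the probability space, which licenses dominated convergence despite the convergence in Theorem \ref{t_Bellman} being only pointwise in $s$.

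For the second inequality I will combine the convergence $\mathbb{E}[V^{\pi_{\upsilon,n}}_{0,\upsilon}(S_0,0)]\to\mathbb{E}[V^*_{0,\upsilon}(S_0,0)]$ with the continuity of $h(\cdot,\upsilon)$, the key point being to verify that these expectations sit in the region where $h$ is continuous. Writing $W$ for the random total cost, which takes values in $[a,b]\subseteq\mathcal{D}$, continuity of $f(\cdot,\upsilon)$ makes $f([a,b],\upsilon)$ a compact interval, and $\mathbb{E}[f(W,\upsilon)]$, being an average of values from that interval, belongs to it and hence to $f(\mathcal{D},\mathbb{R})$. By Assumption \ref{ass1}(2), $h$ is continuous on $f(\mathcal{D},\mathbb{R})\times\mathbb{R}$, so $h(\cdot,\upsilon)$ is continuous at $\mathbb{E}[V^*_{0,\upsilon}(S_0,0)]$ and therefore $h(\mathbb{E}[V^{\pi_{\upsilon,n}}_{0,\upsilon}(S_0,0)],\upsilon)\to h(\mathbb{E}[V^*_{0,\upsilon}(S_0,0)],\upsilon)$; the monotonicity of $h$ in its first argument keeps the difference nonnegative, so a large $n$ delivers the second inequality.

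The main obstacle is precisely the passage from pointwise to integrated convergence: the crux of the argument is establishing the uniform (in $n$ and $s$) bound $M_{\upsilon}$ that justifies dominated convergence, together with the auxiliary check that the limiting expectation lies inside $f(\mathcal{D},\mathbb{R})$ so that the continuity of $h$ can be invoked. Both rely on the compactness of $\mathcal{S}\times\mathcal{A}$, the boundedness of $c$, and Assumption \ref{ass1}, rather than on any continuity of the transition kernel $p$.
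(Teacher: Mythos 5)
Your proposal is correct and takes essentially the same route as the paper, which proves this proposition in a single line by ``combining Theorem \ref{t_Bellman}, the dominated convergence theorem and the continuity of $h$'' --- precisely the structure you follow. The details you supply (the uniform bound $M_{\upsilon}$ from boundedness of $c$ and continuity of $f$ on the compact set $\mathcal{D}$, which licenses dominated convergence, and the check that the relevant expectations lie in $f(\mathcal{D},\mathbb{R})$ so that the continuity of $h$ from Assumption \ref{ass1} applies) are exactly the steps the paper leaves implicit.
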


Note that \eqref{simplified_problem} is equivalent to
\begin{equation}
    \min_{\upsilon\in\Upsilon}h(\mathbb E [V^*_{0,\upsilon}(S_0,0)],\upsilon).\notag
\end{equation}
Therefore, studying the properties of $\mathbb E [V^*_{0,\upsilon}(S_0,0)]$ is of fundamental importance. Proposition \ref{propmin} establishes the continuity of $\upsilon\mapsto \mathbb E [V^*_{0,\upsilon}(S_0,0)]$, a key result in showing the approximation of our RL algorithm in later subsections. Notably, our analysis does not require a continuity assumption on the MDP, rendering the proof of Proposition \ref{propmin} technically nonstandard. 

\begin{proposition}\label{propmin}
     We have that $\upsilon\mapsto\mathbb E [V^*_{0,\upsilon}(S_0,0)]$ is continuous on $\Upsilon$ and   
    $\upsilon\mapsto h(\mathbb E[ V^*_{0,\upsilon}(S_0,0) ],\upsilon)$ has a minimizer (not necessarily unique) on $\mathbb{R}.$ Furthermore, \begin{equation}
       \min_{\upsilon\in\Upsilon} h(\mathbb E [V^*_{0,\upsilon}(S_0,0) ],\upsilon)= \min_{\upsilon\in\mathbb{R}}  h(\mathbb E[V^*_{0,\upsilon}(S_0,0)],\upsilon).\notag
    \end{equation}
Hence, we can define the set of minimizers as 
\begin{equation}
    \mathscr{U}^* = \arg\min_{\upsilon\in \Upsilon} h(\mathbb E [V^*_{0,\upsilon}(S_0,0)] ,\upsilon).\notag
\end{equation}
\end{proposition}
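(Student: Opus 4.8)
The plan is to establish the two assertions separately: first the continuity of $\upsilon\mapsto\E[V^*_{0,\upsilon}(S_0,0)]$ on $\Upsilon$, and then, using this continuity together with Proposition~\ref{simpleProposition2}, the existence of a minimizer of $\upsilon\mapsto h(\E[V^*_{0,\upsilon}(S_0,0)],\upsilon)$ on $\R$ and the identity $\min_{\Upsilon}=\min_{\R}$. Throughout I would take $\mathcal D$ (from Assumption~\ref{ass1}) to be a compact \emph{interval} large enough to contain every value $\sum_{k=t}^{T-1}c(S_k,A_k,S_{k+1})-y$ that the first argument of $f$ can assume for $t\in\mathcal T$, $y\in\mathcal Y$; since $\mathcal S\times\mathcal A$ is compact and $c$ is bounded, such a $\mathcal D$ exists. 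Then $f$ is \emph{uniformly} continuous on the compact set $\mathcal D\times\Upsilon$, and I write $\omega_f(\delta)=\sup\{|f(y,\upsilon)-f(y,\upsilon')|:y\in\mathcal D,\ \upsilon,\upsilon'\in\Upsilon,\ |\upsilon-\upsilon'|\le\delta\}$ for its modulus of continuity, so that $\omega_f(\delta)\to0$ as $\delta\to0$.

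The key step, which I expect to be the main obstacle precisely because no continuity of $p$ or $c$ is assumed, is to show that the modulus $\omega_f$ is \emph{propagated unchanged} by the Bellman recursion~\eqref{BellmanEq}. I would prove by backward induction on $t$ that
\begin{equation}
|V^*_{t,\upsilon}(s,y)-V^*_{t,\upsilon'}(s,y)|\le\omega_f(|\upsilon-\upsilon'|)\qquad\text{for all }s\in\mathcal S,\ y\in\mathcal Y,\ \upsilon,\upsilon'\in\Upsilon.\notag
\end{equation}
The base case $t=T$ is immediate from $V^*_{T,\upsilon}(s,y)=f(-y,\upsilon)$ and the definition of $\omega_f$. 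For the inductive step, set $g_a(\upsilon)=\int_{\mathcal S}p(s'|s,a)V^*_{t+1,\upsilon}(s',y-c(s,a,s'))\,\nu_{\mathcal S}(\d s')$; the induction hypothesis and monotonicity of the integral give $|g_a(\upsilon)-g_a(\upsilon')|\le\omega_f(|\upsilon-\upsilon'|)$ for every $a$, and since $V^*_{t,\upsilon}(s,y)=\inf_a g_a(\upsilon)$, the elementary bound $|\inf_a g_a(\upsilon)-\inf_a g_a(\upsilon')|\le\sup_a|g_a(\upsilon)-g_a(\upsilon')|$ closes the induction. The crucial observation is that both operations inside $\mathcal J_t$ — averaging against $p(\cdot|s,a)$ and the infimum over $a$ — are non-expansive under uniform perturbations of $\upsilon$, so no regularity of the MDP is required. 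Taking expectation over $S_0$ at $y=0$ gives $|\E[V^*_{0,\upsilon}(S_0,0)]-\E[V^*_{0,\upsilon'}(S_0,0)]|\le\omega_f(|\upsilon-\upsilon'|)$, i.e.\ (uniform) continuity on $\Upsilon$.

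Next I would upgrade this to continuity of the full objective. By the intermediate value theorem for the continuous map $f(\cdot,\upsilon)$ on the interval $\mathcal D$, the range $\{f(y,\upsilon):y\in\mathcal D\}$ is the interval $[\min_{y\in\mathcal D}f(y,\upsilon),\max_{y\in\mathcal D}f(y,\upsilon)]$; since every $V^*_{t,\upsilon}(s,y)$ — and hence the average $\E[V^*_{0,\upsilon}(S_0,0)]$ — is an iterated average/infimum of numbers lying in this interval, it belongs to $f(\mathcal D,\R)$. Thus $(\E[V^*_{0,\upsilon}(S_0,0)],\upsilon)$ stays in the region $f(\mathcal D,\R)\times\R$ on which $h$ is continuous (Assumption~\ref{ass1}), and composing with the continuity proved above yields continuity of $\upsilon\mapsto h(\E[V^*_{0,\upsilon}(S_0,0)],\upsilon)$. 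A continuous function attains its minimum on the compact set $\Upsilon$, giving a minimizer there.

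It then remains to pass from $\min_{\Upsilon}$ to $\min_{\R}$. Using Theorem~\ref{t_Bellman} and Proposition~\ref{prop_appro} (dominated convergence along the approximating sequence $\{\pi_{\upsilon,n}\}$), I would record the identity $\E[V^*_{0,\upsilon}(S_0,0)]=\inf_{\pi\in\tilde\Pi}\E[f(Y^\pi,\upsilon)]$, where $Y^\pi=\sum_{t=0}^{T-1}c(S_t,A_t,S_{t+1})$ is the total cost generated under $\pi$ and satisfies $\underline y\le Y^\pi\le\bar y$ almost surely. Since $h(\cdot,\upsilon)$ is continuous and strictly increasing, $h(\inf_\pi\,\cdot\,,\upsilon)=\inf_\pi h(\cdot,\upsilon)$, and so
\begin{align}
\inf_{\upsilon\in\R}h(\E[V^*_{0,\upsilon}(S_0,0)],\upsilon)
&=\inf_{\pi}\inf_{\upsilon\in\R}h(\E[f(Y^\pi,\upsilon)],\upsilon)\notag\\
&=\inf_{\pi}\min_{\upsilon\in\Upsilon}h(\E[f(Y^\pi,\upsilon)],\upsilon)
=\min_{\upsilon\in\Upsilon}h(\E[V^*_{0,\upsilon}(S_0,0)],\upsilon),\notag
\end{align}
where the middle equality is Proposition~\ref{simpleProposition2} applied to each $Y^\pi$ — its interval $[m^{\underline y,\bar y},M^{\underline y,\bar y}]=\Upsilon$ being the \emph{same} for all admissible $Y^\pi$ — and the last step swaps the two infima and pulls $h$ back through $\inf_\pi$. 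This gives the required equality and, with the attainment on $\Upsilon$, shows the minimizer found in $\Upsilon$ is global on $\R$, so $\mathscr U^*$ is well defined. The one genuinely delicate point beyond bookkeeping is the uniform-in-$Y^\pi$ nature of Proposition~\ref{simpleProposition2}, which is exactly what confines the inner $\upsilon$-infimum to a single compact set \emph{before} taking $\inf_\pi$; under the fallback assumption that the minimum is attained on a pre-known $[\underline\upsilon,\bar\upsilon]$, the same conclusion follows at once by taking $\Upsilon=[\underline\upsilon,\bar\upsilon]$.
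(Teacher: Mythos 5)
Your proposal is correct in substance, but it takes a genuinely different route from the paper's proof on the central claim, the continuity of $\upsilon\mapsto\E[V^*_{0,\upsilon}(S_0,0)]$. The paper argues by contradiction: it first observes that this map is upper semicontinuous (an infimum over policies of functions continuous in $\upsilon$), and then excludes a downward jump by means of a separate lemma (Lemma \ref{lpxs_lemma}), which uses the \emph{convexity} of $f(y,\cdot)$ together with compactness of $\mathcal{Y}\times\Upsilon$ to produce a Lipschitz constant $\Delta$ valid uniformly over all admissible cost distributions; near-optimal policies at $\upsilon_n$ then force difference quotients to blow up, a contradiction. You instead prove continuity directly, by propagating the modulus of continuity $\omega_f$ through the Bellman recursion using only that integration against $p(\cdot|s,a)$ and the infimum over $a$ are non-expansive. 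This requires no convexity of $f$ in $\upsilon$ (joint continuity on a compact rectangle suffices, so this step would carry over to merely continuous, non-convex scoring functions), and it yields uniform equicontinuity of $V^*_{t,\upsilon}(s,y)$ at every $(s,y)$, not just of the expectation at $y=0$; conversely, the paper's lemma buys the quantitatively stronger Lipschitz modulus when convexity holds, and that constant $\Delta$ is what drives its contradiction. For the remaining claims (attainment on $\Upsilon$ and $\min_{\Upsilon}=\min_{\R}$), your argument and the paper's rest on the same ingredients --- Proposition \ref{prop_appro}, the uniform-in-$Y$ interval of Proposition \ref{simpleProposition2}, and monotonicity of $h$ --- just packaged differently: you swap $\inf_\pi$ with $\inf_\upsilon$ and pull $h$ through the policy infimum, while the paper runs a contradiction from a hypothetical better $\upsilon_0\notin\Upsilon$. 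Your explicit verification that $\E[V^*_{0,\upsilon}(S_0,0)]$ remains in $f(\mathcal{D},\R)$, so that the continuity of $h$ assumed in Assumption \ref{ass1} actually applies to the composition, is a point the paper glosses over.

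One small repair: your induction hypothesis is stated for $y\in\mathcal{Y}$, but the inductive step evaluates $V^*_{t+1,\upsilon'}\left(s',y-c(s,a,s')\right)$, and $y-c(s,a,s')$ can leave $\mathcal{Y}$. State the hypothesis instead for $y$ ranging over a compact interval stable under subtracting up to $T$ realized costs (your enlarged $\mathcal{D}$ is already built to accommodate exactly these values); with that adjustment the induction closes as written.
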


\subsection{Part 1: Gradient Calculation and Neural Network (NN) Training}
We adopt the policy gradient Actor-Critic RL algorithm for the inner optimization problem; see \cite{konda2003onactor}. That is, for the inner dynamic optimization problem, 
we alternatively learn two functions, the value function $V$ and the policy $\pi.$ In our work, both the value function and the policy are characterized by fully connected and multilayered feed-forward networks, for which the parameters are denoted by $\theta\in\Theta$ and $\phi\in\Phi$, respectively. The inputs to both networks include $(t, \upsilon, s, y).$ The network's outputs are continuous with respect to  $(t, \upsilon, s, y)$; 
see also some existing studies using the policy gradient approach on SRP in \cite{tamar2016sequential},  \cite{zhang2021mean}, \cite{ni2022policy} and \cite{CJ2024}. In contrast to these studies, our algorithm does not require the simulation-upon-simulation approach or the sample-average-approximation approach, which greatly reduces the computational complexity in simulating trajectories.

We denote the outputs of the two networks by $\pi^{\theta}(\cdot|t,\upsilon,s,y)$ and $V^{\phi}_{t,\upsilon}(s,y;\theta),$ respectively. We define
\begin{equation}
\label{hh}
    V_{t,\upsilon}(s,y;\theta) = \mathbb{E}_{\pi^{\theta}_{\upsilon}} \left.\left[f\left(\sum_{k=t}^{T-1}c(S^{\theta}_k,A^{\theta}_k,S^{\theta}_{k+1})-Y_t^{\theta},\upsilon\right)\right|S_t^{\theta}=s,Y_t^{\theta}=y\right], t=T-1,\ldots,0,
\end{equation}
where $\pi^{\theta}_{\upsilon}\in \tilde{\Pi}$ is defined by
\begin{equation}
    \pi^{\theta}_{\upsilon}(a|t,s,y) = \pi^{\theta}(a|t,\upsilon,s,y).\notag
\end{equation}
We have the following continuity result.
\begin{lemma}
For any fixed $s\in\mathcal{S}$ and $t\in\mathcal{T}$,   
$V_{t,\upsilon}(s,y;\theta)$ is continuous in $\upsilon\in\Upsilon$ and $y\in\mathcal{Y}$. 
\end{lemma}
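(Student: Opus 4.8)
The plan is to prove \emph{joint} continuity of $(\upsilon,y)\mapsto V_{t,\upsilon}(s,y;\theta)$ by backward induction on $t$, working from the one-step recursion rather than from the full trajectory expectation. At the terminal level $V_{T,\upsilon}(s,y;\theta)=f(-y,\upsilon)$, so joint continuity is immediate from the continuity of $f$ granted by Assumption \ref{ass1}. Fixing $\upsilon$ freezes the policy $\pi^{\theta}_{\upsilon}$, so the same tower-property computation that produced \eqref{resur} gives, for $t=T-1,\dots,0$,
\begin{equation}
V_{t,\upsilon}(s,y;\theta)=\int_{\mathcal{A}}\pi^{\theta}(a\mid t,\upsilon,s,y)\,W_{\upsilon}(a,y)\,\nu_{\mathcal{A}}(\d a),\qquad W_{\upsilon}(a,y):=\int_{\mathcal{S}}p(s'\mid s,a)\,V_{t+1,\upsilon}\!\left(s',\,y-c(s,a,s');\theta\right)\nu_{\mathcal{S}}(\d s').\notag
\end{equation}
The inductive hypothesis is that $(\upsilon,y')\mapsto V_{t+1,\upsilon}(s',y';\theta)$ is jointly continuous for every $s'$, and the task is to push this continuity through the two nested integrals. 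Because $c$ is bounded and $\mathcal{Y},\Upsilon$ are compact, the first argument of $f$ stays in a fixed compact subset of $\mathcal{D}$, so $f$ (hence every $V_{k,\upsilon}$) is uniformly bounded by some $M_f<\infty$; this bound will supply all the domination.

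For the inner integral I would fix $(\upsilon_n,y_n)\to(\upsilon_0,y_0)$ together with a pair $(a,s')$. Here the key observation—and the reason no regularity of $c$ or $p$ is needed—is that for \emph{frozen} $(a,s')$ the quantity $c(s,a,s')$ is merely a constant, so the shift $y_n-c(s,a,s')\to y_0-c(s,a,s')$ holds trivially, and the inductive hypothesis then yields $V_{t+1,\upsilon_n}(s',y_n-c(s,a,s');\theta)\to V_{t+1,\upsilon_0}(s',y_0-c(s,a,s');\theta)$ pointwise. Since the integrand is dominated by $M_f\,p(\cdot\mid s,a)$, which is $\nu_{\mathcal{S}}$-integrable, dominated convergence gives $W_{\upsilon_n}(a,y_n)\to W_{\upsilon_0}(a,y_0)$ for each fixed $a$, with $|W_{\upsilon}|\le M_f$ throughout.

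The main obstacle is the outer integral over $\mathcal{A}$, where the integrating density $\pi^{\theta}(\cdot\mid t,\upsilon,s,y)$ itself depends on the parameters $(\upsilon,y)$, so plain dominated convergence against a fixed measure does not apply. I would resolve this with Scheff\'e's lemma combined with a triangle-inequality split. Writing $\pi_n(\cdot):=\pi^{\theta}(\cdot\mid t,\upsilon_n,s,y_n)$, each $\pi_n$ is a $\nu_{\mathcal{A}}$-density integrating to one, and continuity of the network output in $(\upsilon,y)$ gives $\pi_n\to\pi_0$ pointwise; Scheff\'e's lemma then upgrades this to $\|\pi_n-\pi_0\|_{L^1(\nu_{\mathcal{A}})}\to 0$. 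Estimating
\begin{equation}
\bigl|V_{t,\upsilon_n}(s,y_n;\theta)-V_{t,\upsilon_0}(s,y_0;\theta)\bigr|\le M_f\,\|\pi_n-\pi_0\|_{L^1(\nu_{\mathcal{A}})}+\int_{\mathcal{A}}\pi_0(a)\,\bigl|W_{\upsilon_n}(a,y_n)-W_{\upsilon_0}(a,y_0)\bigr|\,\nu_{\mathcal{A}}(\d a),\notag
\end{equation}
the first term vanishes by Scheff\'e and the uniform bound on $W$, while the second vanishes by dominated convergence (integrand bounded by $2M_f\pi_0$, which is $\nu_{\mathcal{A}}$-integrable, and tending to $0$ pointwise by the previous step). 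This yields $V_{t,\upsilon_n}(s,y_n;\theta)\to V_{t,\upsilon_0}(s,y_0;\theta)$ and closes the induction. I expect this Scheff\'e step—transporting continuity through a parameter-dependent sampling measure without invoking any continuity of the kernel $p$ or the cost $c$—to be the crux of the argument.
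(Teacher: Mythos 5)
Your proof is correct in substance, but note that the paper never actually proves this lemma: it is stated bare, omitted from the Proofs section, and in the proof of Theorem \ref{t1} the continuity in $(y,\upsilon)$ is simply asserted as a ``fact'' from the trajectory representation $V_{t,\upsilon}(s,y;\theta)=\E_{\pi^{\theta}_{\upsilon}}\left[f\left(\sum_{k=t+1}^{T-1}c(S^{\theta}_k,A^{\theta}_k,S^{\theta}_{k+1})+c(s,A^{\theta}_t,S^{\theta}_{t+1})-y,\upsilon\right)\right]$. So your argument fills a genuine gap rather than replicating an existing one. Your route --- backward induction on the one-step recursion \eqref{resur}, with Scheff\'e's lemma to handle the fact that the sampling density $\pi^{\theta}(\cdot\mid t,\upsilon,s,y)$ itself moves with the parameters --- is the right mechanism, and it correctly isolates why no regularity of $p$ or $c$ is needed: inside the integrals $c(s,a,s')$ is a frozen constant, so all continuity flows from $f$ (Assumption \ref{ass1}) and from the continuity of the network outputs. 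The paper's implicit alternative would be to run the same Scheff\'e-plus-domination argument once, directly on the full trajectory density $\prod_k \pi^{\theta}(a_k\mid k,\upsilon,s_k,y_k)\,p(s_{k+1}\mid s_k,a_k)$; that is shorter but hides the measure-change issue, whereas your inductive version localizes it to one integral at a time and dovetails with how the recursion is reused in the Critic-step analysis.

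One piece of bookkeeping you should tighten: your inductive hypothesis is joint continuity of $(\upsilon,y')\mapsto V_{t+1,\upsilon}(s',y';\theta)$ for $y'\in\mathcal{Y}$, but the recursion evaluates it at $y'=y-c(s,a,s')$, which for $y\in\mathcal{Y}$ can exit $\mathcal{Y}$. The fix is routine: run the induction over enlarged compact sets, e.g.\ $\mathcal{Y}_k=[\underline{y}-k\sup|c|,\ \bar{y}+k\sup|c|]$, or over the reachable set of $y$-values; this is legitimate because Assumption \ref{ass1} already requires $f$ to be continuous on a compact $\mathcal{D}$ large enough to contain every argument $\sum_{k\geq t}c_k-y$ that can arise, so the same bound $M_f$ and the same continuity are available on the enlarged sets. (This looseness about $\mathcal{Y}$ versus $\mathcal{D}$ is inherited from the paper itself, not created by you.) With that adjustment, the base case, the inner dominated-convergence step with dominating function $M_f\,p(\cdot\mid s,a)$, the Scheff\'e step for $\|\pi_n-\pi_0\|_{L^1(\nu_{\mathcal{A}})}\to0$, and the final triangle-inequality split are all sound, and you obtain \emph{joint} continuity, which is precisely the form invoked in the proof of Theorem \ref{t1}.
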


In the Critic-step, we use Equation \eqref{resur} to update the value function. The following Theorem \ref{t1} shows that under some assumptions, the value function $V^{\phi}_{t,\upsilon}(s,y;\theta)$ produced from the NN can approximate the true value function $V_{t,\upsilon}(s,y;\theta)$ uniformly on $(s, y,\upsilon) \in S \times \mathcal{Y}\times\Upsilon$. The algorithm for the Critic-step is provided in Algorithm \ref{alg:alg2}.

\begin{theorem}[Approximation of the value function]
    \label{t1}
    Let $\pi^{\theta}$ be a fixed parameterized policy, with the corresponding value function defined in Equation \eqref{hh}. For any $\epsilon^*>0$ and $\gamma^*>0,$ there exists ANN parameters $\phi$ such that
    \begin{equation}
  \nu_{\mathcal{S}}\left( \left\{ s:\sup_{t\in\mathcal{T},\upsilon\in\Upsilon,y\in\mathcal{Y}} | V_{t,\upsilon}(s,y;\theta) -V^{\phi}_{t,\upsilon}(s,y;\theta)| > \gamma^*\right\}\right)< \epsilon^*.\notag
    \end{equation}
\end{theorem}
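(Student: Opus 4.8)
The plan is to reduce the statement to a combination of a Lusin-type regularization in the state variable $s$ and the universal approximation theorem for feed-forward neural networks, exploiting that $\mathcal{T}$ is finite and that $\mathcal{Y},\Upsilon$ are compact. The central difficulty is that, since no continuity is assumed on $p$ or $c$, the map $s\mapsto V_{t,\upsilon}(s,y;\theta)$ need not be continuous, so a direct appeal to universal approximation on $\mathcal{S}\times\mathcal{Y}\times\Upsilon$ is unavailable; this is precisely where the measure-theoretic slack ``$<\epsilon^*$'' in the conclusion enters.

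First I would verify that, for each fixed $t\in\mathcal{T}$, the map $(s,y,\upsilon)\mapsto V_{t,\upsilon}(s,y;\theta)$ is a Carath\'eodory function, i.e.\ measurable in $s$ for every $(y,\upsilon)$ and continuous in $(y,\upsilon)$ for every $s$. The continuity in $(y,\upsilon)$ is exactly the preceding Lemma, while measurability in $s$ follows by backward induction on $t$ from the recursion \eqref{resur}: the terminal value $V_{T,\upsilon}(s,y;\theta)=f(-y,\upsilon)$ is continuous by Assumption \ref{ass1}, and each induction step integrates a measurable integrand against the kernel $p$ and the policy $\pi^{\theta}$, which preserves measurability in $s$. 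Boundedness of $c$ on the compact domain confines $y-c(s,a,s')$ to the compact set $\mathcal{D}$, so the integrals are finite and the dominated convergence theorem applies where needed.

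Next, for each $t$ I would invoke the Scorza--Dragoni theorem (the Carath\'eodory refinement of Lusin's theorem): given $\epsilon^*>0$ there is a compact set $K_t\subset\mathcal{S}$ with $\nu_{\mathcal{S}}(\mathcal{S}\setminus K_t)<\epsilon^*/(T+1)$ on which $V_{t,\upsilon}(s,y;\theta)$ is \emph{jointly} continuous in $(s,y,\upsilon)$. Setting $K=\bigcap_{t=0}^{T}K_t$ yields a single compact set with $\nu_{\mathcal{S}}(\mathcal{S}\setminus K)<\epsilon^*$ on which every $V_{t,\cdot}(\cdot,\cdot;\theta)$ is jointly continuous. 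Treating the finite index $t$ as an additional discrete coordinate, the function $(t,s,y,\upsilon)\mapsto V_{t,\upsilon}(s,y;\theta)$ is then continuous on the compact set $\{0,\dots,T\}\times K\times\mathcal{Y}\times\Upsilon$, and by the Tietze extension theorem it extends to a function continuous on all of $\{0,\dots,T\}\times\mathcal{S}\times\mathcal{Y}\times\Upsilon$.

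Finally I would apply the universal approximation theorem to this continuous extension: there exist network parameters $\phi$ such that $V^{\phi}_{t,\upsilon}(s,y;\theta)$ lies within $\gamma^*$ of the extension uniformly over $\{0,\dots,T\}\times K\times\mathcal{Y}\times\Upsilon$, hence within $\gamma^*$ of $V_{t,\upsilon}(s,y;\theta)$ for every $s\in K$, $t\in\mathcal{T}$, $y\in\mathcal{Y}$, $\upsilon\in\Upsilon$. The exceptional set $\{s:\sup_{t,\upsilon,y}|V_{t,\upsilon}(s,y;\theta)-V^{\phi}_{t,\upsilon}(s,y;\theta)|>\gamma^*\}$ is then contained in $\mathcal{S}\setminus K$, whose $\nu_{\mathcal{S}}$-measure is below $\epsilon^*$. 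The main obstacle is the discontinuity in $s$; its resolution is the Scorza--Dragoni step, and the only additional care needed is to organize the finitely many time indices so that a single compact set $K$ and a single network serve all of them simultaneously.
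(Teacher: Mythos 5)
Your proposal is correct, and it reaches the conclusion by a genuinely different route from the paper. The paper argues time step by time step: for each fixed $t$ it cites Theorem 2.2 of \cite{hornik1989multilayer} (approximation \emph{in measure} of measurable functions) together with the Carath\'eodory property of $V_{t,\upsilon}(s,y;\theta)$ to produce a separate network $\phi_t$ satisfying $\nu_{\mathcal{S}}\bigl(\sup_{y\in\mathcal{Y},\upsilon\in\Upsilon}|V_{t,\upsilon}(s,y;\theta)-V^{\phi_t}_{t,\upsilon}(s,y;\theta)|>\gamma_t\bigr)<\epsilon_t$, then merges the finitely many networks $\{\phi_t\}$ into a single network $\phi$ via Lemma 6.4 of \cite{CJ2024}, and finishes with a union bound over $t$ and a calibration of the constants $\gamma_t,\hat\gamma,\epsilon_t$. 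You instead regularize in $s$ first: Scorza--Dragoni gives, for each $t$, a compact $K_t$ carrying all but $\epsilon^*/(T+1)$ of the mass on which $V_{t,\cdot}(\cdot,\cdot;\theta)$ is \emph{jointly} continuous; intersecting over the finitely many $t$, Tietze-extending off $K=\bigcap_t K_t$, and applying the uniform-on-compacta universal approximation theorem once then yields a single network serving all time steps, with exceptional set contained in $\mathcal{S}\setminus K$. Your route buys two things: it dispenses with the external network-merging lemma of \cite{CJ2024} (the time index is absorbed as a discrete input coordinate from the outset), and it makes explicit a step the paper leaves implicit --- Hornik's Theorem 2.2 by itself only controls the error off a small-measure subset of the \emph{joint} input space $(s,y,\upsilon)$, and upgrading this to a bound uniform over the compact $(y,\upsilon)$-slice for all $s$ outside a small set is precisely the Lusin/Scorza--Dragoni-plus-Tietze argument you supply; in that sense your proof is the rigorous unfolding of the paper's first step. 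What the paper's decomposition buys in exchange is brevity and modularity, delegating the analytic work to cited results. Two minor points, neither a gap: Scorza--Dragoni requires $\nu_{\mathcal{S}}$ to be a regular Borel measure, which holds here because $\mathcal{S}$ is compact and $\nu_{\mathcal{S}}$ is Lebesgue or counting measure; and your backward-induction argument for measurability in $s$ uses joint measurability of the kernel $p$ and the cost $c$, which is part of the MDP definition.
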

\begin{remark}
Note that we do not assume any continuity or semi-continuity on the transition probability $p$ or the cost function $c.$ If we further assume that $p$ and $c$ are both continuous in $(s,a,s'),$ the conclusion of Theorem \ref{t1} can be strengthened: there exists an ANN $V^{\phi},$ such that  
\begin{equation} \sup_{s\in\mathcal{S},t\in\mathcal{T},\upsilon\in\Upsilon,y\in\mathcal{Y}} | V_{t,\upsilon}(s,y;\theta) -V^{\phi}_{t,\upsilon}(s,y;\theta)| \leq \gamma^*.\notag
\end{equation}
\end{remark}

\begin{algorithm}[htb]
	\caption{Policy Evaluation (Critic-step)}
	\label{alg:alg2}
	\begin{algorithmic}
        \STATE \textbf{Input:} ANNs for the value function $V^{\phi}$ and the policy $\pi^{\theta},$ the replay buffer $\mathcal{M},$ epochs $K,$ the batch size $B.$
        
        \FOR {each epoch $\kappa=1$ to $K$} 
        \STATE Set the gradients of $V^{\phi}$ to zero;
        \FOR {each  $t=T-1$ to $0$} 
        \STATE Sample $B$ states $\left(s^{(t,b)},y^{(t,b)},a^{(t,b)},s^{(t+1,b)},y^{(t+1,b)},\upsilon^{(b)}\right)(b=1,\ldots,B)$ from $\mathcal{M};$
        \ENDFOR
               \FOR{each state $b=1,\ldots,B$ and  $t=T-1$ to $0$}
               \STATE Compute the predicted values $\hat{v}_t^b=V_{t,\upsilon^{(b)}}^{\phi}\left(s^{(t,b)},y^{(t,b)};\theta\right);$
               \STATE Compute the cost $c^{(t,b)}=c(s^{(t,b)},a^{(t,b)},s^{(t+1,b)});$
               \IF {$t=T-1$}
               \STATE Set the target value as
               $$v^b_t= f\left(c^{(t,b)}-y^{(t,b)},\upsilon^{(b)}\right);
               $$
               \ELSE 
               \STATE Set the target value as
               $$v^b_t= V_{t+1,y^{(b)}_0}^{\phi}\left(s^{(t+1,b)},y^{(t,b)}-c^{(t,b)};\theta\right);
               $$
               \ENDIF
               \ENDFOR
               \STATE Compute the expected square loss  $\frac{1}{BT}\sum_{b=1}^B\sum_{t=0}^{T-1}\left(v_{t}^b -\hat{v}_t^b\right)^2;$
               \STATE Update $\phi$ by performing an Adam optimizer step;
            
             \ENDFOR
            \STATE \textbf{Output:} The value function $V_{t,\upsilon}^{\phi}(s,y;\theta)$ under the policy evaluation. 
	\end{algorithmic}
\end{algorithm}

In the Actor-step, we have the following computations of the policy gradient. For $t = T-1,$ 
\begin{align}
    \nabla_{\theta}V_{T-1,\upsilon}\left(s,y;\theta\right)  =&\nabla_{\theta}\mathbb{E}_{\pi^{\theta}}\left[f\left(c\left(s,A^{\theta}_{T-1},S^{\theta}_{T}\right)-y,\upsilon\right)\right]\notag\\
    =& \nabla_{\theta}  \int_{\mathcal{S}} \int_{\mathcal{A}} \pi^{\theta}(a|t,\upsilon,s,y) p(s'|s,a)f(c(s,a,s')-y,\upsilon) \nu_{\mathcal{A}}(\d a) \nu_{\mathcal{S}}(\d s')\notag\\
    = &   \int_{\mathcal{S}} \int_{\mathcal{A}} \nabla_{\theta}\pi^{\theta}(a|t,\upsilon,s,y) p(s'|s,a)f(c(s,a,s')-y,\upsilon) \nu_{\mathcal{A}}(\d a) \nu_{\mathcal{S}}(\d s')\notag\\
    =&  \int_{\mathcal{S}} \int_{\mathcal{A}}\pi^{\theta}(a|t,\upsilon,s,y)  p(s'|s,a)f(c(s,a,s')-y,\upsilon) \nabla_{\theta}\log\pi^{\theta}(a|t,\upsilon,s,y)\nu_{\mathcal{A}}(\d a) \nu_{\mathcal{S}}(\d s')\notag\\
     =& \mathbb{E}_{\pi^{\theta}}\left.\left[\nabla_{\theta}\log \pi^{\theta}\left(A_{T-1}^{\theta}\right.|t,\upsilon,s,y\right)f\left(c\left(s,A_{T-1}^{\theta},S_{T}^{\theta}\right)-y,\upsilon\right)\right].\notag
\end{align}
For $t < T-1,$ 
\begin{align}
\label{Tpg}
    \nabla_{\theta}V_{t,\upsilon}(s,y;\theta) &= \nabla_{\theta} \int_{\mathcal{S}} \int_{\mathcal{A}} \pi^{\theta}(a|t,\upsilon,s,y)p(s'|s,a)V_{t+1,\upsilon}(s',y-c(s,a,s');\theta) \nu_{\mathcal{A}}(\d a) \nu_{\mathcal{S}}(\d s')\notag\\
    &= \int_{\mathcal{S}} \int_{\mathcal{A}} V_{t+1,\upsilon}(s',y-c(s,a,s');\theta)\nabla_{\theta}\pi^{\theta}(a|t,\upsilon,s,y) p(s'|s,a)  \nu_{\mathcal{A}}(\d a) \nu_{\mathcal{S}}(\d s')\notag\\
    &\qquad+\int_{\mathcal{S}} \int_{\mathcal{A}}\pi^{\theta}(a|t,\upsilon,s,y)p(s'|s,a)\nabla_{\theta}V_{t+1,\upsilon}(s',y-c(s,a,s');\theta)  \nu_{\mathcal{A}}(\d a) \nu_{\mathcal{S}}(\d s')\notag\\
     &=  \mathbb{E}_{\pi^{\theta}}\left.\left[\nabla_{\theta}\log \pi^{\theta}\left(A_{t}^{\theta}\right |t,\upsilon,s,y\right)
     V_{t+1,\upsilon}\left(S^{\theta}_{t+1},y-c\left(s,A_t^{\theta},S_{t+1}^{\theta}\right);\theta\right)\right]\notag\\
&\qquad+\mathbb{E}_{\pi^{\theta}}\left[\nabla_{\theta}V_{t+1,\upsilon}\left(S^{\theta}_{t+1},y-c\left(s,A_t^{\theta},S_{t+1}^{\theta}\right);\theta\right)\right].
\end{align}

When optimizing the policy $\pi^{\theta}$, the value function $V^{\phi}$ is treated as a fixed constant, implying that its parameter $\phi$ does not vary during the computation of the policy gradient. This constancy ensures that the gradient of $V^{\phi}$ with respect to $\phi$ does not contribute to the policy optimization process. As a result, the second term in Equation \eqref{Tpg} vanishes. This principle is commonly adopted in Actor-Critic methods. For a detailed discussion of this approach, see \cite{sutton2018reinforcement} and \cite{peters2008natural}. The algorithm for the Actor-step is provided in Algorithm \ref{alg:alg3}.

We call $\theta$ is a $\gamma$-optimal parameter if 
\begin{equation}
\sup_{\upsilon\in \Upsilon}|\mathbb E [V_{0,\upsilon}(S_0,0;\theta)]-\mathbb E[V^*_{0,\upsilon}(S_0,0)]| \leq \gamma.\notag
\end{equation}
From \cite{agarwal2021theory}, we know that as long as we sample $\upsilon$ from a distribution that has a positive p.d.f. on $\Upsilon,$ $\theta$ will converge to a $\gamma$-optimal parameter for any $\gamma>0.$

\begin{algorithm}[htb]
	\caption{Policy Gradient (Actor-step)} 
	\label{alg:alg3}
	\begin{algorithmic}
       \STATE \textbf{Input:} ANNs for the value function $V^{\phi}$ and the policy $\pi^{\theta},$ the replay buffer $\mathcal{M},$ epochs $K,$ the batch size $B.$

            \FOR {each epoch $\kappa=1$ to $K$} 
                \STATE Set the gradients of $\pi^{\theta}$ to be zero;
                \FOR {each $t=T-1$ to $0$} 
        \STATE Sample $B$ elements $\left(s^{(t,b)},y^{(t,b)},a^{(t,b)},s^{(t+1,b)},y^{(t+1,b)},\upsilon^{(b)}\right)(b=1,\ldots,B)$ from $\mathcal{M};$
        \ENDFOR
               \FOR{each state $b=1,\ldots,B$ and  $t=T-1$ to $0$}
               \STATE Obtain $\hat{z}^{(t,b)}=\nabla_{\theta}\log \pi^{\theta}\left.\left(a^{(t,b)}\right|t,\upsilon^{(b)},s^{(t,b)},y^{(t,b)}\right);$
              \IF {$t=T-1$}
               \STATE Calculate the gradient
               $$l_t^b=\hat{z}^{(t,b)}f\left(c^{(t,b)}-y^{(t,b)},\upsilon^{(b)}\right);
               $$
               \ELSE 
               \STATE Calculate the gradient
               $$l_t^b= \hat{z}^{(t,b)}V_{t+1,y^{(b)}_0}^{\phi}\left(s^{(t+1,b)},y^{(t,b)}-c^{(t,b)};\theta\right);
               $$
                \ENDIF
               \ENDFOR
               \STATE Take the average  $l = \frac{1}{BT}\sum_{b=1}^B\sum_{t=0}^{T-1}l_t^b;$
               \STATE Update $\theta$ by performing an Adam optimizer step;
            
             \ENDFOR
            \STATE \textbf{Output:} An updated policy $\pi^{\theta}.$
	\end{algorithmic}
\end{algorithm}

\subsection{Part 2: Selection of the Optimal Auxiliary Variable}\label{sec:select}

We next turn to the outer optimization problem and employ the SGD to solve
\begin{equation}
\label{ystar}
    \upsilon^*(\theta,\phi) \in \arg\min_{\upsilon\in\Upsilon} \left\{ h(\mathbb{E}[V^{\phi}_{0,\upsilon}(S_0,0;\theta)],\upsilon)\right\}.
\end{equation}
Then the optimal learned policy is given by
$
    \pi^{\theta}\left(\cdot|0,\upsilon^*(\theta,\phi),s,0\right)\notag
$ at time $0$ and
and $
    \pi^{\theta}\left(\cdot|t,\upsilon^*(\theta,\phi),s,y_t\right)  \notag
$ at time $t\geq 1 $. The algorithm to search for the optimal auxiliary variable $\upsilon^*$ is provided in Algorithm \ref{alg:alg4}.

\begin{algorithm}[htb]
	\caption{Searching the optimal auxiliary variable $\upsilon^*$} 
	\label{alg:alg4}
	\begin{algorithmic}
        \STATE \textbf{Input:} ANNs for the value function $V^{\phi},$ 
        epochs $K,$ simulations $M.$
        \STATE Initialize $\upsilon^*=0;$
        \STATE Sample $M$ states $s^{(m)}_0$ from the initial distribution;
            \FOR {each epoch $\kappa=1$ to $K$} 
            \STATE Calculate $$l=\frac{1}{M}\sum_{m=1}^{M}h\left(V^{\phi}_{0,\upsilon^*}\left(s^{(m)}_0,0;\theta\right),\upsilon^*\right);$$
                \STATE Update $\upsilon^*$ by performing an SGD optimizer step;
             \ENDFOR
            \STATE \textbf{Output:} An optimal $\upsilon^{*}.$
	\end{algorithmic}
\end{algorithm}

Next, we examine the correctness of our calculation of $\upsilon^*(\theta,\phi)$ in Equation \eqref{ystar}. 
We evaluate the accuracy of $\upsilon^*(\theta,\phi)$ by using the distance from the point $\upsilon^*(\theta,\phi)$ to the set $\mathscr{U}^*$, which is defined as 
\begin{equation}
    d\left(\mathscr{U}^*,\upsilon^*(\theta,\phi)\right) = \inf \left\{\left|\upsilon-\upsilon^*(\theta,\phi)\right|:\upsilon\in \mathscr{U}^*\right\}.\notag
\end{equation}
This type of distance is reasonable in our context, because for any $\upsilon\in\mathscr{U}^*,$ there exists an asymptotically optimal policy sequence $\left\{\pi_{\upsilon,n}\right\}_{n=1}^{\infty}$ to Problem \eqref{simplified_problem}. On the other hand, as $\pi^{\theta}$ is continuous w.r.t. $\upsilon,$ $\pi^{\theta}_{\upsilon^*(\theta,\phi)}$ approximates $\pi^{\theta}_{\upsilon}$ well for some $\upsilon\in\mathscr{U}^*$ if the distance is small. 

We now turn to the conditions under which 
$\upsilon^*(\theta,\phi)$ approximates the optimal set $\mathscr{U}^*$ sufficiently well. The following lemmas provide the foundational steps toward this goal.
\begin{lemma}
    \label{l0}
For any $\gamma'>0,$ there exists some $\gamma''>0$ such that 
\begin{equation}
\underset{\upsilon \in \Upsilon}{\text{sup}} |h(\mathbb E [V^*_{0,\upsilon}(S_0,0)],\upsilon)- h(\mathbb E [V_{0,\upsilon}^{\phi}(S_0,0;\theta)],\upsilon)|    < \gamma'\notag
\end{equation}
holds for any parameters $(\theta,\phi)$ satisfying
\begin{equation}
\underset{\upsilon \in \Upsilon}{\text{sup}} |\mathbb E [V^*_{0,\upsilon}(S_0,0)] - \mathbb E [V_{0,\upsilon}^{\phi}(S_0,0;\theta)]|    < \gamma''.\notag
\end{equation} 
\end{lemma}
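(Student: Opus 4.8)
The plan is to recognize Lemma \ref{l0} as a uniform-continuity statement for $h$ and to reduce it to the uniform continuity of $h$ on a suitable compact set. First I would pin down a fixed compact interval in which the first arguments of $h$ are guaranteed to lie. Since $f$ is continuous on $\mathcal{D}\times\mathbb{R}$ by Assumption \ref{ass1}(1) and $\mathcal{D}\times\Upsilon$ is compact, the quantity $M:=\sup_{(y,\upsilon)\in\mathcal{D}\times\Upsilon}|f(y,\upsilon)|$ is finite. Because $V^*_{0,\upsilon}(S_0,0)$ is, for each $\upsilon\in\Upsilon$, an expectation of $f(\cdot,\upsilon)$ evaluated at accumulated costs lying in $\mathcal{Y}\subset\mathcal{D}$, we obtain $|\mathbb{E}[V^*_{0,\upsilon}(S_0,0)]|\le M$ for all $\upsilon\in\Upsilon$, and the same bound holds for $\mathbb{E}[V^{\phi}_{0,\upsilon}(S_0,0;\theta)]$ up to the perturbation controlled by $\gamma''$. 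Fixing once and for all $\gamma''\le 1$, both expected values lie in the compact interval $\mathcal{K}:=[-M-1,M+1]$ for every $\upsilon\in\Upsilon$ and every admissible pair $(\theta,\phi)$. (One may equivalently describe this range as the continuous image of the compact $\Upsilon$ via Proposition \ref{propmin}, but the direct boundedness bound is more elementary.)

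Next I would invoke the uniform continuity of $h$ on the compact product $\mathcal{K}\times\Upsilon$, guaranteed by Assumption \ref{ass1}(2) provided $\mathcal{K}$ lies in the first-coordinate domain $f(\mathcal{D},\mathbb{R})$ of $h$. Given $\gamma'>0$, uniform continuity yields a $\delta>0$ with $|h(x_1,\upsilon)-h(x_2,\upsilon)|<\gamma'/2$ whenever $x_1,x_2\in\mathcal{K}$ and $|x_1-x_2|<\delta$, uniformly in $\upsilon\in\Upsilon$. Setting $\gamma'':=\min\{\delta,1\}$, any $(\theta,\phi)$ with $\sup_{\upsilon}|\mathbb{E}[V^*_{0,\upsilon}(S_0,0)]-\mathbb{E}[V^{\phi}_{0,\upsilon}(S_0,0;\theta)]|<\gamma''$ forces, for each fixed $\upsilon$, the bound $|h(\mathbb{E}[V^*_{0,\upsilon}(S_0,0)],\upsilon)-h(\mathbb{E}[V^{\phi}_{0,\upsilon}(S_0,0;\theta)],\upsilon)|<\gamma'/2$; taking the supremum over $\upsilon\in\Upsilon$ then gives a value $\le\gamma'/2<\gamma'$, which is the claimed strict bound. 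Using $\gamma'/2$ rather than $\gamma'$ in the uniform-continuity step is the cheap device that upgrades the pointwise strict inequalities into a strict bound on the supremum.

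The one step that needs genuine care, and which I expect to be the main obstacle, is confirming that $\mathcal{K}$, including the $\gamma''$-enlargement that accommodates the neural-network value $\mathbb{E}[V^{\phi}_{0,\upsilon}]$, remains inside the domain $f(\mathcal{D},\mathbb{R})$ on which $h$ is continuous. For the core examples, such as ES with $h(x,\upsilon)=\upsilon+x/\alpha$, $h$ is continuous on all of $\mathbb{R}\times\mathbb{R}$ and this concern is vacuous. In general one argues that $\{\mathbb{E}[V^*_{0,\upsilon}(S_0,0)]:\upsilon\in\Upsilon\}$ is a compact subset of $f(\mathcal{D},\mathbb{R})$: each such value is a mean of $f(\cdot,\upsilon)$ over costs in $\mathcal{Y}$, hence lies between $\min$ and $\max$ of $f(\cdot,\upsilon)$ over $\mathcal{Y}$, so by the intermediate value theorem applied to the continuous function $f(\cdot,\upsilon)$ on the interval $\mathcal{Y}$ it is itself a value $f(y,\upsilon)\in f(\mathcal{D},\mathbb{R})$; one then shrinks $\gamma''$ so that the enlarged set still sits within $f(\mathcal{D},\mathbb{R})$. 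Everything beyond this containment check is the routine compactness-plus-uniform-continuity machinery.
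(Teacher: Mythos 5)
Your proof is correct and follows essentially the same route as the paper's: bound the relevant expected values within a fixed compact interval, then invoke uniform continuity of $h$ on that interval times $\Upsilon$ with the second argument held fixed at $\upsilon$. Your version is in fact slightly more careful than the paper's, which directly asserts boundedness of $\mathbb{E}[V^{\phi}_{0,\upsilon}(S_0,0;\theta)]$ and uniform continuity of $h$ on $[-B,B]\times\Upsilon$ without your cap $\gamma''\le 1$ (which makes the compact set independent of $(\theta,\phi)$) and without checking that this interval lies in the continuity domain $f(\mathcal{D},\mathbb{R})$ from Assumption \ref{ass1}.
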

Lemma \ref{l0} demonstrates that the discrepancy between the expected values of the true and approximated value functions can be controlled by carefully selecting the parameters $(\theta,\phi)$. Building on this, Lemma \ref{l2} establishes a direct connection between the accuracy of $h(\cdot, \upsilon)$ and the proximity of $\upsilon^*(\theta,\phi)$ to $\mathscr{U}^*$. 
\begin{lemma}
  \label{l2} 
For any $\gamma>0,$ there exists some $\gamma'>0$ such that
\begin{equation}
   d\left(\mathscr{U}^*,\upsilon^*({\theta},\phi)\right)\leq \gamma\notag
\end{equation}
holds for any parameters $(\theta,\phi)$ satisfying
\begin{equation}
\label{condition11}
    \underset{\upsilon \in \Upsilon}{\text{sup}}\  |h(\mathbb E[V^*_{0,\upsilon}(S_0,0) ],\upsilon)- h(\mathbb E [V_{0,\upsilon}^{\phi}(S_0,0;\theta)],\upsilon)|    < \gamma'.
\end{equation}

\end{lemma}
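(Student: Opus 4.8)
The plan is to argue by contradiction, using the compactness of $\Upsilon$ together with the continuity already secured in Proposition \ref{propmin}. Write $G(\upsilon) := h(\mathbb{E}[V^*_{0,\upsilon}(S_0,0)],\upsilon)$ for the true outer objective and $G_{\theta,\phi}(\upsilon) := h(\mathbb{E}[V^{\phi}_{0,\upsilon}(S_0,0;\theta)],\upsilon)$ for the approximated one, so that $\mathscr{U}^* = \arg\min_{\upsilon\in\Upsilon} G$ and $\upsilon^*(\theta,\phi) \in \arg\min_{\upsilon\in\Upsilon} G_{\theta,\phi}$ by \eqref{ystar}. By Proposition \ref{propmin}, $G$ is continuous on the compact set $\Upsilon$, so $m^* := \min_{\upsilon\in\Upsilon} G(\upsilon)$ is attained and $\mathscr{U}^*$ is a nonempty closed (hence compact) subset of $\Upsilon$; in particular $\upsilon\mapsto d(\mathscr{U}^*,\upsilon)$ is $1$-Lipschitz and vanishes exactly on $\mathscr{U}^*$.

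Assume, for contradiction, that the conclusion fails for some fixed $\gamma_0 > 0$: for every $\gamma' > 0$ there exist parameters satisfying the hypothesis \eqref{condition11} with that $\gamma'$, yet with $d(\mathscr{U}^*, \upsilon^*(\theta,\phi)) > \gamma_0$. Taking $\gamma' = 1/n$ produces parameters $(\theta_n,\phi_n)$ and points $\upsilon_n := \upsilon^*(\theta_n,\phi_n)$ with $\sup_{\upsilon\in\Upsilon}|G(\upsilon) - G_{\theta_n,\phi_n}(\upsilon)| < 1/n$ and $d(\mathscr{U}^*,\upsilon_n) > \gamma_0$. By compactness of $\Upsilon$ I pass to a subsequence so that $\upsilon_n \to \upsilon_\infty \in \Upsilon$.

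The core step is to show $\upsilon_\infty \in \mathscr{U}^*$. Fix any $\upsilon^\dagger \in \mathscr{U}^*$, so $G(\upsilon^\dagger) = m^*$. Since $\upsilon_n$ minimizes $G_{\theta_n,\phi_n}$ over $\Upsilon$ and the two objectives are uniformly $1/n$-close, I chain
\begin{equation}
G(\upsilon_n) - \frac{1}{n} \leq G_{\theta_n,\phi_n}(\upsilon_n) \leq G_{\theta_n,\phi_n}(\upsilon^\dagger) \leq G(\upsilon^\dagger) + \frac{1}{n} = m^* + \frac{1}{n},\notag
\end{equation}
whence $G(\upsilon_n) \leq m^* + 2/n$. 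Letting $n\to\infty$ and invoking the continuity of $G$ gives $G(\upsilon_\infty) \leq m^*$, and since $m^*$ is the minimum I conclude $G(\upsilon_\infty) = m^*$, i.e. $\upsilon_\infty \in \mathscr{U}^*$. Then $d(\mathscr{U}^*,\upsilon_\infty) = 0$, whereas the $1$-Lipschitz continuity of $d(\mathscr{U}^*,\cdot)$ forces $d(\mathscr{U}^*,\upsilon_n) \to d(\mathscr{U}^*,\upsilon_\infty) = 0$, contradicting $d(\mathscr{U}^*,\upsilon_n) > \gamma_0 > 0$. This contradiction proves the lemma.

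I expect the only delicate point to be ensuring that each $\upsilon_n = \upsilon^*(\theta_n,\phi_n)$ is genuinely well-defined, i.e. that $\arg\min_{\upsilon\in\Upsilon} G_{\theta,\phi}$ is nonempty; this is implicit in \eqref{ystar} and follows from the continuity of $\upsilon\mapsto V^{\phi}_{0,\upsilon}(S_0,0;\theta)$ and $h$ on the compact $\Upsilon$. Everything else is a standard stability-of-argmin argument in which compactness of $\Upsilon$ (to extract a convergent subsequence) and the continuity of $G$ from Proposition \ref{propmin} do the essential work; notably, no continuity assumption on the MDP is invoked here, since that difficulty has already been absorbed into Proposition \ref{propmin}.
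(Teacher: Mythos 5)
Your proof is correct, but it runs on different machinery from the paper's. The paper also argues by contradiction along a sequence $(\theta_n,\phi_n)$ satisfying \eqref{condition11} with vanishing tolerance, but it never extracts a convergent subsequence of the approximate minimizers $\upsilon_n=\upsilon^*(\theta_n,\phi_n)$. Instead it introduces a value gap: writing $G(\upsilon)=h(\E[V^*_{0,\upsilon}(S_0,0)],\upsilon)$, it sets $l_0=\inf_{\upsilon\in\Upsilon}G(\upsilon)$ and $l_1=\inf\{G(\upsilon):\upsilon\in\Upsilon,\ d(\mathscr{U}^*,\upsilon)>\gamma_0\}$, uses the continuity of $G$ from Proposition \ref{propmin} (together, implicitly, with compactness of $\Upsilon$) to get $l_1>l_0$, and then, for a fixed $\upsilon'\in\mathscr{U}^*$, derives the chain $\liminf_n G_{\theta_n,\phi_n}(\upsilon')\geq\liminf_n G_{\theta_n,\phi_n}(\upsilon_n)\geq l_1>l_0=G(\upsilon')$, contradicting uniform convergence at the single point $\upsilon'$. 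Your route---the chain $G(\upsilon_n)\leq m^*+2/n$, extraction of a limit point $\upsilon_\infty$, and the $1$-Lipschitz property of $d(\mathscr{U}^*,\cdot)$---is the sequential ``stability of argmin'' version of the same fact. What each buys: your argument is self-contained in that it never needs the strict inequality $l_1>l_0$, whose one-line justification in the paper is slightly terse (it tacitly uses that the infimum over the closed set $\{d(\mathscr{U}^*,\cdot)\geq\gamma_0\}$ is attained at a point outside $\mathscr{U}^*$, with the convention $l_1=+\infty$ when that set is empty); conversely, the paper's gap argument is effectively quantitative, since one can read off an admissible tolerance $\gamma'=(l_1-l_0)/2$ explicitly rather than obtaining $\gamma'$ purely by contradiction. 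Both proofs consume exactly the same external inputs (continuity from Proposition \ref{propmin}, compactness of $\Upsilon$, and the minimizing property \eqref{ystar}), and your attention to the well-posedness of $\upsilon^*(\theta,\phi)$ is a point the paper leaves implicit.
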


The results of Lemmas \ref{l0} and \ref{l2} describe how the parameters $(\theta,\phi)$ may influence the optimality of $\upsilon^*(\theta,\phi)$. Theorem \ref{t2} synthesizes these insights into a practical criterion for selecting $\upsilon^*(\theta,\phi)$. By imposing conditions on the optimality of $\theta$ and the approximation quality of $\phi$, the theorem ensures the proximity of $\upsilon^*(\theta,\phi)$ to $\mathscr{U}^*$ in the sense of the distance metric. 
\begin{theorem}[Selection of the optimal $\upsilon$]
\label{t2}
If the distribution of $S_0$ has a p.d.f. $p(\cdot)$ w.r.t. $\nu_{\mathcal{S}},$ then for any $\gamma>0,$ there exist some $\epsilon_1',\gamma'_1>0$ such that 
\begin{equation}
    d\left(\mathscr{U}^*,\upsilon^*(\theta,\phi) \right)\leq\gamma\notag
\end{equation}holds for any parameters $(\theta,\phi)$ satisfying
\begin{itemize}
    \item[(1)] $\theta$ is a $\gamma'_1$-optimal parameter;
    \item[(2)] $\phi$ satisfies that $\nu_{\mathcal{S}}\left(  \sup_{t\in\mathcal{T}, y\in\mathcal{Y},\upsilon\in\Upsilon}| V_{t,\upsilon}(s,y;\theta) -V^{\phi}_{t,\upsilon}(s,y;\theta)|>\gamma'_1 \right)<\epsilon_1'.$
\end{itemize}
\end{theorem}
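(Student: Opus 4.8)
The plan is to run Lemmas \ref{l0} and \ref{l2} backwards and then reduce everything to a single $\Upsilon$-uniform estimate that the hypotheses (1)--(2) can control. Given $\gamma>0$, I first invoke Lemma \ref{l2} to obtain a threshold $\gamma'>0$, and then feed this $\gamma'$ into Lemma \ref{l0} to obtain a threshold $\gamma''>0$, so that the single bound $\sup_{\upsilon\in\Upsilon}|\mathbb{E}[V^*_{0,\upsilon}(S_0,0)]-\mathbb{E}[V^\phi_{0,\upsilon}(S_0,0;\theta)]|<\gamma''$ already forces $d(\mathscr{U}^*,\upsilon^*(\theta,\phi))\le\gamma$. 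It then only remains to choose $\gamma'_1$ and $\epsilon'_1$ so that conditions (1) and (2) imply this $\gamma''$-bound, and that is where the real work lies.

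For the reduction I insert the intermediate quantity $\mathbb{E}[V_{0,\upsilon}(S_0,0;\theta)]$, i.e. the expectation of the \emph{true} value function under the parameterized policy $\pi^{\theta}$, and apply the triangle inequality:
\[
\sup_{\upsilon\in\Upsilon}\bigl|\mathbb{E}[V^*_{0,\upsilon}(S_0,0)]-\mathbb{E}[V^\phi_{0,\upsilon}(S_0,0;\theta)]\bigr|
\le
\sup_{\upsilon\in\Upsilon}\bigl|\mathbb{E}[V^*_{0,\upsilon}(S_0,0)]-\mathbb{E}[V_{0,\upsilon}(S_0,0;\theta)]\bigr|
+\sup_{\upsilon\in\Upsilon}\bigl|\mathbb{E}[V_{0,\upsilon}(S_0,0;\theta)]-\mathbb{E}[V^\phi_{0,\upsilon}(S_0,0;\theta)]\bigr|.
\]
The first term is at most $\gamma'_1$ by hypothesis (1), which is exactly the definition of a $\gamma'_1$-optimal parameter. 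The second term is the one handled by condition (2). Writing $\mathbb{E}[g(S_0)]=\int_{\mathcal{S}} g(s)\,p(s)\,\nu_{\mathcal{S}}(\d s)$ via the density $p$, I partition $\mathcal{S}$ into the good set $G=\{s:\sup_{t,y,\upsilon}|V_{t,\upsilon}(s,y;\theta)-V^\phi_{t,\upsilon}(s,y;\theta)|\le\gamma'_1\}$ and its complement $B$, where $\nu_{\mathcal{S}}(B)<\epsilon'_1$ by (2). On $G$ the integrand is $\le\gamma'_1$ pointwise (uniformly in $\upsilon$, since $B$ is defined with the sup over $\upsilon$ already absorbed), contributing at most $\gamma'_1$; on $B$ I bound $|V-V^\phi|\le 2C$, where $C$ bounds both value functions uniformly. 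Here $V$ is bounded because $f$ is continuous, hence bounded, on the compact set $\mathcal{D}\times\Upsilon$ while its argument $\sum_k c - Y_t$ ranges in $\mathcal{D}$; and $V^\phi$ is bounded on the compact input domain $\mathcal{T}\times\Upsilon\times\mathcal{S}\times\mathcal{Y}$ — uniformly over the admissible $\phi$ once the network output is clipped to the range of $f$, which is harmless since $V$ itself lies in that range. Thus the $B$-contribution is $\le 2C\,\mathbb{P}(S_0\in B)$.

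The crucial and most delicate step is converting the $\nu_{\mathcal{S}}$-control of $B$ supplied by (2) into $\mathbb{P}_{S_0}$-control, since the objective is an expectation under $\mathbb{P}_{S_0}$ while (2) speaks only in the reference measure $\nu_{\mathcal{S}}$. This is precisely where the hypothesis that $S_0$ admits a density is used: because $\mathbb{P}_{S_0}\ll\nu_{\mathcal{S}}$, the standard $\varepsilon$--$\delta$ characterization of absolute continuity guarantees that for every $\delta>0$ there is $\eta>0$ with $\nu_{\mathcal{S}}(B)<\eta\Rightarrow\mathbb{P}(S_0\in B)<\delta$. Collecting terms, the full gap is at most $2\gamma'_1+2C\,\mathbb{P}(S_0\in B)$. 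I then set $\gamma'_1<\gamma''/4$ and, applying absolute continuity with $\delta=\gamma''/(4C)$, pick $\epsilon'_1\le\eta$ so that $\nu_{\mathcal{S}}(B)<\epsilon'_1$ forces $\mathbb{P}(S_0\in B)<\gamma''/(4C)$; both contributions then fall below $\gamma''/2$, yielding the desired $\gamma''$-bound and closing the chain through Lemmas \ref{l0} and \ref{l2}. I expect the measure-transfer step to be the main obstacle, both in identifying that the uniform boundedness of $V$ and $V^\phi$ is what tames the bad set and in recognizing that the density assumption on $S_0$ is exactly what licenses passing from $\nu_{\mathcal{S}}$-smallness to $\mathbb{P}_{S_0}$-smallness.
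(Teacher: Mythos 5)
Your proposal is correct and follows essentially the same route as the paper's proof: chain Lemma \ref{l2} with Lemma \ref{l0} to reduce everything to a $\Upsilon$-uniform bound on $\sup_{\upsilon}|\mathbb{E}[V^*_{0,\upsilon}(S_0,0)]-\mathbb{E}[V^{\phi}_{0,\upsilon}(S_0,0;\theta)]|$, insert the intermediate term $\mathbb{E}[V_{0,\upsilon}(S_0,0;\theta)]$ via the triangle inequality (condition (1) controlling one piece), and split $\mathcal{S}$ into a good set and a small bad set using condition (2) together with the density of $S_0$. If anything, you are more careful than the paper at the two delicate points: the paper bounds the bad-set integral by $2B\epsilon_1'$, implicitly treating $\int_{\mathcal{S}\backslash\mathcal{S}_1}p(s_0)\,\nu_{\mathcal{S}}(\d s_0)$ as if it were dominated by $\nu_{\mathcal{S}}(\mathcal{S}\backslash\mathcal{S}_1)$, which your $\epsilon$--$\delta$ absolute-continuity argument justifies cleanly, and the paper asserts a single bound $B$ on $|V^{\phi}_{0,\upsilon}|$ without addressing its uniformity over admissible $\phi$, which your clipping remark patches.
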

The above theorem highlights the importance of the initial state distribution in determining the conditions for optimality. As elaborated in the subsequent remark, the case where $S_0$ is deterministic (i.e., a point mass) requires a more stringent condition on the approximation error of $\phi$. 
\begin{remark}
    If $S_0$ has a point mass at $s_0,$ i.e., $S_0=s_0$ almost surely, then the condition (2) should be replaced by \begin{equation}
        \sup_{t\in\mathcal{T}, y\in\mathcal{Y},\upsilon\in\Upsilon}| V_{t,\upsilon}(s,y;\theta) -V^{\phi}_{t,\upsilon}(s,y;\theta)|\leq\gamma'_1. \notag
    \end{equation}
\end{remark}

\subsection{Part 3: Sampling the auxiliary variable $\upsilon$}\label{sec:sample}
Regarding the computational cost, the method proposed in this paper avoids the simulation-upon-simulation convention but expands $Y_t$ and $\upsilon$ as inputs to the neural network as a tradeoff. Although we only use $Y_t$ as the augmented state of the MDP, learning based on the sampled $\upsilon$ also adds computational complexity in both Critic and Actor steps. Therefore, it is important to explore an efficient sampling of $\upsilon$ that makes the agent quickly converge to the optimal policy.

\cite{agarwal2021theory} shows that sampling from any distribution that has a positive p.d.f. on $\Upsilon$ will lead to a $\gamma$-optimal parameter. We can simply choose a uniform distribution on $\Upsilon.$ Here we propose an alternative approach with a higher efficiency. For any initial $\upsilon^*_0$ and parameters $\phi_0$ and $\theta_0,$ we sample $\upsilon$ and update the parameters according to the following process: in stage $n\ (n\geq 1),$ we 
\begin{equation}
    \begin{aligned}
    & \text{Sample } \upsilon \sim \mathrm{N} \left(\upsilon^*_{n-1},\sigma_n^2\right)\text{ and generate trajectories};\notag \\
    &  \text{Perform $L(n)$ Critic-steps and get $\phi_n$};\notag\\
    &  \text{Perform $L(n)$ Actor-steps and get $\theta_n$};\notag\\
    &  \text{Calculate } \upsilon^*_n\in\arg\min_{\upsilon\in\Upsilon} \left\{ h(\mathbb{E}\left[V^{\phi_n}_{0,\upsilon}(S_0,0;\theta_n)\right],\upsilon)\right\}.\notag
\end{aligned}\end{equation}
Both $L(n)$ and $\sigma^2_n$ are decreasing w.r.t. $n$ in order to fully explore in early iterations and ensure a faster convergence in later iterations. This sampling method is related to the alternating minimization algorithm. We sample $\upsilon$ from a normal distribution with a mean of $\upsilon^*_n$ and a small variance at epoch $n$. This leads to a faster convergence of $\pi^{\theta}_{\upsilon^*_n}$ to minimize $h(\E [V_{0,\upsilon^*_n}(S_0,0;\theta)],\upsilon)$ compared to sampling uniformly over $\Upsilon.$ Theorem \ref{thm:am_converge} shows that under some regularization conditions, this method will lead to a solution of \eqref{simplified_problem} even without any exploration on $\upsilon.$ The proof of Theorem \ref{thm:am_converge} is imitated from \cite{niesen2007adaptive} and \cite{csiszar1984information}.

\begin{theorem}
\label{thm:am_converge}
For any $\epsilon>0,$ $\upsilon_1\in\Upsilon,$  
\begin{equation}
   \pi_{n,\epsilon}\in \left\{\pi\in\tilde{\Pi}:h(\E [V^{\pi}_{0,\upsilon_n}(S_0,0)],\upsilon_n)-h(\E [V^{*}_{0,\upsilon_n}(S_0,0)],\upsilon_n)<\frac{\epsilon}{n^2}\right\},\notag
\end{equation}
and \begin{equation}
\label{argmin}
    \upsilon_{n+1} \in\arg\min_{\upsilon\in\Upsilon} h(\E [V^{\pi_{n,\epsilon}}_{0,\upsilon}(S_0,0)],\upsilon),\notag
\end{equation}
we have $\lim_{n\to\infty}h(\E [V^{\pi_{n,\epsilon}}_{0,\upsilon_{n+1}}(S_0,0)],\upsilon_{n+1})$ exists finitely. Furthermore, if there exists a nonnegative function $\delta:\Upsilon\times\Upsilon\to[0,\infty)$ such that for any $\upsilon'\in\Upsilon,\ \pi\in\tilde{\Pi},$ the following two conditions hold: 
\begin{itemize}
    \item[(1)] for any $\upsilon\in\arg\min_{\upsilon\in\Upsilon}h(\E [V^{\pi}_{0,\upsilon}(S_0,0)],\upsilon), $ it holds that
\begin{equation}
    \delta(\upsilon',\upsilon) + h(\E [V^{\pi}_{0,\upsilon}(S_0,0)],\upsilon)\leq h(\E [V^{\pi}_{0,\upsilon'}(S_0,0)],\upsilon');\notag
\end{equation}
    \item[(2)] for any $\gamma>0,\ \upsilon\in\Upsilon,$ there exists $\epsilon>0$ such that 
\begin{equation}
    h(\E [V^{\pi_{\upsilon,\epsilon}}_{0,\upsilon'}(S_0,0)],\upsilon')\leq h(\E [V^{\pi}_{0,\upsilon'}(S_0,0)],\upsilon')+\delta(\upsilon',\upsilon)+\gamma,\notag
\end{equation}
for any  $ \pi_{\upsilon,\epsilon}\in \left\{\pi\in\tilde{\Pi}:\E [V^{\pi}_{0,\upsilon}(S_0,0)]-\E [V^{*}_{0,\upsilon}(S_0,0)]<{\epsilon}\right\},$ 
\end{itemize} 
then we have 
\begin{equation}
   \lim_{n\to\infty,\epsilon\to0} h(\E [V^{\pi_{n,\epsilon}}_{0,\upsilon_{n+1}}(S_0,0) ],\upsilon_{n+1})= \min_{\upsilon\in\Upsilon}h(\E [V^{*}_{0,\upsilon}(S_0,0)],\upsilon).\notag
\end{equation}
\end{theorem}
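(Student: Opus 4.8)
The plan is to treat the scheme as an inexact alternating minimization on the bivariate objective $F(\pi,\upsilon):=h(\E [V^{\pi}_{0,\upsilon}(S_0,0)],\upsilon)$, writing $F^*(\upsilon):=h(\E [V^{*}_{0,\upsilon}(S_0,0)],\upsilon)$ and $\mathrm{OPT}:=\min_{\upsilon\in\Upsilon}F^*(\upsilon)$, which is finite and attained by Proposition \ref{propmin}. Two structural facts should be recorded first: (i) $F^*(\upsilon)=\inf_{\pi\in\tilde{\Pi}}F(\pi,\upsilon)$ for every $\upsilon$, which follows from $V^*_{0,\upsilon}\le V^{\pi}_{0,\upsilon}$ (Theorem \ref{t_Bellman}), the monotonicity of $h$ in its first argument, and the approximating policies of Proposition \ref{prop_appro}; and (ii) $F$ is bounded, since the cumulative cost lies in a compact interval, $f$ is continuous on the compact set $\mathcal{D}\times\Upsilon$, and $h$ is continuous by Assumption \ref{ass1}. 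In particular $a_n:=F(\pi_{n,\epsilon},\upsilon_{n+1})\ge \mathrm{OPT}$.

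For the first claim I would establish the approximate monotonicity $a_{n+1}\le a_n+\epsilon/(n+1)^2$. Since $\pi_{n+1,\epsilon}$ is $(\epsilon/(n+1)^2)$-optimal for $\upsilon_{n+1}$ and $F^*(\upsilon_{n+1})\le F(\pi_{n,\epsilon},\upsilon_{n+1})=a_n$ by (i), we get $F(\pi_{n+1,\epsilon},\upsilon_{n+1})\le a_n+\epsilon/(n+1)^2$; the definition $\upsilon_{n+2}\in\arg\min_{\upsilon}F(\pi_{n+1,\epsilon},\upsilon)$ then gives $a_{n+1}\le F(\pi_{n+1,\epsilon},\upsilon_{n+1})$. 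Hence $\tilde{a}_n:=a_n-\epsilon\sum_{k=2}^{n}k^{-2}$ is nonincreasing and bounded below, so it converges, and since $\sum k^{-2}<\infty$ the sequence $a_n$ converges to a finite limit $L\ge\mathrm{OPT}$.

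For the second claim I would combine the two supplied properties in the Csisz\'ar--Tusn\'ady style. Applying condition (1) with $\pi=\pi_{n,\epsilon}$, minimizer $\upsilon=\upsilon_{n+1}$, and test point $\upsilon_\star\in\arg\min F^*$ yields $\delta(\upsilon_\star,\upsilon_{n+1})+a_n\le F(\pi_{n,\epsilon},\upsilon_\star)$; applying condition (2) with $\upsilon=\upsilon_n$, $\upsilon'=\upsilon_\star$, and $\pi=\bar\pi$ a $\gamma$-optimal policy for $\upsilon_\star$ (so $F(\bar\pi,\upsilon_\star)\le\mathrm{OPT}+\gamma$) gives $F(\pi_{n,\epsilon},\upsilon_\star)\le\mathrm{OPT}+\delta(\upsilon_\star,\upsilon_n)+2\gamma$. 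Chaining produces the telescoping recursion $a_n-\mathrm{OPT}\le\delta(\upsilon_\star,\upsilon_n)-\delta(\upsilon_\star,\upsilon_{n+1})+2\gamma$. Summing from a starting index $N_0$ to $N$, using $\delta\ge0$, and dividing by $N-N_0+1$ gives a Cesàro average bounded by $\delta(\upsilon_\star,\upsilon_{N_0})/(N-N_0+1)+2\gamma$; since $a_n\to L$, the left side tends to $L-\mathrm{OPT}$, so $L-\mathrm{OPT}\le 2\gamma$. Letting $\gamma\to0$ (driven by $\epsilon\to0$) and recalling $a_n\ge\mathrm{OPT}$ forces $L=\mathrm{OPT}$.

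The main obstacle is the bookkeeping that makes condition (2) usable along the entire trajectory for arbitrarily small $\gamma$. Condition (2) supplies, for each $\gamma$ and each $\upsilon$, a threshold $\epsilon_0(\gamma,\upsilon)$, and is phrased through the raw value gap $\E [V^\pi_{0,\upsilon}]-\E [V^*_{0,\upsilon}]$, whereas the algorithm controls the $h$-composed gap $F(\pi_{n,\epsilon},\upsilon_n)-F^*(\upsilon_n)<\epsilon/n^2$. I would first convert between the two via uniform continuity of the inverse of $h(\cdot,\upsilon)$ on the compact value range (a quantitative version of Lemma \ref{l0}), so that an $h$-gap below $\epsilon/n^2$ forces a value gap below some $\omega(\epsilon/n^2)\to0$. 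To run the telescoping for a fixed $\gamma$ one then needs $\omega(\epsilon/n^2)<\epsilon_0(\gamma,\upsilon_n)$ for all large $n$, which is precisely where the joint limit $n\to\infty,\ \epsilon\to0$ enters and where compactness of $\Upsilon$ is invoked to secure a positive threshold $\inf_{\upsilon}\epsilon_0(\gamma,\upsilon)$ along the trajectory. Managing this interplay, rather than the individual inequalities, is the delicate part of the argument.
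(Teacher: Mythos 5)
Your proposal follows essentially the same route as the paper's proof: it is the Csisz\'ar--Tusn\'ady-style alternating-minimization argument (the paper explicitly imitates \cite{csiszar1984information} and \cite{niesen2007adaptive}). Your approximate-monotonicity step for the first claim is the paper's chain of inequalities \eqref{chain_neq} with a cosmetically different summable correction term, and your telescoping bound $a_n-\mathrm{OPT}\le\delta(\upsilon_\star,\upsilon_n)-\delta(\upsilon_\star,\upsilon_{n+1})+2\gamma$ is the paper's recursion \eqref{recursiv_neq} specialized to an optimal test point $\upsilon_\star$ and a near-optimal test policy. Whether one finishes by Ces\`aro averaging with a fixed per-stage tolerance (your version) or by summable tolerances $\gamma/(n+1)^2$ together with bounded partial sums (the paper's version) is immaterial. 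Your preliminary facts (i) and (ii), which rest on Theorem \ref{t_Bellman} and Proposition \ref{prop_appro}, and your explicit conversion of the $h$-composed gap into a raw value gap via a uniform modulus (a uniform version of Lemma \ref{l0}, valid by compactness and strict monotonicity of $h$ in its first argument) are sound; the paper performs this conversion silently.

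The genuine gap sits exactly at the point you flag as delicate, and your proposed resolution fails. Your Ces\`aro argument needs the recursion to hold for \emph{all} $n$ between $N_0$ and $N$ with $N\to\infty$ at a \emph{fixed} $\epsilon$, hence it needs the condition-(2) thresholds $\epsilon_0(\gamma,\upsilon_n)$ to be bounded away from zero along the trajectory, and you claim this follows from compactness of $\Upsilon$ via $\inf_{\upsilon\in\Upsilon}\epsilon_0(\gamma,\upsilon)>0$. Compactness gives no such thing: condition (2) produces a positive threshold pointwise in $\upsilon$ with no regularity whatsoever, and a strictly positive function on a compact set can have infimum zero unless it is, say, lower semicontinuous --- a property nothing in the hypotheses provides. (The trajectory $\{\upsilon_n\}$ moreover depends on $\epsilon$ itself, which makes a trajectory-wise lower bound even less accessible.) The paper's proof is engineered to avoid any such uniformity: for each horizon $N$ it invokes condition (2) only at the finitely many points $\upsilon_1,\dots,\upsilon_N$ with tolerances $\gamma/(n+1)^2$, so the required threshold $\epsilon_N$ is a minimum of finitely many positive numbers and hence positive; the recursion is summed only up to $N$; and the conclusion is read as the diagonal limit $\limsup_{N\to\infty,\,\epsilon<\epsilon_N}$, which, combined with the near-monotonicity from the first claim (namely $a^{\epsilon}_n\ge a^{\epsilon}_N-\epsilon$ for $n\le N$), yields $a^{\epsilon}_N\le h(\E[V^{\pi}_{0,\upsilon}(S_0,0)],\upsilon)+\epsilon+O(1/N)$ for every test pair $(\pi,\upsilon)$, and then the claim by arbitrariness of $(\pi,\upsilon)$. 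To repair your write-up, replace the uniform-threshold claim by this finite-stage bookkeeping, i.e., tie $\epsilon$ to $N$ rather than fixing $\epsilon$ and letting $N\to\infty$.
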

\begin{remark}
    If Assumption C in \cite{bauerle2011markov} is satisfied, then there exists an optimal policy $\pi^*_{\upsilon}$ for any $\upsilon\in\Upsilon,$ which is defined in Remark \ref{remark_optimalpolicy}. As a result, condition (2) can be replaced by that
    \begin{equation}
    h(\E[ V^{\pi^*_{\upsilon}}_{0,\upsilon'}(S_0,0)],\upsilon')\leq h(\E [V^{\pi}_{0,\upsilon'}(S_0,0)],\upsilon')+\delta(\upsilon',\upsilon),\notag
\end{equation} for any $\upsilon,\upsilon'\in\Upsilon.$
\end{remark}

We summarize the complete Actor-Critic algorithm in Algorithm \ref{alg:alg1}.

\begin{algorithm}[h]
	\caption{Complete Recipe based on a Convex Scoring Function} 
	\label{alg:alg1}
	\begin{algorithmic}
        \STATE \textbf{Input:} ANNs for the value function $V^{\phi}$ and the policy $\pi^{\theta},$
        the number of episodes $N,$  epochs $K,$ the batch size $B,$ simulations $M,$ updating frequency $L,$ variance for exploration $\sigma^2.$
        \STATE Set initial guesses for $V^{\phi},\pi ^{\theta};$
        \STATE Initialize the environment and optimizers; 
        \STATE Initialize the buffer $\mathcal{M} = \emptyset$; 
        \STATE Initialize auxiliary variable $\upsilon^*=0;$
        \STATE Initialize counter $c=0;$
            \FOR {each epoch $\kappa=1$ to $K$} 
                \FOR {each episode $n=1$ to $N$}
                \STATE Sample initial state $s_0^{(n)}$ from initial distribution $S_0;$
                \STATE Sample $\upsilon^{(n)}$ from normal distribution $\mathcal{N}(\upsilon^*,\sigma^2);$
                \STATE Simulate episodes from $(s=s^{(n)}_0,y=0,\upsilon=y^{(n)}_0)$ under $\pi^{\theta};$ 
                \STATE Put every $\left(s^{(t,n)},y^{(t,n)},a^{(t,n)},s^{(t+1,n)},y^{(t+1,n)},\upsilon^{(n)}\right)$ into $\mathcal{M};$
                \ENDFOR
                \STATE Freeze $\tilde{\pi} = \pi^{\theta};$
                \STATE Estimate $V^{\phi}$ using $\tilde{\pi}:$
                \STATE \qquad Algorithm \ref{alg:alg2} $(V^{\phi},\tilde{\pi},\mathcal{M},K,B);$
                \STATE Freeze $\tilde{V} = V^{\phi};$ 
                \STATE Update $\pi^{\theta}$ using $\tilde{V}:$
                \STATE \qquad Algorithm \ref{alg:alg3} $(\tilde{V},{\pi}^{\theta},\mathcal{M},K,B);$
                \STATE Freeze $\tilde{V} = V^{\phi}$ and $\tilde{\pi} = \pi^{\theta};$                  \IF{$c\text{ mod }L=0$}
                \STATE Update $\upsilon^*$ using $\tilde{V}:$
                \STATE \qquad Algorithm \ref{alg:alg4} $(\tilde{V},K,M);$
                \STATE Reset $c=0;$
                \STATE Decrease $L$ and $\sigma^2;$
                \ENDIF
                \STATE Add counter $c$ by $1;$
                \STATE Clear buffer $\mathcal{M} = \emptyset$;  
            \ENDFOR
            \STATE \textbf{Output:} $\pi^{\theta}$ and $V^{\phi}.$ 
	\end{algorithmic}
\end{algorithm}

\subsubsection{Expected Shortfall (ES) Case}
$\ES$ is widely used in risk-sensitive RL (see \cite{bauerle2011markov}, \cite{chow2015risk}, \cite{tamar2015optimizing}, \cite{kashima2007risk}, \cite{du2022provably}, \cite{godbout2021acrel}). Our approach can handle the $\ES$ objective by taking $f(y,\upsilon)=\upsilon+(y-\upsilon)^+$ and $h(x,\upsilon)=x.$ We present here that under the $\ES$ objective, the sampling method described above can be further simplified to avoid the iterative use of SGD to solve \eqref{ystar}. For any initial $\upsilon^*_0$ and parameters $\phi_0$ and $\theta_0,$ we sample $\upsilon$ and update the parameters according to the following process: in stage $n\ (n\geq 1),$ we 
\begin{equation}
    \begin{aligned}
    & \text{Sample } \upsilon \sim \mathrm{N} \left(\upsilon^*_{n-1},\sigma_n^2\right)\text{ and generate trajectories}; \notag\\
    &  \text{Perform $L(n)$ Critic-steps and get $\phi_n$};\notag\\
    &  \text{Perform $L(n)$ Actor-steps and get $\theta_n$};\notag\\
    &  \text{Calculate } \upsilon^*_n \text{ as the }\alpha\text{-quantile of simulated total costs}.\notag
\end{aligned}
\end{equation}

Before discussing the theoretical convergence of this method, we introduce two notations. Given a policy $\pi\in\tilde{\Pi}$, we denote 
\begin{equation}
    F_{\pi}(y) := \mathbb{P}_{\pi}\left.\left(\sum_{t = 0}^{T-1} c(S_t, A_t,S_{t+1})\leq y\right|Y_0=0\right)\notag
\end{equation}
  and 
  \begin{equation}
      q_{\alpha}(F_{\pi}) := \inf\left\{y:F_{\pi}(y)\geq\alpha\right\}.\notag
  \end{equation}
  
\begin{proposition}\label{thm:limit}
Take $f(y,\upsilon) = \upsilon+\frac{1}{1-\alpha}(y-\upsilon)^+$ and $h(x,\upsilon)=x.$  For any $\epsilon>0,$ $\upsilon_1\in\Upsilon,$ define 
\begin{equation}
    \upsilon_{n+1} = q_{\alpha} (F_{\pi_{n,\epsilon}}).\notag
\end{equation}
It holds that $\lim_{n\to\infty}\E [V^{\pi_{n,\epsilon}}_{0,\upsilon_{n+1}}(S_0,0)]$ exists finitely. Furthermore, if  there exists some $\Delta>0$ such that for any $\upsilon'\in\Upsilon$ and $\pi\in\tilde{\Pi},$ the following two conditions hold:
\begin{itemize}
    \item[(1)] for any $\upsilon\in\arg\min_{\upsilon\in\Upsilon}\E [V^{\pi}_{0,\upsilon}(S_0,0)],$ we have
\begin{equation}
    \left|F_{\pi}(\upsilon')-F_{\pi}(\upsilon)\right| \geq (1-\alpha)\Delta\left|\upsilon'-\upsilon\right|;\notag
\end{equation}
    \item[(2)] for any $\gamma>0,$ $\upsilon\in\Upsilon,$ there exists $\epsilon>0$ such that
\begin{equation}
    \E [V^{\pi_{\upsilon,\epsilon}}_{0,\upsilon'}(S_0,0)]\leq \E [V^{\pi}_{0,\upsilon'}(S_0,0)]+\frac{\Delta}{2}(\upsilon'-\upsilon)^2+\gamma,\notag
\end{equation} for any $ \pi_{\upsilon,\epsilon}\in \left\{\pi\in\tilde{\Pi}:\E [V^{\pi}_{0,\upsilon}(S_0,0)]-\E[ V^{*}_{0,\upsilon}(S_0,0)]<{\epsilon}\right\},$
\end{itemize} then we have 
\begin{equation}
   \lim_{n\to\infty,\epsilon\to0} \E [V^{\pi_{n,\epsilon}}_{0,\upsilon_{n+1}}(S_0,0)] = \min_{\upsilon\in\Upsilon}\E [V^{*}_{0,\upsilon}(S_0,0)].\notag
\end{equation}
\end{proposition}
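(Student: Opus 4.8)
The plan is to derive this proposition as the Expected Shortfall specialisation of Theorem~\ref{thm:am_converge}, taking $h(x,\upsilon)=x$ and the discrepancy $\delta(\upsilon',\upsilon)=\frac{\Delta}{2}(\upsilon'-\upsilon)^2$. Under this choice the first assertion---finite convergence of $\E[V^{\pi_{n,\epsilon}}_{0,\upsilon_{n+1}}(S_0,0)]$---follows verbatim from the unconditional part of Theorem~\ref{thm:am_converge}, since the two schemes share the same $\pi_{n,\epsilon}$. What must be checked before invoking the conditional part is that the update $\upsilon_{n+1}=q_\alpha(F_{\pi_{n,\epsilon}})$ agrees with the minimiser $\arg\min_{\upsilon\in\Upsilon}\E[V^{\pi_{n,\epsilon}}_{0,\upsilon}(S_0,0)]$ used there. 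Writing $Y=\sum_{t=0}^{T-1}c(S_t,A_t,S_{t+1})$, which is bounded because $c$ is, we have $\E[V^{\pi}_{0,\upsilon}(S_0,0)]=\E_{\pi}[\upsilon+\frac{1}{1-\alpha}(Y-\upsilon)^+\mid Y_0=0]$; this is convex in $\upsilon$ with right derivative $\frac{F_{\pi}(\upsilon)-\alpha}{1-\alpha}$, so it decreases while $F_{\pi}(\upsilon)<\alpha$ and increases once $F_{\pi}(\upsilon)>\alpha$, whence its left-most minimiser is exactly $q_\alpha(F_{\pi})=\inf\{y:F_{\pi}(y)\ge\alpha\}$. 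This Rockafellar--Uryasev identity aligns the ES iteration with the abstract one.

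The crux is to convert the distributional hypothesis (1) into the quadratic-growth form demanded by condition (1) of Theorem~\ref{thm:am_converge}. Fix $\pi\in\tilde{\Pi}$ and set $\upsilon=q_\alpha(F_{\pi})$. For $\upsilon'>\upsilon$ and $t\in(\upsilon,\upsilon')$ the definition of the quantile gives $F_{\pi}(t)\ge\alpha$, while hypothesis (1) gives $F_{\pi}(t)-\alpha\ge F_{\pi}(t)-F_{\pi}(\upsilon)\ge(1-\alpha)\Delta(t-\upsilon)$ (using $F_{\pi}(\upsilon)=\alpha$ in the regular case); integrating the derivative formula then yields $\E[V^{\pi}_{0,\upsilon'}(S_0,0)]-\E[V^{\pi}_{0,\upsilon}(S_0,0)]\ge\frac{\Delta}{2}(\upsilon'-\upsilon)^2$, and the mirror-image computation covers $\upsilon'<\upsilon$. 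This is precisely condition (1) of Theorem~\ref{thm:am_converge} for $\delta(\upsilon',\upsilon)=\frac{\Delta}{2}(\upsilon'-\upsilon)^2$. After the substitution $h(x,\upsilon)=x$, condition (2) of the proposition is word-for-word condition (2) of Theorem~\ref{thm:am_converge} for the same $\delta$. With both hypotheses in force, Theorem~\ref{thm:am_converge} gives $\lim_{n\to\infty,\epsilon\to0}\E[V^{\pi_{n,\epsilon}}_{0,\upsilon_{n+1}}(S_0,0)]=\min_{\upsilon\in\Upsilon}\E[V^{*}_{0,\upsilon}(S_0,0)]$, the right-hand minimum being attained by Proposition~\ref{propmin}.

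I expect the real difficulty to lie in the quadratic-growth step when $F_{\pi}$ is irregular at the quantile. If $F_{\pi}$ has a flat piece or a jump across level $\alpha$, then $F_{\pi}(q_\alpha)$ may strictly exceed $\alpha$ and the clean equality $F_{\pi}(\upsilon)=\alpha$ breaks; I would dispose of this by observing that hypothesis (1), being a two-sided lower Lipschitz bound on $F_{\pi}$ about every minimiser, already excludes flat pieces, and that a jump only makes the growth locally linear so the quadratic bound survives after splitting $F_{\pi}(t)-\alpha=(F_{\pi}(t)-F_{\pi}(q_\alpha))+(F_{\pi}(q_\alpha)-\alpha)$ and discarding the nonnegative second term on the increasing side. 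The rest is bookkeeping already internal to Theorem~\ref{thm:am_converge}: the single $\epsilon$ driving the $\epsilon/n^2$ tolerances of $\pi_{n,\epsilon}$ is matched against the $\epsilon$-selection in condition (2) there, so no new estimate beyond the reduction is required.
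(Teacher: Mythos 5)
Your overall route is exactly the paper's: the paper first proves Lemma~\ref{p_qa} (the Rockafellar--Uryasev identity you rederive, including the derivative formula $\frac{F_\pi(\upsilon)-\alpha}{1-\alpha}$), and then proves the proposition in a single sentence by taking $\delta(\upsilon_1,\upsilon_2)=\frac{\Delta}{2}(\upsilon_1-\upsilon_2)^2$ in Theorem~\ref{thm:am_converge}. Your identification of the quantile update with the abstract $\arg\min$ update, your deduction of the first assertion from the unconditional part of that theorem, and your conversion of hypothesis (1) into the quadratic-growth condition in the regular case $F_\pi(q_\alpha(F_\pi))=\alpha$ are all correct, and in fact supply the details that the paper leaves implicit.

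The gap is in your disposal of the irregular case, and it is a real one. Suppose $F_\pi$ has an atom at $\upsilon=q_\alpha(F_\pi)$ with $F_\pi(\upsilon)>\alpha$ (at a minimizer one only knows $F_\pi(\upsilon^-)\le\alpha\le F_\pi(\upsilon)$). Your split-and-discard trick is sound on the increasing side, where $F_\pi(t)-\alpha\ge F_\pi(t)-F_\pi(\upsilon)$; but on the decreasing side $t<\upsilon$ the integrand is $\frac{\alpha-F_\pi(t)}{1-\alpha}$ and the same split reads $\alpha-F_\pi(t)=\left(F_\pi(\upsilon)-F_\pi(t)\right)-\left(F_\pi(\upsilon)-\alpha\right)$: the jump term now enters with a negative sign and cannot be discarded. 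Hypothesis (1) can then hold purely on account of the jump while the quadratic bound fails. Concretely, take $\alpha=1/2$ and $\Delta=2$ (so $(1-\alpha)\Delta=1$), $\Upsilon=[-0.41,0.1]$, and let the total cost under some $\pi$ have CDF with density $0.01$ on $[-0.41,0)$, an atom of mass $0.41$ at $0$ (so $F_\pi(0^-)=0.49$, $F_\pi(0)=0.9$), and density $1$ on $(0,0.1]$. Then $\upsilon=0=q_{1/2}(F_\pi)$ is the unique minimizer and hypothesis (1) holds at every $\upsilon'\in\Upsilon$ (on the left, $|F_\pi(\upsilon')-F_\pi(0)|=0.41+0.01|\upsilon'|\ge|\upsilon'|$), yet
\begin{equation}
\E\left[V^{\pi}_{0,-0.41}(S_0,0)\right]-\E\left[V^{\pi}_{0,0}(S_0,0)\right]=\int_{-0.41}^{0}\frac{0.5-F_\pi(t)}{0.5}\,\d t\approx 0.0099<0.168\approx\frac{\Delta}{2}(0.41)^2,\notag
\end{equation}
so condition (1) of Theorem~\ref{thm:am_converge} with the quadratic $\delta$ is violated. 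Hence your claim that ``the quadratic bound survives'' a jump is false; closing the argument requires either an extra no-atom assumption ($F_\pi(q_\alpha(F_\pi))=\alpha$ for all $\pi$), or restating hypothesis (1) relative to the level $\alpha$ rather than to $F_\pi(\upsilon)$, i.e., $|F_\pi(\upsilon')-\alpha|\ge(1-\alpha)\Delta|\upsilon'-\upsilon|$, which integrates directly to quadratic growth on both sides. To be fair, the paper's own one-sentence proof is silent on exactly this point, so your write-up exposes, but does not close, a gap that the paper glosses over.
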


Based on the above result, we can implement the AC algorithm for the ES objective similar to Algorithm \ref{alg:alg1}, where we can update the auxiliary variable $\upsilon^*$ as the $\alpha$-quantile of 
$$\left\{\sum^{T-1}_{t=0}c(s^{(t,n)},a^{(t,n)},s^{(t+1,n)}):1\leq n\leq N\right\}.
$$

\section{Application: Statistical Arbitrage}\label{sec:app}
In this section, we present some numerical experiments in an algorithmic trading setting. The agent starts each session with a random initial inventory $Q_0$ and aims to trade quantities of an asset at each period $t \in \mathcal{T}$. The asset's price process is denoted by $P_t$. At each period, the agent observes the asset's price $P_t =  p_t \in \mathbb{R}$ and the inventory $Q_t =  q_t \in (-q_{\text{max}}, q_{\text{max}})$ and chooses a trading strategy $ A_t = a_t \in (-a_{\text{max}}, a_{\text{max}})$. The wealth process $ \left\{X_t\right\}_{t=0,1.\ldots,T}$ is governed by the MDP process:
$$
\begin{cases}
X_0 = 0, \\
X_t = X_{t-1} + Q_{t} (P_t-P_{t-1}) - \varphi A_{t-1}^2, & t = 1, \ldots, T-1, \\
X_T = X_{T-1} + Q_{T} (P_T-P_{T-1}) - \varphi A_{T-1}^2  - \psi Q_T^2,
\end{cases}
$$
with coefficients $ \varphi = 0.005$ and $\psi = 0.5$ representing the transaction costs and the terminal penalty incurred in trading, respectively. The cost is defined by $\text{cost}_t = X_{t} - X_{t+1}.$ Here, we choose $ T = 5 $, $ q_{\text{max}} = 5 $ and $ a_{\text{max}} = 2$. 
Let us assume that the asset price follows an Ornstein-Uhlenbeck process:
$$
\d P_t = \kappa (\mu- P_t) \d t + \sigma \d W_t,
$$
with $ \kappa = 2 $, $ \mu= 1 $, $ \sigma = 0.2 $, and $\{W_t\}_{0\leq t \leq T} $ is a standard Brownian motion. The initial distribution of $P_0$ is normally distributed:
$$
P_0 \sim \mathcal{N}\left(\mu,\frac{8\sigma^2}{{\kappa}}\right).
$$
The initial distribution of $Q_0$ follows a uniform distribution on $(-q_{\text{max}}, q_{\text{max}})$ and is independent with $P_0.$ The risk-sensitive agent aims to optimize the risk measure
\begin{equation}
\label{taarget}
    \quad \inf_{\pi \in \Pi} \rho \left(  \sum_{t = 0}^{T-1}\text{cost}_t \right).
\end{equation}
Similar to the discussion in \cite{cheridito2009time}, the dynamic optimization problem in \eqref{taarget} is time-inconsistent in general. However, our proposed method is applicable to the static ES objective, in contrast to the method proposed by \cite{CJ2024}, which has a time-consistent recursive risk measure as the optimization objective. 

We implement and compare several RL algorithms with scoring functions/risk measures from the following choices:
\begin{itemize}
    \item RL-mean: a conventional RL model minimizing the expectation of total costs; 
    \item RL-$\mathrm{ES}_{0.8}$\&$\mathrm{ES}_{0.6}:$ our proposed RL model minimizing the $\ES_{\alpha}$ $(\alpha=0.8, \; 0.6)$ value of total costs, i.e. $\rho(Y) = \E [Y|Y>q_{\alpha}(Y)]$; 
    \item RL-$\mathrm{Var}:$ our proposed RL model minimizing the variance of total costs, i.e. $\rho(Y) = \E[(Y-\E [Y])^2]$;
    \item     RL-$\mathrm{Mean}$-$\mathrm{Var}:$ our proposed RL model minimizing the mean-variance utility, i.e. $\rho(Y)=\E [Y] + \lambda \E[(Y-\E [Y])^2],$ where $\lambda $ is chosen as $1$;
    \item RL-OneStep$\mathrm{ES}_{0.8}$\&$\mathrm{ES}_{0.6}:$ the RL model proposed by \cite{CJ2024} with $\ES_{\alpha}$ $(\alpha=0.8, \; 0.6)$ one-step conditional risk measure, i.e. the RL problem is \begin{equation}
    \inf_{\pi\in\Pi} \rho\left(\text{cost}_0+\rho\left(\text{cost}_1+\rho(\cdots+\rho(\text{cost}_{T-1}))\right)\right),
    \end{equation} where $\rho$ is taken $\mathrm{ES}_{0.8}$ and $\mathrm{ES}_{0.6}.$ 
\end{itemize}

\subsection{The Approximation of the Value Function}
Figure \ref{value_function} shows the approximation of the value function in the ES case after several iterations in the Critic-step of training with the learned policy. The subplots correspond to states $s=(0.5,0),(1.0,0)$ and $(1.5,0)$ at time $0,$ respectively. The vertical axis is the value of  $V_{0,\upsilon}(s,0;\theta)$ or $V^{\phi}_{0,\upsilon}(s,0;\theta),$ while the horizontal axis corresponds to a different $\upsilon$-value. Here, the risk measure is $\mathrm{ES}_{0.8}.$ The blue line is the true value estimated by out-of-sample numerical simulation, while the orange line is the approximation given by the critic network. It can be seen that the critic network learns the value function of a given policy efficiently, which is in agreement with Theorem \ref{t1}.
\begin{figure}[h]
    \centering
    \begin{subfigure}{0.32\textwidth}
        \centering
        \includegraphics[width=\textwidth]{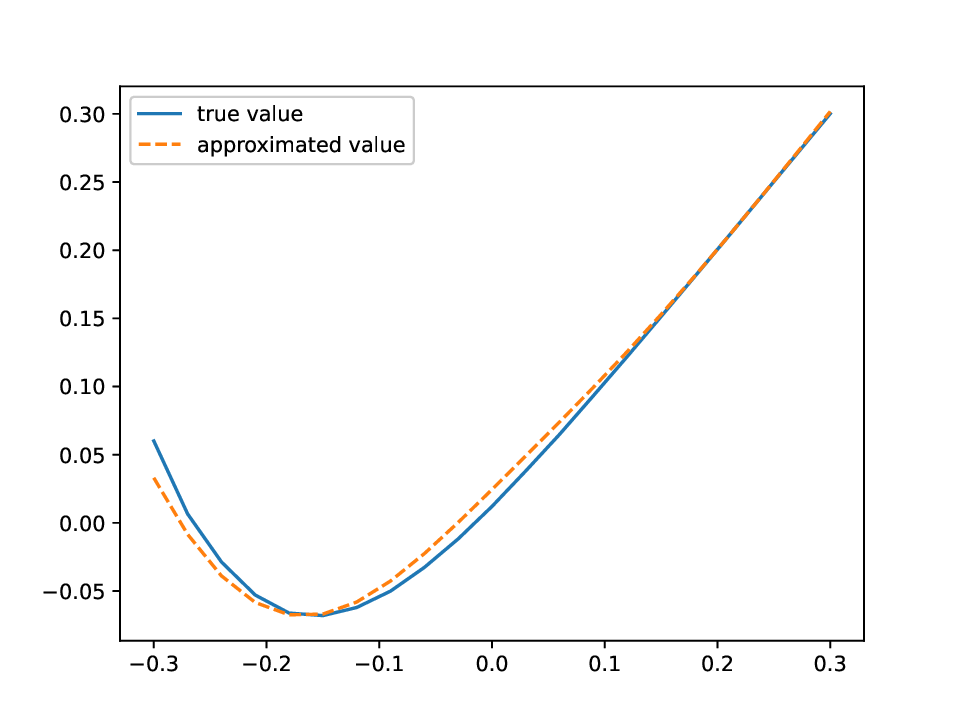}
        \caption{$t=0, \; s=(0.8,0)$}
    \end{subfigure}
    \begin{subfigure}{0.32\textwidth}
        \centering
        \includegraphics[width=\textwidth]{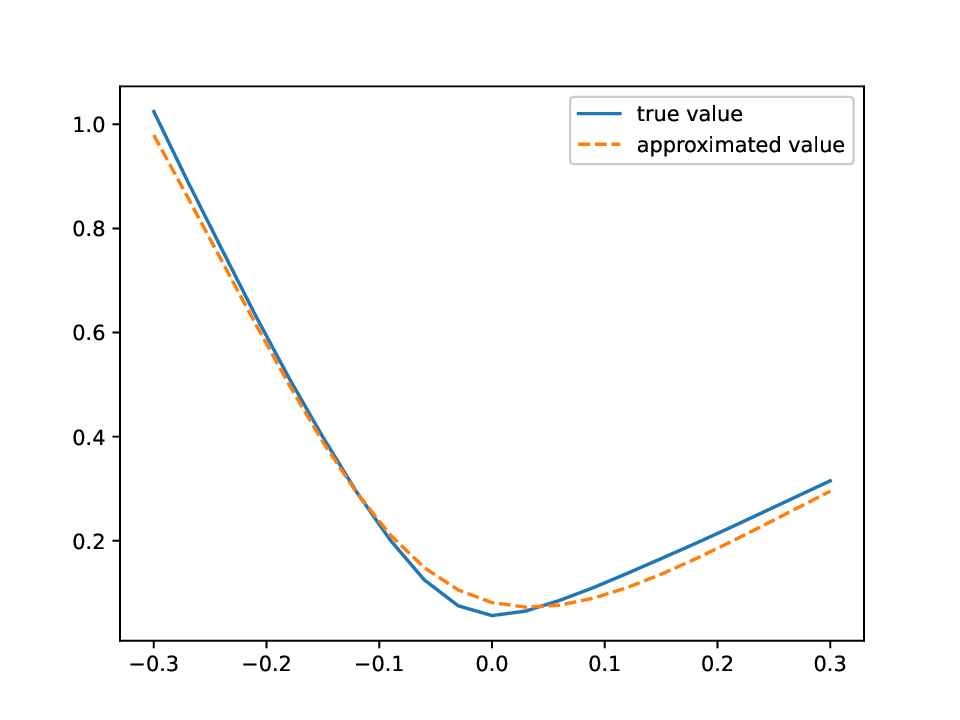}
        \caption{$t=0, \; s=(1.0,0)$}
    \end{subfigure}
\begin{subfigure}{0.32\textwidth}
        \centering
        \includegraphics[width=\textwidth]{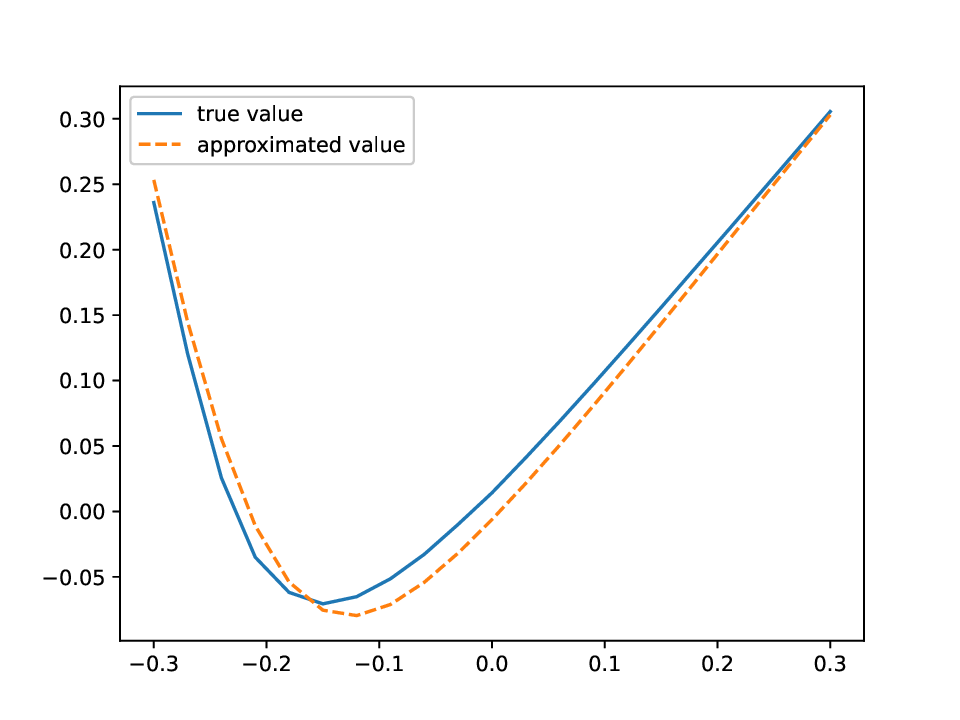}
        \caption{$t=0, \;s=(1.2,0)$}
    \end{subfigure}

    \caption{Approximation of the value function}
    \label{value_function}
\end{figure}

\subsection{Policy Interpretation (Initial and Dynamic)} 

Figure \ref{policyT0} shows a comparison of the learned policy among the five models. In each subplot, the horizontal axis represents the inventory $Q,$ and the vertical axis represents the price $S.$ In general, the agent tends to sell when the price is high and buy when the price is low; it also tends to sell when the inventory is high and buy when the inventory is low. We can distinguish the behaviors of different models by the shape of the buy-sell boundary. RL-mean is the most aggressive, as the buy-sell boundary is almost flat, indicating that the agent engages in trading when the price deviates slightly from the mean price $S=1.$ The behavior of the RL-$\mathrm{ES}_{\alpha}$ agents is more similar to that of the conventional RL-mean model, except that their buy-sell boundary is steeper, indicating that as their inventory accumulates, the threshold for the signal to continue trading increases.

\begin{figure}[h!]
    \centering
    \begin{minipage}{0.3\textwidth}
        \centering
        \includegraphics[width=\textwidth]{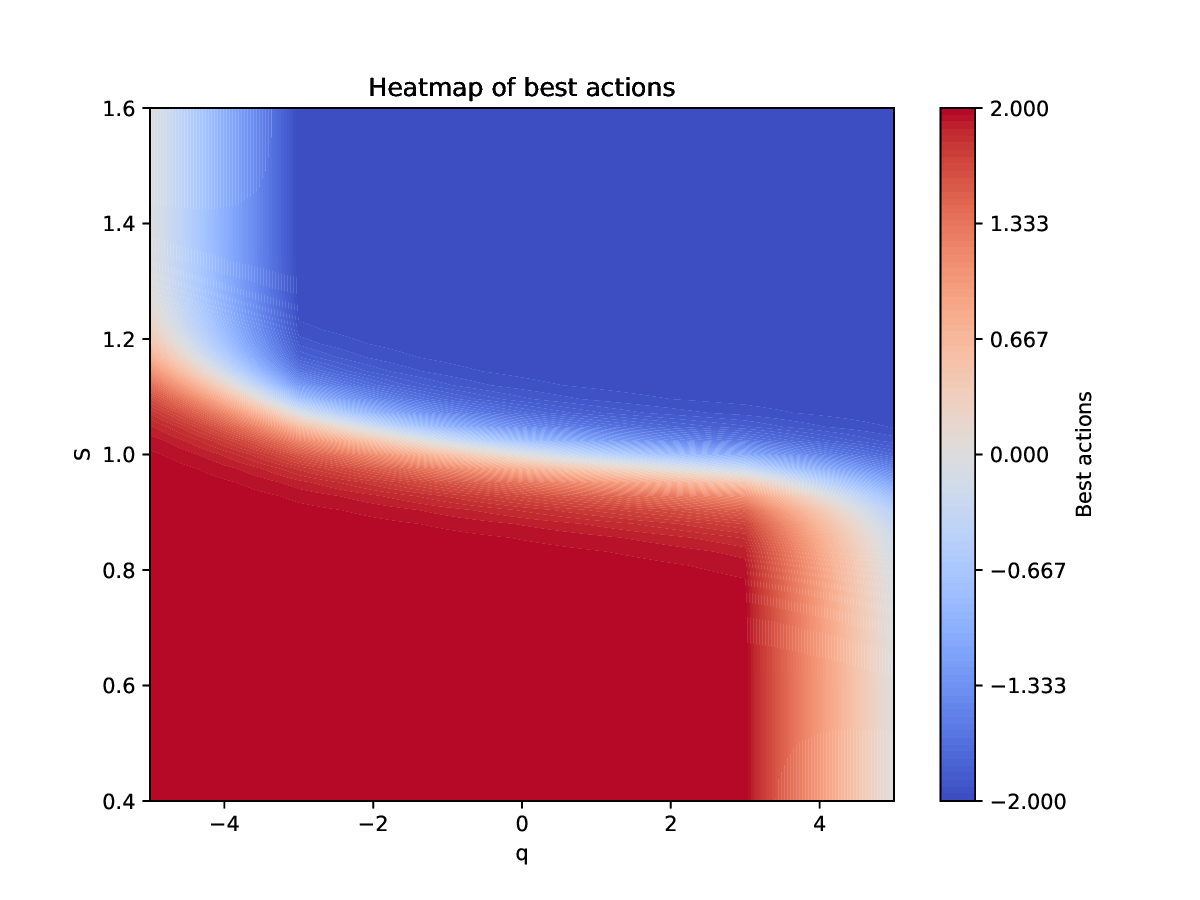}
        \subcaption{RL-mean}\label{fig:image1}
    \end{minipage}\hfill
    \begin{minipage}{0.3\textwidth}
        \centering
        \includegraphics[width=\textwidth]{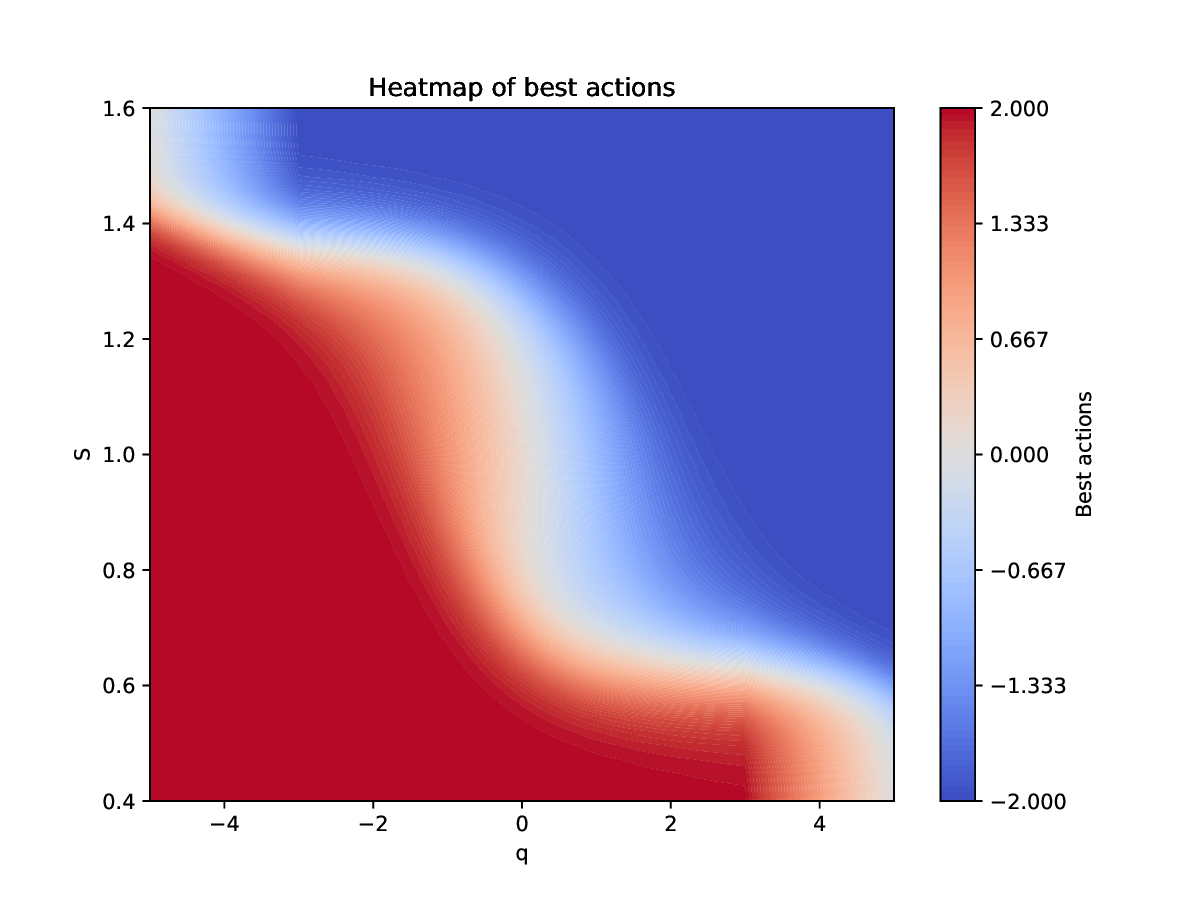}
        \subcaption{RL-OneStep$\mathrm{ES}_{0.8}$}\label{fig:image2}
    \end{minipage}\hfill
    \begin{minipage}{0.3\textwidth}
        \centering
        \includegraphics[width=\textwidth]{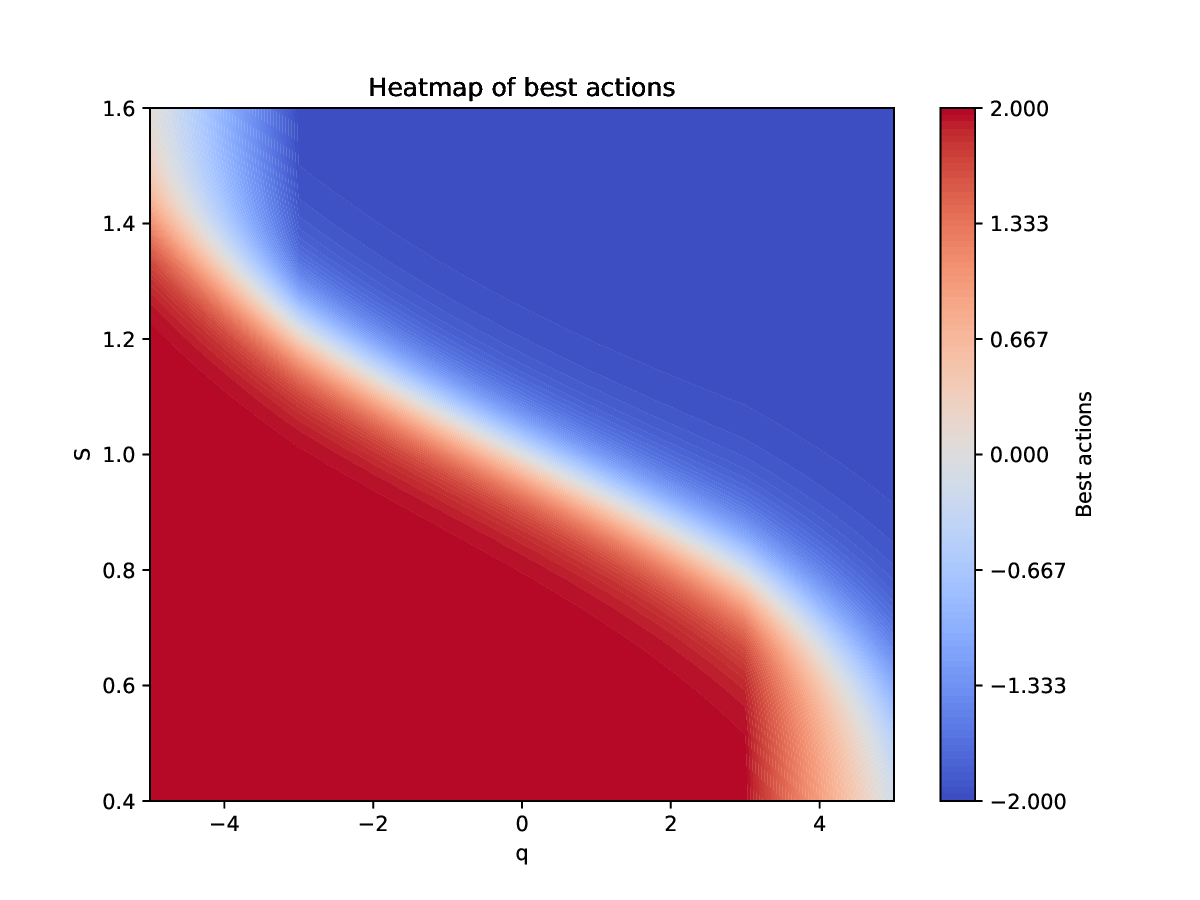}
        \subcaption{RL-$\mathrm{ES}_{0.8}$}\label{fig:image3}
    \end{minipage}

    \vspace{1em}

    \begin{minipage}{0.3\textwidth}
        \centering
        \includegraphics[width=\textwidth]{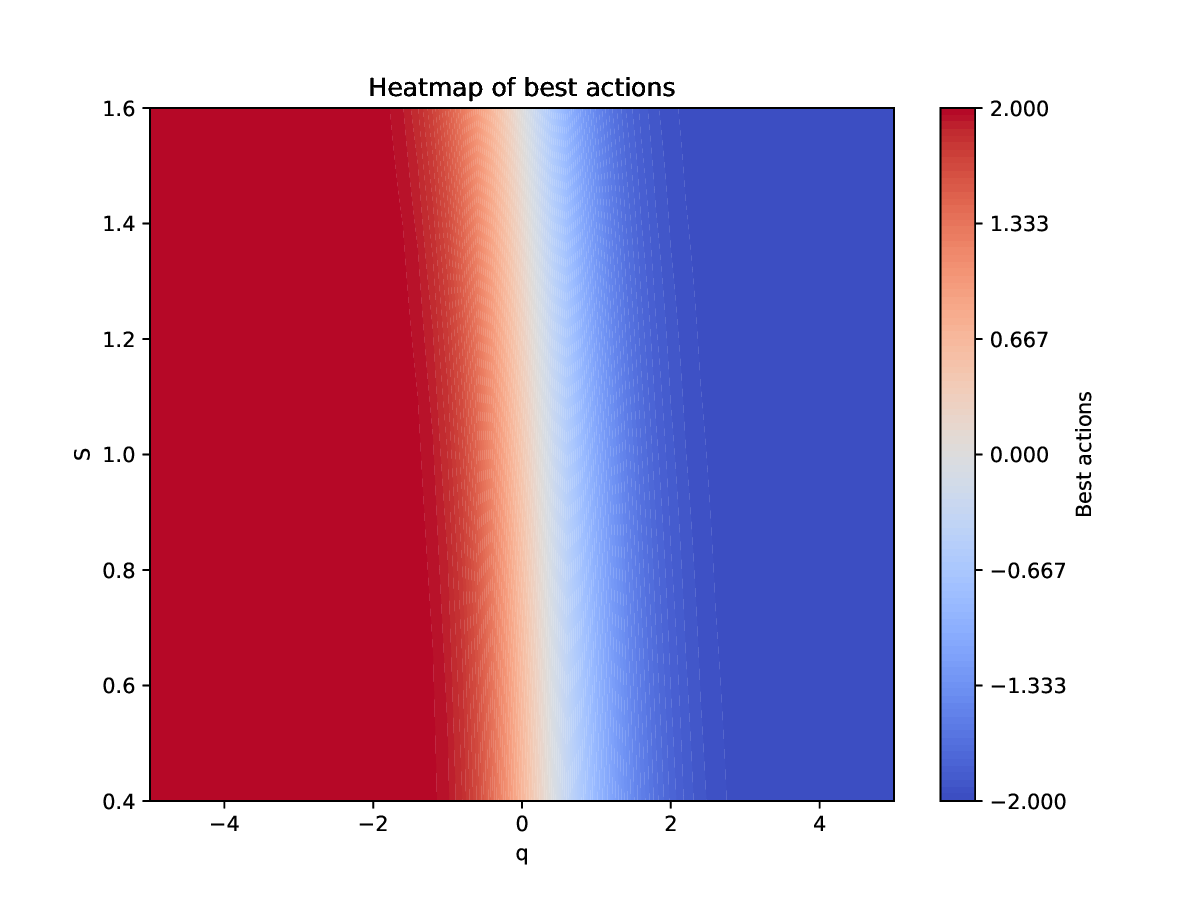}
        \subcaption{RL-$\mathrm{Var}$}\label{fig:image4}
    \end{minipage}\hspace{0.05\textwidth}
    \begin{minipage}{0.3\textwidth}
        \centering
        \includegraphics[width=\textwidth]{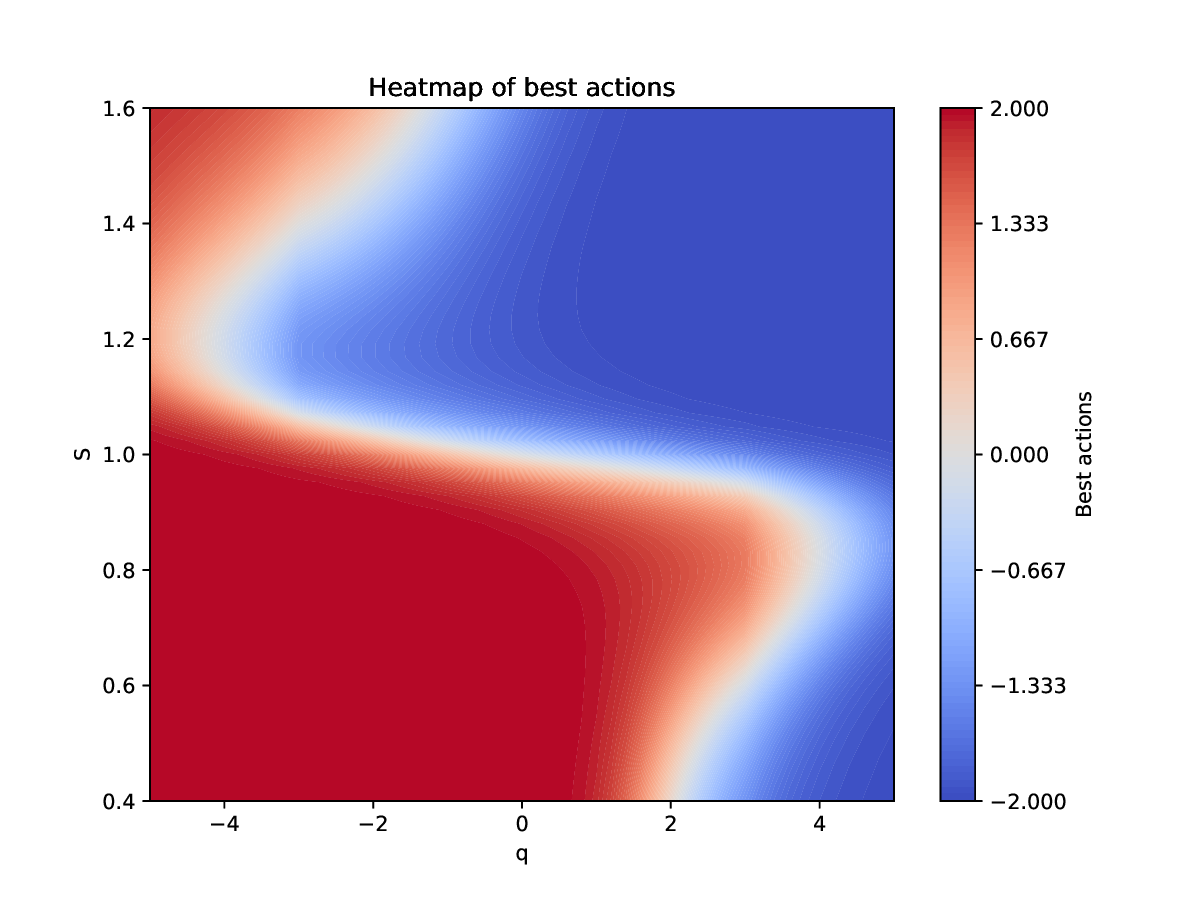}
        \subcaption{RL-$\mathrm{Mean}$-$\mathrm{Var}$}\label{fig:image5}
    \end{minipage}
    \caption{Learned policy at time $0$}
    \label{policyT0}
\end{figure}

Figure \ref{polict_after0} shows the policy learned after time $0.$ Since the proposed method treats $y$ as a new (augmented) state, to understand the learned strategy, we propose a three-dimensional heat map illustration, which represents the action on different $(y, S, q)$ triples. Similar to time $0,$ the agent still suggests buying when the price is low and selling when the price is high again, but the policy becomes conservative as the value of $y$ increases. This can be illustrated with an example at $t=2$ in Figure \ref{polict_after0}, noting that when the price is very low (close to 0.5), with the same inventory $Q=2,$ the agent buys when the $y$ value is smaller and tends to maintain the position when the $y$ value is larger. Another point is that by comparing $t=1$ and $t=4$, one can see that the policy becomes conservative over time. At the time $4$, the action is hardly affected by the price and just tries to close the position as much as possible, most likely due to the terminal-time penalty. 

\begin{figure}[h!]
    \centering
    \begin{minipage}{0.45\textwidth}
        \centering
        \includegraphics[width=\textwidth]{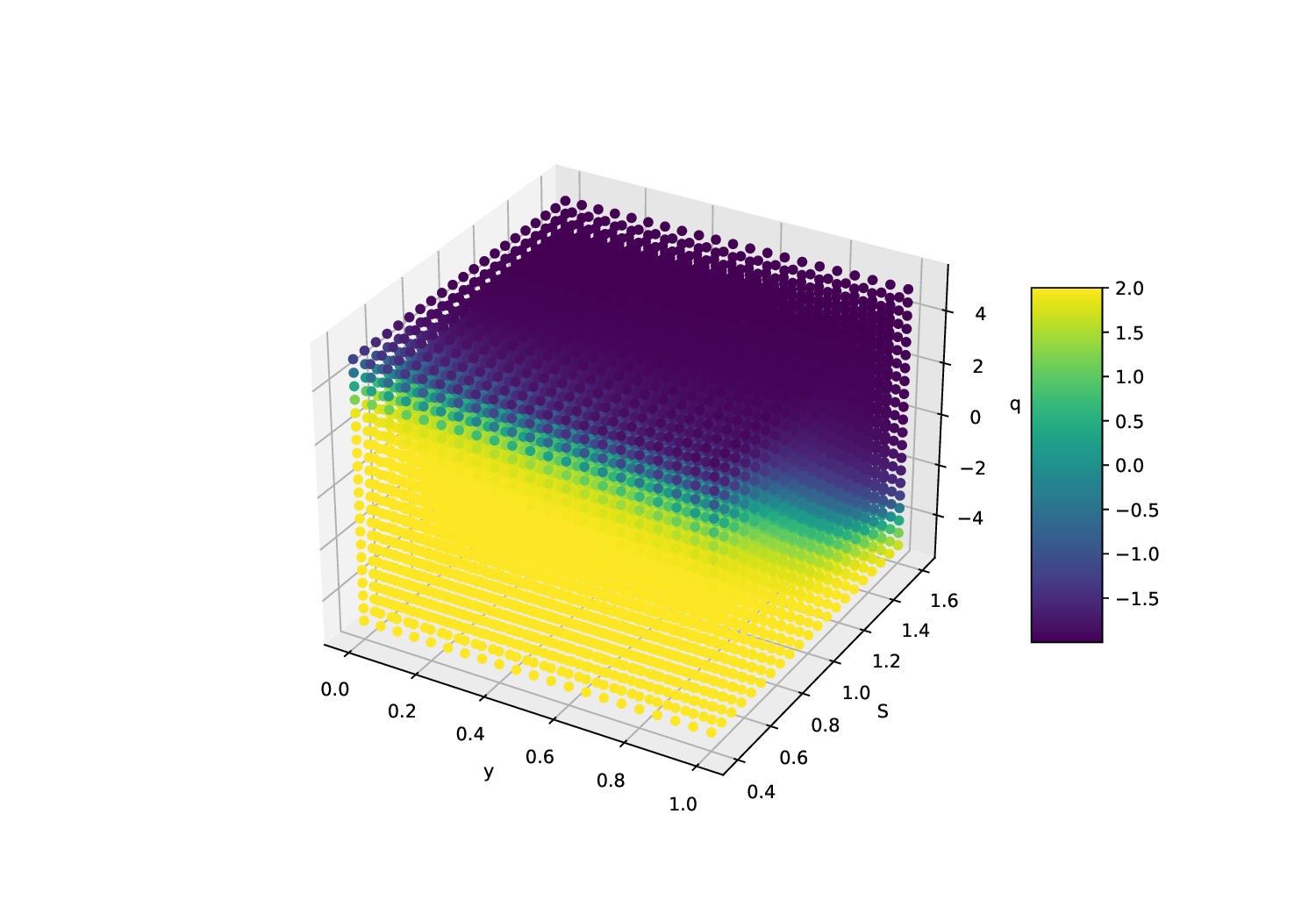}
        \subcaption{$t=1$}\label{fig:image1}
    \end{minipage}\hspace{0.05\textwidth} 
    \begin{minipage}{0.45\textwidth}
        \centering
        \includegraphics[width=\textwidth]{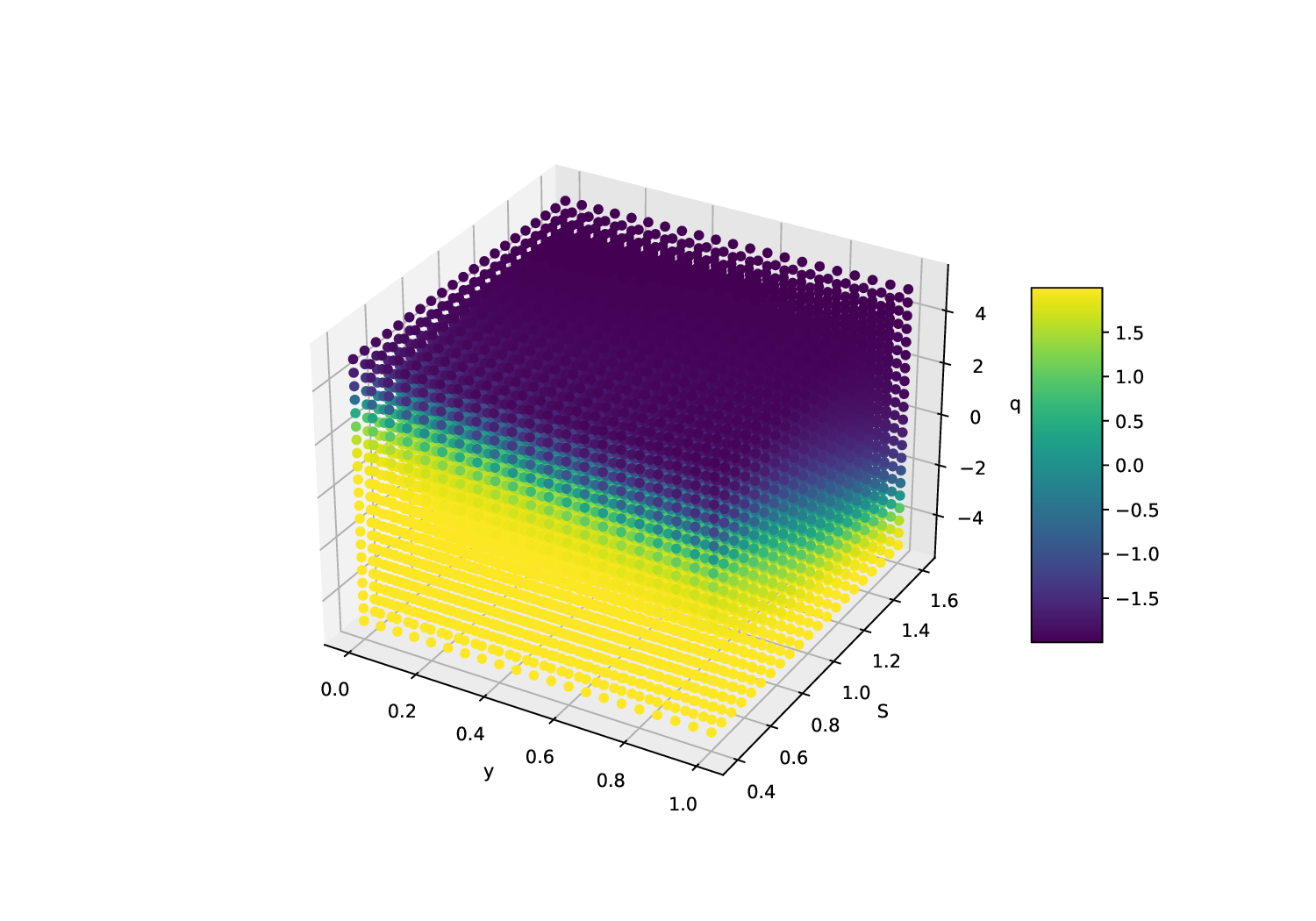}
        \subcaption{$t=2$}\label{fig:image2}
    \end{minipage}

    \vspace{1em} 

    \begin{minipage}{0.45\textwidth}
        \centering
        \includegraphics[width=\textwidth]{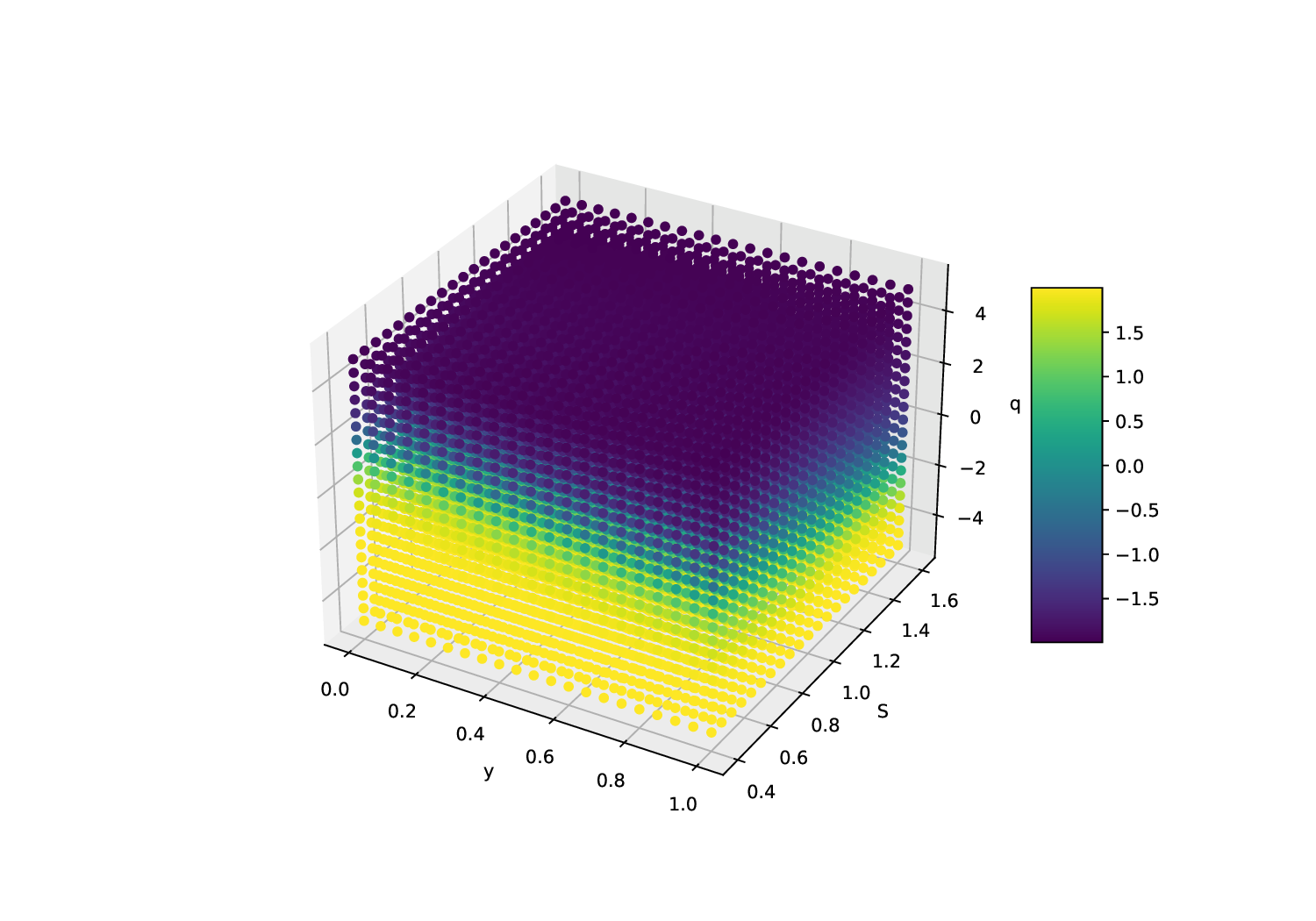}
        \subcaption{$t=3$}\label{fig:image3}
    \end{minipage}\hspace{0.05\textwidth}
    \begin{minipage}{0.45\textwidth}
        \centering
        \includegraphics[width=\textwidth]{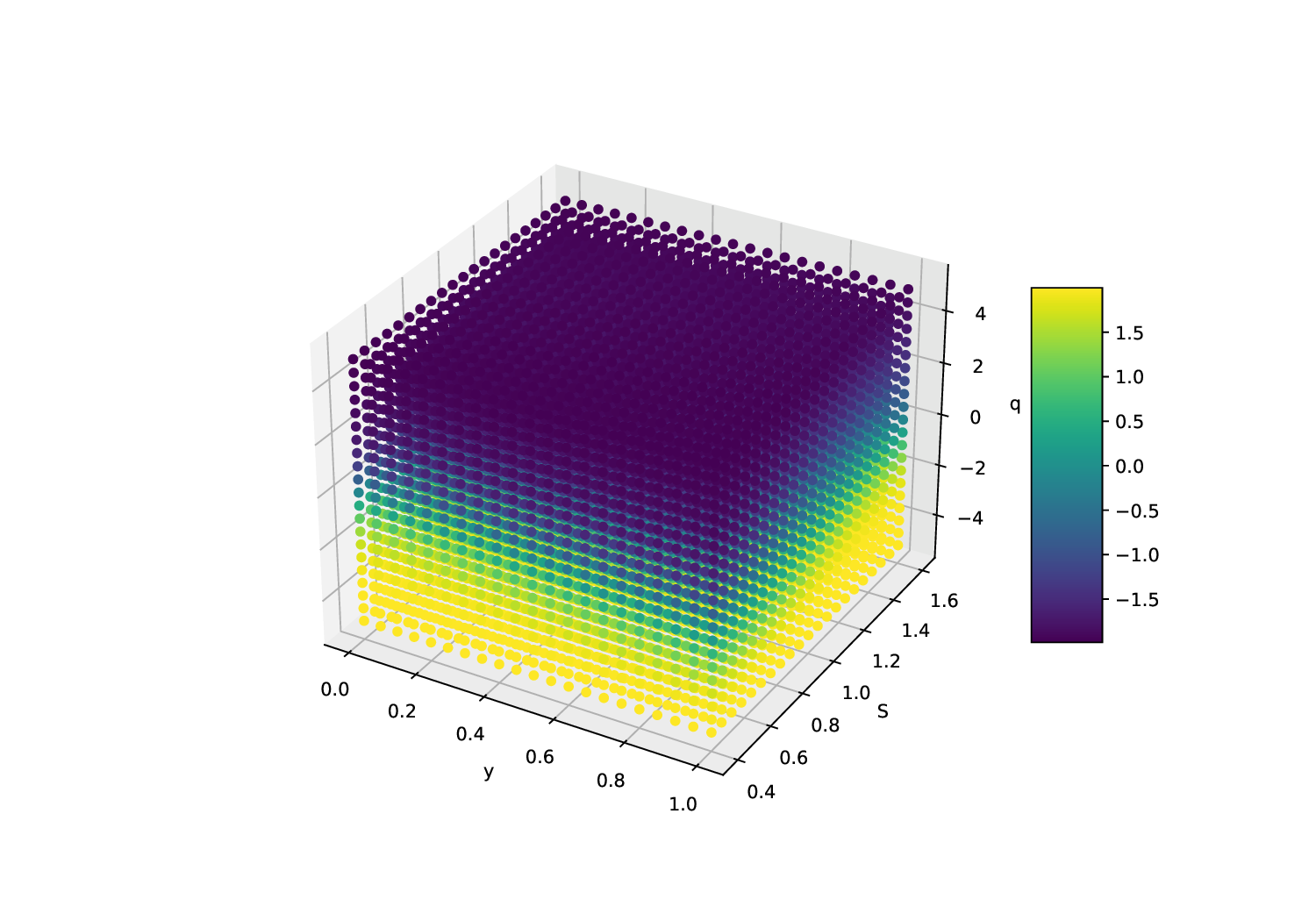}
        \subcaption{$t=4$}\label{fig:image4}
    \end{minipage}
    \caption{Learned policy as a function of time (RL-$\mathrm{ES}_{0.8}$ )}
    \label{polict_after0}
\end{figure}

\subsection{Performance comparisons}

We analyze the total costs' distribution under the policy of each model based on 300,000 out-of-sample simulations. The results are shown in Table \ref{table1}, where the value of the statistics of the total costs is shown. RL-mean achieves the lowest mean cost (-0.524), but its $\mathrm{ES}_{0.8}$ (0.253) and $\mathrm{ES}_{0.6}$ (0.086) values are not necessarily the lowest, reflecting a trade-off between optimizing for the mean and risk measures. RL-$\mathrm{ES}_{0.8}$ achieves the lowest $\mathrm{ES}_{0.8}$ of the total costs (0.215), which aligns with its optimization objective. RL-$\mathrm{ES}_{0.6}$ shows the lowest $\mathrm{ES}_{0.6}$ of the total costs (0.079), reflecting its focus on mitigating risks in the worst 40\% of outcomes. Our proposed method better controls the Expected Shortfall value compared to the other two methods. Under different $\alpha$'s, the learned policies achieve the lowest $\mathrm{ES}_{\alpha}$ values, respectively. The method we propose has a clear advantage in addressing \eqref{taarget} and serves as a valuable tool for controlling tail risk.

\begin{table}
\centering
\caption{Summary Statistics of Out-sample Total Costs}
\label{table1}
\begin{tabular}{lccccc}
\toprule
    & Mean    & $\mathrm{ES}_{0.8}$  & $\mathrm{ES}_{0.6}$& $\mathrm{Var}$ &$\mathrm{Mean}$-$\mathrm{Var}$ utility\\
\midrule
RL-mean          & -\textbf{0.524}& 0.253 & 0.086 & 0.523& -0.002\\
RL-OneStep$\mathrm{ES}_{0.8}$ & -0.256 & 0.251 & 0.129 & 0.335& 0.080\\
RL-OneStep$\mathrm{ES}_{0.6}$ & -0.422 & 0.224 & 0.101 & 0.463& 0.040\\
RL-$\mathrm{ES}_{0.8}$        & -0.261 & \textbf{0.215}& 0.083 & 0.149& -0.113\\
RL-$\mathrm{ES}_{0.6}$        & -0.395 & 0.225 & \textbf{0.079}& 0.301& -0.092\\
RL-$\mathrm{Var}$        & 0.094& 0.331& 0.232& \textbf{0.030}& 0.124\\
RL-$\mathrm{Mean}$-$\mathrm{Var}$        & -0.272& 0.246& 0.081& 0.122& -\textbf{0.150}\\
\bottomrule
\end{tabular}
\end{table}

\begin{figure}
    \centering
    \includegraphics[width=0.5\linewidth]{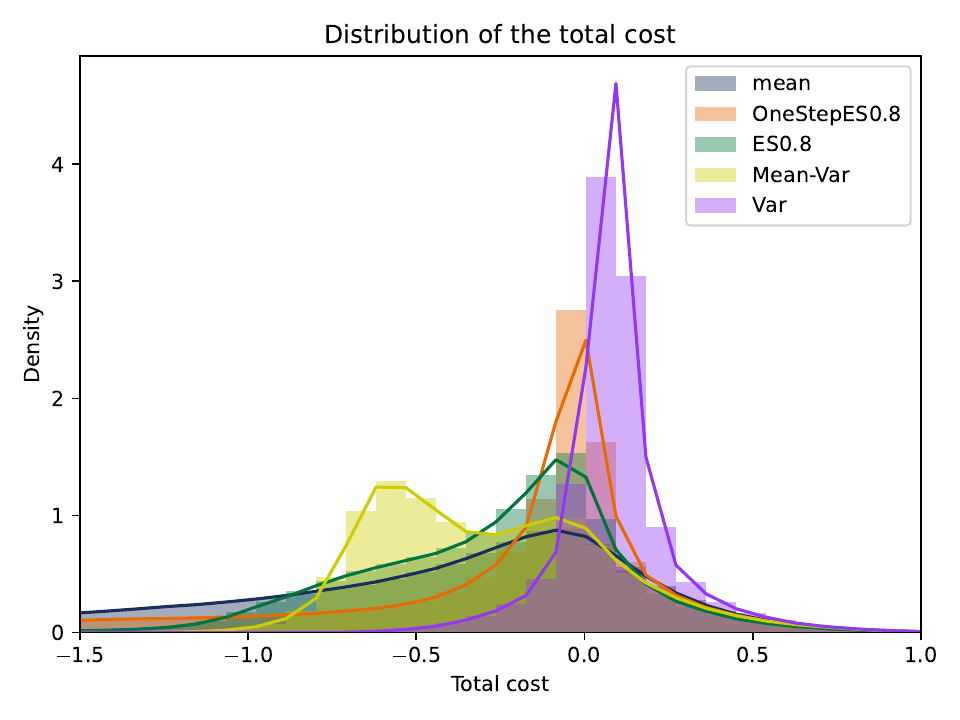}
    \caption{Distribution of the total costs}
    \label{fig:distribution}
\end{figure}

\begin{figure}
    \centering
    \begin{subfigure}{0.45\textwidth}
        \centering
        \includegraphics[width=\textwidth]{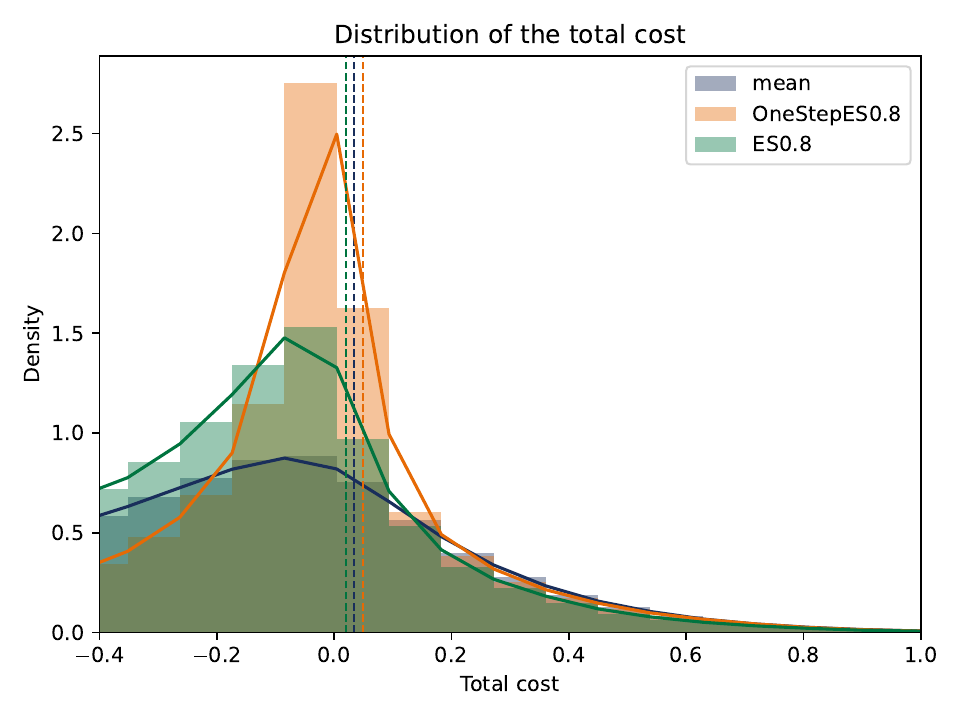}
        \caption{0.8 quantile group} 
    \end{subfigure}
    \hspace{0.05\textwidth} 
    \begin{subfigure}{0.45\textwidth}
        \centering
        \includegraphics[width=\textwidth]{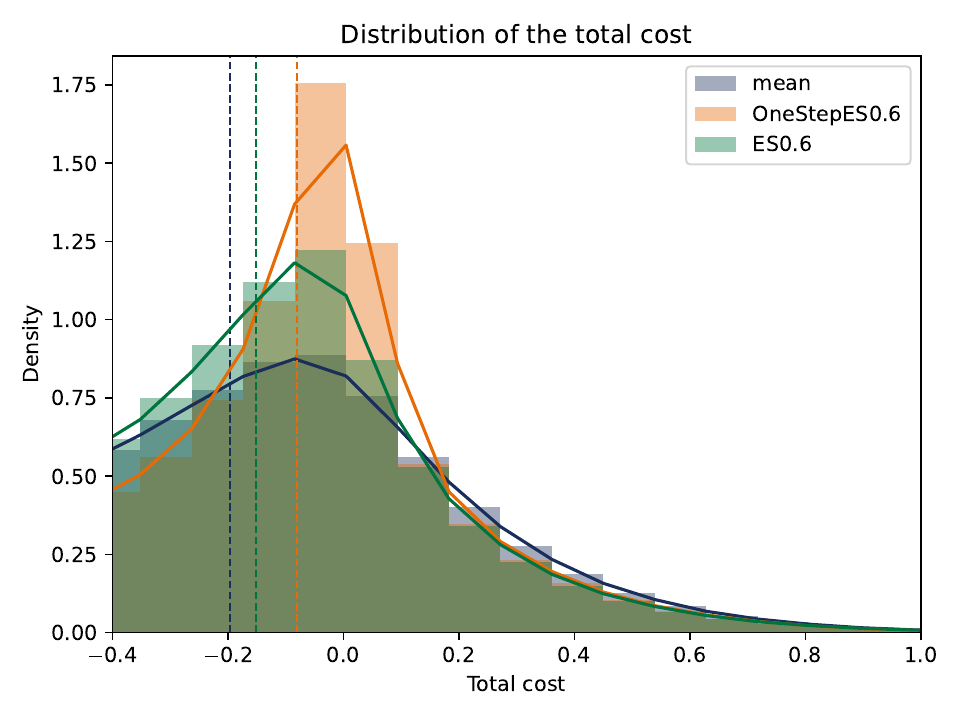}
        \caption{0.6 quantile group} 
    \end{subfigure}
    \caption{Tail distribution of the total costs} 
    \label{fig:tail_dis_com}
\end{figure}

Figure \ref{fig:distribution} shows a comparison of estimated distributions of the total costs between the five models. In general, the total costs of all RL-OneStep$\mathrm{ES}_{\alpha}$ models and RL-$\mathrm{ES}_{\alpha}$ models have a smaller variance compared to the result of the RL-mean model, which is due to the risk-sensitive objectives. Though the results of our approach show a greater variance than the results of RL-OneStep$\mathrm{ES}_{\alpha},$  it leads to a thinner tail distribution where the total costs are large. This can be seen more clearly in Figure \ref{fig:tail_dis_com}, in which the two subplots correspond to $\alpha=0.8$ and $\alpha=0.6$ and are compared with RL-mean, respectively. The dashed lines mark the $\alpha$-quantiles of the total costs distributions. 

In conclusion, we hope that our contributions may open a new research avenue by integrating the scoring functions into some risk-sensitive RL tasks, and the proposed algorithm in the present paper may help to motivate more diverse and efficient risk-sensitive RL algorithms.

\section{Proofs}\label{sec:proof}
This section collects all proofs of the results in the previous sections.

\begin{proof}[Proof of Proposition \ref{simpleProposition2}]
We first show that $\upsilon\mapsto g(h(\E [f(Y,\upsilon)],\upsilon))$ is convex. For any $\lambda\in(0,1)$ and $\upsilon_1,\upsilon_2\in\Upsilon$, note that
\begin{align}
   &g( h(\E [f(Y,\lambda\upsilon_1+(1-\lambda)\upsilon_2)],\lambda\upsilon_1+(1-\lambda)\upsilon_2)))\notag\\
   \leq &g( h(\lambda\E [f(Y,\upsilon_1)]+(1-\lambda)\E [f(Y,\upsilon_2)],\lambda\upsilon_1+(1-\lambda)\upsilon_2)))\notag\\
   \leq & \lambda g(h(\E [f(Y,\upsilon_1)],\upsilon_1))+ (1-\lambda) g(h(\E [f(Y,\upsilon_2)],\upsilon_2)).\notag
\end{align}
The first inequality is due to the fact that $\E[ f(Y,\cdot)]$ is convex by Proposition \ref{simpleProposition1}, and the second inequality is because of the convexity of $g(h(\cdot,\cdot))$. As a result, for any $y\in[\underline{y},\bar{y}],$ $g(h( f(y,\upsilon),\upsilon))$ is convex in $\upsilon.$ Moreover, 
\begin{equation}
\label{bound_neq}
g\left(h\left(\min_{y\in[\underline{y},\bar{y}]}f(y,\upsilon),\upsilon\right)\right) \leq g(h(\E[ f(Y,\upsilon)],\upsilon)).
\end{equation}

\textbf{Case 1:}
\begin{equation}
\lim_{\upsilon\to \pm\infty}g\left(h\left(\min_{y\in[\underline{y},\bar{y}]}f(y,\upsilon),\upsilon\right)\right) = \infty.\notag
\end{equation}
In this case, by \eqref{bound_neq} we have
\begin{equation}
     \lim_{\upsilon \to \infty}g(h(\E [f(Y,\upsilon)],\upsilon))=\lim_{\upsilon \to -\infty}g(h(\E [f(Y,\upsilon)],\upsilon))=\infty,\notag
\end{equation}
which implies that $\upsilon\mapsto g(h(\E[ f(Y,\upsilon)],\upsilon))$ has a minimizer. Suppose that the claim in Proposition \ref{simpleProposition2} does not hold. Then there exist random variables $Y_1,Y_2,\ldots,$ and 
real numbers $\upsilon_1,\upsilon_2,\ldots,$ such that: 
\begin{itemize}
\item either $\upsilon_n\to\infty \ (n\to\infty)$ and $\upsilon\mapsto g(h(\E [f(Y_n,\upsilon)],\upsilon))$ is strictly decreasing on $(-\infty,\upsilon_n]$;
\item or $\upsilon_n\to-\infty \ (n\to\infty)$ and $\upsilon\mapsto g(h(\E [f(Y_n,\upsilon)],\upsilon))$ is strictly increasing on $[\upsilon_n,\infty).$  
\end{itemize}
Here we only discuss the first case, as the other case can be handled similarly. Then, we have
\begin{align}
g\left(h\left(\max_{y\in[\underline{y},\bar{y}]}f(y,\upsilon_0),\upsilon_0\right)\right) &\geq g(h(\E [f(Y_n,\upsilon_0)],\upsilon_0))\notag\\
    &\geq g(h(\E [f(Y_n,\upsilon_n)],\upsilon_n))\notag\\
    &\geq g\left(h\left(\min_{y\in[\underline{y},\bar{y}]}f(y,\upsilon_n),\upsilon_n\right)\right),\notag
\end{align}
which implies
\begin{equation}
g\left(h\left(\max_{y\in[\underline{y},\bar{y}]}f(y,\upsilon_0),\upsilon_0\right)\right)\geq \lim_{n\to\infty} g\left(h\left(\min_{y\in[\underline{y},\bar{y}]}f(y,\upsilon_n),\upsilon_n\right)\right) = \infty,\notag
\end{equation}
leading to a contradiction.

\textbf{Case 2:}  $h(\cdot,\cdot)$ is constant with respect to its second argument, i.e., $h(x,\upsilon)=\tilde{h}(x),$ and $f(y,\upsilon) =w(y)+\phi(\upsilon-\psi(y)),$ for some continuous functions $w,\phi,\psi.$ Let 
\begin{align}
\bar{\upsilon}^*     &= \sup_{y\in[\underline{y},\bar{y}]}\left\{\inf\left\{\upsilon\in\mathbb{R}:\tilde{h}(f(y,\upsilon)) = \min_{\upsilon\in\mathbb{R}}\tilde{h}(f(y,\upsilon))\right\}\right\},\notag\\
     \underline{\upsilon}^* &= \inf_{y\in[\underline{y},\bar{y}]}\left\{\sup\left\{\upsilon\in\mathbb{R}:\tilde{h}(f(y,\upsilon)) = \min_{\upsilon\in\mathbb{R}}\tilde{h}(f(y,\upsilon))\right\}\right\}.\notag
\end{align}
Then $\underline{\upsilon}^*,\bar{\upsilon}^*\in\{\infty,-\infty\}\cup\mathbb{R}.$ As $h$ is strictly increasing in the first argument, $\tilde{h}$ must be strictly increasing. 
Then $\upsilon\mapsto f(y,\upsilon)$ and $\upsilon\mapsto\tilde{h}(f(y,\upsilon))$ are strictly decreasing on $( -\infty,\bar{\upsilon}^*),$ and strictly increasing on $( \underline{\upsilon}^*,\infty)$ for any $y\in[\underline{y},\bar{y}].$ 
Thus $\upsilon\mapsto\E [f(Y,\upsilon)]$ is also strictly decreasing on $( -\infty,\bar{\upsilon}^*)$ and strictly increasing on $( \underline{\upsilon}^*,\infty).$ 
Note that $\upsilon\mapsto f(y,\upsilon)$ is convex. We also have that $\upsilon\mapsto f(y,\upsilon)$ is increasing on $(\bar{\upsilon}^*,\infty),$ and decreasing on $( -\infty,\underline{\upsilon}^*),$ so is $\upsilon\mapsto\E [f(Y,\upsilon)].$
We now show that $\bar{\upsilon}^*<\infty$ and $\underline{\upsilon}^*>-\infty.$ We only need to prove $\bar{\upsilon}^*<\infty$ and the other is similar. Assume that $\bar{\upsilon}^*=\infty.$ Then there exist $y_1,y_2,\ldots,$ and $\upsilon_1,\upsilon_2,\ldots,$ such that 
\begin{equation}
   \upsilon_n =  \inf\left\{\upsilon\in\mathbb{R}:\tilde{h}(f(y_n,\upsilon)) = \min_{\upsilon\in\mathbb{R}}\tilde{h}(f(y_n,\upsilon))\right\}>-\infty,\notag
\end{equation}
and $\upsilon_n\to\infty\ (n\to\infty).$ Note that
\begin{equation}
    \upsilon_n -\psi(y_n) =\inf\{\arg\min_{x\in\mathbb{R}} \phi(x)\},\notag
\end{equation}
which implies $\upsilon_n-\psi(y_n)=\upsilon_1-\psi(y_1).$ Thus we have
\begin{align}
    |\upsilon_n| &\leq |\upsilon_1| + |\psi(y_1)|+|\psi(y_n)|\notag\\
    &\leq  |\upsilon_1| + 2\max_{y\in[\underline{y},\bar{y}]} |\psi(y)|,\notag
\end{align}
which leads to a contradiction with $\upsilon_n$ tending to infinity. Thus, we have $\bar{\upsilon}^*<\infty$ and $\underline{\upsilon}^*>-\infty.$ As $\upsilon\mapsto\E [f(Y,\upsilon)]$ is convex and $\upsilon^*$ is a minimizer of $\upsilon\mapsto \E [f(Y,\upsilon)]$ iff it is a minimizer of $\upsilon\mapsto\tilde{h}(\E [f(Y,\upsilon)]),$ we can define an interval $[m^{\underline{y},\bar{y}},M^{\underline{y},\bar{y}}]$ as follows: If $\bar{\upsilon}^*,\underline{\upsilon}^*\in\mathbb{R},$ then we take $m^{\underline{y},\bar{y}}<\min\{\bar{\upsilon}^*,\underline{\upsilon}^*\},$ $M^{\underline{y},\bar{y}}>\max\{\bar{\upsilon}^*,\underline{\upsilon}^*\}$ and the conclusion follows; If $\bar{\upsilon}^*=-\infty$ and $\underline{\upsilon}^*=\infty,$ then $\tilde{h}(\E [f(Y,\upsilon)])$ is constant of $\upsilon$ on $\mathbb{R}$ and the conclusion follows on any sub-interval of $\mathbb{R};$ If $\bar{\upsilon}^*=-\infty$ and $\underline{\upsilon}^*\in\mathbb{R},$  then $\tilde{h}(\E [f(Y,\upsilon)])$ is constant of $\upsilon$ on $(-\infty,\underline{\upsilon}^*]$ and is increasing on $[\underline{\upsilon}^*,\infty),$ and we take $m^{\underline{y},\bar{y}}<\underline{\upsilon}^*,$ $M^{\underline{y},\bar{y}}>\underline{\upsilon}^*$;  If $\bar{\upsilon}^*\in\mathbb{R}$ and $\underline{\upsilon}^*=\infty,$ then $\tilde{h}(\E [f(Y,\upsilon)])$ is constant of $\upsilon$ on $[\bar{\upsilon}^*,\infty)$ and is increasing on $(-\infty,\bar{\upsilon}^*]$, and we take $m^{\underline{y},\bar{y}}<\bar{\upsilon}^*,$ $M^{\underline{y},\bar{y}}>\bar{\upsilon}^*.$ 
\end{proof}

\begin{proof}[Proof of Theorem \ref{t_Bellman}]
Define $V_{T,\upsilon}^*(s,y) = f(-y,\upsilon),$ and define $V_{t,\upsilon}^*(s,y) = \mathcal{J}_t V^*_{t+1,\upsilon}(s,y)$ recursively for $t=T-1.\ldots,1,0.$ Below, we use the backward induction to show that $V_{t,\upsilon}^*(s,y) \leq  V^{\pi}_{t,\upsilon}(s,y)$ for any $s \in \mathcal{S},$  $y \in \mathcal{Y}$ and $t=0,\ldots,T.$ At time $T,$ $V_{T,\upsilon}^*(s,y) = V^{\pi}_{T,\upsilon}(s,y) = f(-y,\upsilon).$ Assuming the assertion holds for $t+1,$ we obtain at time $t,$
\begin{align}
    V^{\pi}_{t,\upsilon}(s,y) &= \mathbb{E}_{\pi} \left.\left[V^{\pi}_{t+1,\upsilon}\left(S_{t+1},y-c\left(s,A_t,S_{t+1}\right)\right)\right|S_t=s,Y_t=y\right]\notag\\
    &= \int_{\mathcal{A}} \int_{\mathcal{S}} \pi(a|t,s,y) p(s'|s,a)V^{\pi}_{t+1,\upsilon}\left(s',y-c\left(s,a,s'\right)\right) \nu_{\mathcal{S}}(\d s') \nu_{\mathcal{A}}(\d a)\notag\\
    &\geq \int_{\mathcal{A}} \int_{\mathcal{S}} \pi(a|t,s,y) p(s'|s,a)V^*_{t+1,\upsilon}\left(s',y-c\left(s,a,s'\right)\right) \nu_{\mathcal{S}}(\d s') \nu_{\mathcal{A}}(\d a)\notag\\
    & \geq \int_{\mathcal{A}}  \pi(a|t,s,y)\left( \inf_{a'\in\mathcal{A}}\int_{\mathcal{S}} p(s'|s,a')V^*_{t+1,\upsilon}\left(s',y-c\left(s,a',s'\right)\right) \nu_{\mathcal{S}}(\d s') \right)\nu_{\mathcal{A}}(\d a) \notag\\
    &= \inf_{a\in\mathcal{A}}\int_{\mathcal{S}} p(s'|s,a)V^*_{t+1,\upsilon}\left(s',y-c\left(s,a,s'\right)\right) \nu_{\mathcal{S}}(\d s')\notag\\
    &= V_{t,\upsilon}^*(s,y).\notag
\end{align}
By induction, the assertion holds for $0\leq t\leq T.$

For any $(s,y)\in\mathcal{S}\times\mathcal{Y},$ there exists a sequence $\{a_n\}_{n=1}^{\infty}\subset\mathcal{A}$ such that 
\begin{align}
& \int_{\mathcal{S}}  p(s'|s,a_n)f\left(c\left(s,a_n,s'\right)-y,\upsilon\right) \nu_{\mathcal{S}}(\d s')\notag\\
&\quad\to \inf_{a\in\mathcal{A}}\left\{\int_{\mathcal{S}}  p(s'|s,a)f\left(c\left(s,a,s'\right)-y,\upsilon\right) \nu_{\mathcal{S}}(\d s')\right\}.\notag
\end{align}
We hence let $\pi_{\upsilon,n} (\cdot|T-1,s,y)$ be a point mass on $a_n.$ Moreover, for any $(s,y)\in\mathcal{S}\times\mathcal{Y},$ there exists a sequence $\{a_n\}_{n=1}^{\infty}\in\mathcal{A}$ such that 
\begin{align}
&\int_{\mathcal{S}}  p(s'|s,a_n){V}^*_{t+1,\upsilon}\left(s',y-c\left(s,a_n,s'\right)\right) \nu_{\mathcal{S}}(\d s')\notag\\
&\quad\to   \inf_{a\in\mathcal{A}}\left\{\int_{\mathcal{S}}  p(s'|s,a){V}^*_{t+1,\upsilon}\left(s',y-c\left(s,a,s'\right)\right) \nu_{\mathcal{S}}(\d s')\right\}.\notag
\end{align}
We define $\pi_{\upsilon,n} (\cdot|t,s,y)$ recursively as a point mass on $a_n.$ By the dominated convergence theorem, it is easy to see that
\begin{equation}
  \left[ V_{t,\upsilon}^{\pi_{\upsilon,n}}(s,y) -V_{t,\upsilon}^*(s,y)\right]\to 0 \ (n\to\infty),\notag
\end{equation}
for any $t\in\mathcal{T},s\in\mathcal{S},y\in\mathcal{Y}.$
\end{proof}

The proof of Proposition \ref{propmin} needs the following lemma.

\begin{lemma}
\label{lpxs_lemma}
There exists a $\Delta>0$ such that 
\begin{equation}
    \left|\E[f(Y,\upsilon_1)]-\E [f(Y,\upsilon_2)]\right| \leq \Delta |\upsilon_1-\upsilon_2|,\notag
\end{equation}
for any $\upsilon_1,\upsilon_2\in\Upsilon$ and random variable $Y$ satisfying $\underline{y}\leq Y\leq\bar{y}$ almost surely. 
\end{lemma}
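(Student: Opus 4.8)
The plan is to exploit the convexity of $f$ in its second argument together with the compactness supplied by Assumption \ref{ass1}. Since $\left|\E[f(Y,\upsilon_1)]-\E[f(Y,\upsilon_2)]\right|\le \E\left[|f(Y,\upsilon_1)-f(Y,\upsilon_2)|\right]$, it suffices to produce a single constant $\Delta$ with $|f(y,\upsilon_1)-f(y,\upsilon_2)|\le \Delta|\upsilon_1-\upsilon_2|$ for every $y\in[\underline{y},\bar{y}]$ and every $\upsilon_1,\upsilon_2\in\Upsilon$; taking expectations then finishes the argument, because $Y\in[\underline{y},\bar{y}]$ almost surely and so the bound holds under the expectation regardless of the (arbitrary) law of $Y$ on $[\underline{y},\bar{y}]$.

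First I would pass to the convex hull $[\underline{\upsilon},\bar{\upsilon}]\supseteq\Upsilon$ with $\underline{\upsilon}=\min\Upsilon$, $\bar{\upsilon}=\max\Upsilon$, fix any $\eta>0$, and record that $f$ is finite and continuous on the compact rectangle $K:=[\underline{y},\bar{y}]\times[\underline{\upsilon}-\eta,\bar{\upsilon}+\eta]\subset\mathcal{D}\times\R$ by Assumption \ref{ass1}(1); hence $M:=\sup_K|f|<\infty$. Next I would invoke the standard monotonicity of secant slopes for convex functions: for fixed $y$ and $\underline{\upsilon}\le\upsilon_1<\upsilon_2\le\bar{\upsilon}$, convexity of $\upsilon\mapsto f(y,\upsilon)$ gives
\begin{equation}
\frac{f(y,\underline{\upsilon})-f(y,\underline{\upsilon}-\eta)}{\eta}\le \frac{f(y,\upsilon_2)-f(y,\upsilon_1)}{\upsilon_2-\upsilon_1}\le \frac{f(y,\bar{\upsilon}+\eta)-f(y,\bar{\upsilon})}{\eta}.\notag
\end{equation}
Both outer quotients are bounded in absolute value by $2M/\eta$ uniformly in $y\in[\underline{y},\bar{y}]$, so the central difference quotient has modulus at most $2M/\eta$. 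Setting $\Delta:=2M/\eta$ yields the pointwise Lipschitz bound on $[\underline{\upsilon},\bar{\upsilon}]$ (and a fortiori on $\Upsilon$), and integrating over $Y$ delivers the claim.

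The only delicate point is obtaining a Lipschitz constant that is \emph{uniform in $y$}. A convex function is automatically locally Lipschitz, but its constant a priori depends on $y$, and mere continuity of $f$ on the compact $K$ would only furnish a uniform modulus of continuity, not a Lipschitz estimate. This is exactly where the secant representation does the work: it expresses the local Lipschitz constant through the values of $f$ at the four ``corner'' heights $\underline{\upsilon}-\eta,\underline{\upsilon},\bar{\upsilon},\bar{\upsilon}+\eta$, which are uniformly controlled by the bound $M$ coming from continuity on the compact set $K$. It is worth emphasizing that no continuity or semicontinuity of the MDP (the kernel $p$ or the cost $c$) enters here; the estimate rests solely on the convexity of $f$ in $\upsilon$ and the mild regularity of Assumption \ref{ass1}, which is why $\Delta$ can be chosen independently of the distribution of $Y$.
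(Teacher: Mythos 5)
Your proof is correct, and it reaches the conclusion by a genuinely different route from the paper's. Both arguments rest on the same core fact --- monotonicity of secant slopes of the convex map $\upsilon\mapsto f(y,\upsilon)$ --- but they obtain uniformity in $y$ quite differently. The paper reduces the claim to two one-sided slope bounds at the endpoints $\underline{\upsilon}$ and $\bar{\upsilon}$ of $\Upsilon$ and proves each by contradiction: if the slopes blew up along a sequence $(y_n,\upsilon_n)$, compactness of $[\underline{y},\bar{y}]$ would give $y_{n_k}\to y_0$, and continuity of $f$ would transfer the blow-up to the single convex function $f(y_0,\cdot)$, contradicting finiteness of its one-sided derivatives. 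Your argument is instead direct and quantitative: enlarging the interval to $[\underline{\upsilon}-\eta,\bar{\upsilon}+\eta]$, the three-chord inequality sandwiches every interior secant slope between two boundary slopes whose absolute values are at most $2M/\eta$, where $M$ is the supremum of $|f|$ on the enlarged compact rectangle, finite by Assumption \ref{ass1}; this yields an explicit constant $\Delta=2M/\eta$ rather than a non-constructive one. Your route also sidesteps a delicate step in the paper's contradiction argument, where values at $y_{n_k}$ are replaced by values at $y_0$ inside difference quotients whose denominators tend to zero --- an interchange that mere joint continuity does not obviously license, and which your corner-value bound renders unnecessary. The enlargement by $\eta$ is legitimate because Definition \ref{def1} defines $f$, convex in $\upsilon$, on all of $\mathcal{D}\times\R$, so nothing is lost; both proofs invoke Assumption \ref{ass1} only through continuity of $f$ on a compact set, and both finish identically by integrating the pointwise Lipschitz bound over the law of $Y$.
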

\begin{proof}[Proof of Lemma \ref{lpxs_lemma}]
For any $y\in\mathcal{Y},$ and $\upsilon_1<\upsilon_2\in\Upsilon,$ let
\begin{equation}
    L(y,\upsilon_1,\upsilon_2) := \frac{f(y,\upsilon_2)-f(y,\upsilon_1)}{\upsilon_2-\upsilon_1}.\notag
\end{equation}
As $f(y,\upsilon)$ is convex in $\upsilon\in\mathbb{R},$ $L$ is increasing in $\upsilon_2$ and $\upsilon_1$ respectively. Hence, it is sufficient to show 
\begin{equation}
\label{leftside}\inf_{y\in\mathcal{Y},\upsilon_2\downarrow\underline{\upsilon}}L(y,\underline{\upsilon},\upsilon_2) >-\infty,\\
\end{equation} and \begin{equation}\label{rightside}\sup_{y\in\mathcal{Y},\upsilon_1\uparrow \bar{\upsilon}}L(y,\upsilon_1,\bar{\upsilon}) <\infty,
\end{equation}
which then imply
\begin{equation}
    -\infty<\inf_{y\in\mathcal{Y},\upsilon_1<\upsilon_2\in\Upsilon} \frac{f(y,\upsilon_2)-f(y,\upsilon_1)}{\upsilon_2-\upsilon_1}\leq\sup_{y\in\mathcal{Y},\upsilon_1<\upsilon_2\in\Upsilon} \frac{f(y,\upsilon_2)-f(y,\upsilon_1)}{\upsilon_2-\upsilon_1} <\infty.\notag
\end{equation}
In what follows, we only show \eqref{rightside}, and the proof for \eqref{leftside} is similar. We prove by contradiction and suppose that \eqref{rightside} does not hold. Then there exist $\left\{y_n\right\}^{\infty}_{n=1}\subset \mathcal{Y}$ and $\upsilon_n\uparrow\bar{\upsilon}\ (n\to\infty)$ such that 
$f(y_n,\bar{\upsilon})-f(y_n,\upsilon_n)>n(\bar{\upsilon}-\upsilon_n).$ 
Since $\underline{y}\leq y_n \leq \bar{y}$ for all $n\geq 1,$ there exists a convergent subsequence $y_{n_k}\to y_0 \; (k\to\infty).$ Thus, we have
\begin{equation}
\liminf_{k\to\infty}   \frac{ f(y_0,\bar{\upsilon})-f(y_0,\upsilon_{n_k})}{\bar{\upsilon}-\upsilon_{n_k}} =\liminf_{k\to\infty}   \frac{ f(y_{n_k},\bar{\upsilon})-f(y_{n_k},\upsilon_{n_k})}{\bar{\upsilon}-\upsilon_{n_k}} \geq  \liminf_{k\to\infty} n_k = \infty,\notag
\end{equation}
which implies 
\begin{equation}
    \lim_{k\to\infty}   \frac{ f(y_0,\bar{\upsilon})-f(y_0,\upsilon_{n_k})}{\bar{\upsilon}-\upsilon_{n_k}} = \infty.\notag
\end{equation}
This leads to a contradiction with the fact that $f(y_0,\cdot)$ is convex on $\mathbb{R}.$ Therefore, there exists $\Delta>0$ such that 
\begin{equation}
    \sup_{y\in\mathcal{Y},\upsilon_1<\upsilon_2\in\Upsilon} \frac{|f(y,\upsilon_2)-f(y,\upsilon_1)|}{|\upsilon_2-\upsilon_1|} \leq \Delta.\notag
\end{equation}
Finally, we have
\begin{align}
    \left|\E[ f(Y,\upsilon_1)]-\E [f(Y,\upsilon_2)]\right| = &\left|\int_{\mathcal{Y}}f(y,\upsilon_1) \d F(y)-\int_{\mathcal{Y}}f(y,\upsilon_2) \d F(y)\right|\notag\\
   \leq &\int_{\mathcal{Y}}\left|f(y,\upsilon_1) -f(y,\upsilon_2) \right|\d F(y)
   \leq \Delta|\upsilon_1-\upsilon_2|.\notag\qedhere
\end{align}
\end{proof}

\begin{proof}[Proof of Proposition \ref{propmin}]
First, we prove that $\upsilon\mapsto h(\mathbb E [V^*_{0,\upsilon}(S_0,0)] ,\upsilon)$ is a continuous function on $\Upsilon.$ We prove this claim by contradiction, and suppose that $\upsilon \mapsto h(\mathbb E [V^*_{0,\upsilon}(S_0,0)], \upsilon)$ has a discontinuous point at $\upsilon_0.$ Then, there exists a sequence $\left\{\upsilon_n\right\}_{n=1}^{\infty}$ satisfying 
     $\upsilon_n\to\upsilon_0\ (n\to\infty)$
and
\begin{equation}
\label{abs_gap}
    \left|\mathbb E [V^*_{0,\upsilon_n}(S_0,0)]-\mathbb E [V^*_{0,\upsilon_0}(S_0,0)]\right| \geq \delta
\end{equation}
for some $\delta>0.$ We note that 
\begin{align}
    &\mathbb E [V^*_{0,\upsilon_0}(S_0,0)]\notag\\
    =&\inf_{\pi\in\tilde{\Pi}}\lim_{\upsilon\to\upsilon_0}  \mathbb{E}_{\pi} \left[f\left(\sum_{k=0}^{T-1}c(S_k,A_k,S_{k+1})-Y_t,\upsilon\right)\right]\notag\\
    \geq&\lim_{\upsilon\to\upsilon_0}\inf_{\pi\in\tilde{\Pi}} \mathbb{E}_{\pi} \left[f\left(\sum_{k=0}^{T-1}c(S_k,A_k,S_{k+1})-Y_t,\upsilon\right)\right]\notag\\
    =&\lim_{\upsilon\to\upsilon_0}\mathbb E [V^*_{0,\upsilon}(S_0,0)],\notag
\end{align}
which implies that $\mathbb E [V^*_{0,\upsilon}(S_0,0)]$ is upper semi-continuous in $\upsilon.$ Therefore, we only need to show
\begin{equation}
    \mathbb E [V^*_{0,\upsilon_n}(S_0,0)]-\mathbb E [V^*_{0,\upsilon_0}(S_0,0)] \leq -\delta. \notag
\end{equation}
For any $\epsilon>0,$ take  
\begin{equation}
   \pi_{n,\epsilon}\in \left\{\pi\in\tilde{\Pi}:\E [V^{\pi}_{0,\upsilon_n}(S_0,0)]-\E [V^{*}_{0,\upsilon_n}(S_0,0)]<\epsilon\right\}. \notag
\end{equation}
Then, it holds that 
\begin{equation}
     \mathbb E [V^*_{0,\upsilon_0}(S_0,0)]\leq \liminf_{\epsilon\to 0} \E [V^{\pi_{n,\epsilon}}_{0,\upsilon_0}(S_0,0)],\notag
\end{equation}
which implies
\begin{equation}
  \limsup_{\epsilon\to 0} \E [V^{\pi_{n,\epsilon}}_{0,\upsilon_n}(S_0,0)]  =\mathbb E [V^*_{0,\upsilon_n}(S_0,0)]\leq \liminf_{\epsilon\to 0} \E [V^{\pi_{n,\epsilon}}_{0,\upsilon_0}(S_0,0)]-\delta.\notag
\end{equation}
Thus we have 
\begin{equation}
\label{eq_contra}
    \limsup_{\epsilon\to 0} \frac{\E [V^{\pi_{n,\epsilon}}_{0,\upsilon_n}(S_0,0)]-\E [V^{\pi_{n,\epsilon}}_{0,\upsilon_0}(S_0,0)]}{|\upsilon_n-\upsilon_0|}\leq -\frac{\delta}{|\upsilon_n-\upsilon_0|} .
\end{equation}
In \eqref{eq_contra}, as $n\to\infty,$ the $\text{RHS}\to-\infty$ while the $\text{LHS}\geq-\Delta$ by Lemma \ref{lpxs_lemma}. This leads to a contradiction.

As $\upsilon\mapsto h(\mathbb E [V^*_{0,\upsilon}(S_0,0) ],\upsilon)$ is a continuous function on $\Upsilon,$ it attains a minimal value $l$ on $\Upsilon.$ We next prove that $l$ is also the minimal value of $h(\mathbb E [V^*_{0,\upsilon}(S_0,0)] ,\upsilon)$ on $\mathbb{R}.$ Assume that there exists $\upsilon_0\notin \Upsilon,$ such that  $h(\mathbb E [V^*_{0,\upsilon_0}(S_0,0)] ,\upsilon_0)<l.$ By Proposition  \ref{prop_appro}
, there exists a $\pi\in\tilde{\Pi}$ such that $h(\mathbb E [V^{\pi}_{0,\upsilon_0}(S_0,0)] ,\upsilon_0)<l.$ Then by Lemma \ref{simpleProposition1}  and Proposition \ref{simpleProposition2}, $h(\mathbb E [V^{\pi}_{0,\upsilon}(S_0,0)] ,\upsilon)$ attains its minimal value on $\upsilon_1\in\Upsilon,$ and thus we have
\begin{equation}
   h(\mathbb E [V^{\pi}_{0,\upsilon_1}(S_0,0)] ,\upsilon_1)\leq h(\mathbb E [V^{\pi}_{0,\upsilon_0}(S_0,0)] ,\upsilon_0)<l,\notag
\end{equation}
which leads to a contradiction with the fact that $l$ is the minimal value on $\Upsilon.$
\end{proof}

\begin{proof}[Proof of Theorem \ref{t1}]
By Theorem 2.2 in \cite{hornik1989multilayer} and the fact that
\begin{align}
    V_{t,\upsilon}(s,y;\theta) &= \E_{\pi_{\upsilon}^\theta} \left[f\left(\sum_{k=t}^{T-1}c(S^{\theta}_k,A^{\theta}_k,S^{\theta}_{k+1})-Y_t^{\theta},\upsilon\right) \Big|S_t^{\theta}=s, Y_t^{\theta}=y\right]\notag\\
    &=\E_{\pi_{\upsilon}^\theta} \left[f\left(\sum_{k=t+1}^{T-1}c(S^{\theta}_k,A^{\theta}_k,S^{\theta}_{k+1})+c(s,A^{\theta}_t,S^{\theta}_{t+1})-y,\upsilon\right) \right]\notag
\end{align}
is measurable in $(s,y,\upsilon)$ and continuous in $(y,\upsilon),$ for any $\gamma_{t},\epsilon_{t}>0,$ there exists a neural network with parameter $\phi_t$ such that
\begin{equation}
         \nu_{\mathcal{S}}\left(\sup_{y\in\mathcal{Y},\upsilon\in\Upsilon}| V_{t,\upsilon}(s,y;\theta) -V^{\phi_t}_{t,\upsilon}(s,y;\theta)| > \gamma_{t}\right)< \epsilon_{t}.\notag
    \end{equation}
By Lemma 6.4 in \cite{CJ2024}, we know that, for any $\hat{\gamma}>0$, there exists a neural network $\phi$ such that
\begin{equation}
    \underset{t\in\mathcal{T},s\in S,y \in \mathcal{Y},\upsilon\in \Upsilon}{\sup}\  \left| V^{\phi_t}_{t,\upsilon}(s,y;\theta)-V^{\phi}_{t,\upsilon}(s,y;\theta)\right| < \hat{\gamma},\notag
\end{equation}
which implies
\begin{equation}
     \nu_{\mathcal{S}}\left(\sup_{t\in\mathcal{T},y\in\mathcal{Y},\upsilon\in \Upsilon}| V^{\phi_t}_{t,\upsilon}(s,y;\theta) -V^{\phi}_{t,\upsilon}(s,y;\theta)| > \hat{\gamma}\right) = 0.\notag
\end{equation}
Thus, we have 
\begin{align}
     &\nu_{\mathcal{S}}\left(\sup_{t\in\mathcal{T},y\in\mathcal{Y},\upsilon\in\Upsilon}| V_{t,\upsilon}(s,y;\theta) -V^{\phi}_{t,\upsilon}(s,y;\theta)| > \sum_{i=0}^{T-1}\gamma_{i}+\hat{\gamma}\right)\notag\\
     \leq & \nu_{\mathcal{S}}\left(\sup_{t\in\mathcal{T},y\in\mathcal{Y},\upsilon\in\Upsilon}| V_{t,\upsilon}(s,y;\theta) -V^{\phi}_{t,\upsilon}(s,y;\theta)| > \sum_{i=0}^{T-1}\gamma_{i}+\hat{\gamma},\sup_{t\in\mathcal{T},y\in\mathcal{Y},\upsilon\in\Upsilon}| V^{\phi_t}_{t,\upsilon}(s,y;\theta) -V^{\phi}_{t,\upsilon}(s,y;\theta)| \leq \hat{\gamma}\right)\notag\notag\\
     &\quad+ \nu_{\mathcal{S}}\left(\sup_{t\in\mathcal{T},y\in\mathcal{Y},\upsilon\in\Upsilon}| V^{\phi_t}_{t,\upsilon}(s,y;\theta) -V^{\phi}_{t,\upsilon}(s,y;\theta)| > \hat{\gamma}\right)\notag\\
     \leq & \nu_{\mathcal{S}}\left(\sup_{t\in\mathcal{T},y\in\mathcal{Y},\upsilon\in\Upsilon}| V_{t,\upsilon}(s,y;\theta) -V^{\phi_t}_{t,\upsilon}(s,y;\theta)| > \sum_{i=0}^{T-1}\gamma_{i}\right)\notag\\
      \leq & \sum_{t=0}^{T-1}\nu_{\mathcal{S}}\left(\sup_{y\in\mathcal{Y},\upsilon\in\Upsilon}| V_{t,\upsilon}(s,y;\theta) -V^{\phi_t}_{t,\upsilon}(s,y;\theta)| > \sum_{i=0}^{T-1}\gamma_{i}\right)\notag\\
      \leq & \sum_{t=0}^{T-1}\nu_{\mathcal{S}}\left(\sup_{y\in\mathcal{Y},\upsilon\in\Upsilon}| V_{t,\upsilon}(s,y;\theta) -V^{\phi_t}_{t,\upsilon}(s,y;\theta)| > \gamma_{t}\right)\notag\\
     <&  \sum_{t=0}^{T-1}\epsilon_t.\notag
\end{align}
By taking $\gamma_t=\frac{\epsilon^*}{2T},\hat{\gamma}=\frac{\epsilon^*}{2}$ and $\epsilon_t = \frac{\epsilon^*}{T}$ for all $t\in\mathcal{T},$ the conclusion holds.
\end{proof}

\begin{proof}[Proof of Lemma \ref{l0}]
As $\mathcal{S},\mathcal{Y},\Upsilon$ are compact and $f(\cdot,\cdot)$ is continuous,  it follows that $|\E [V_{0,\upsilon}(S_0,0;\theta)]|$ and $|\E [V^{\phi}_{0,\upsilon}(S_0,0;\theta)]|$ are bounded in $\upsilon\in\Upsilon$. Denote the upper bound by $B.$ As $h$ is continuous, it is uniformly continuous on $[-B,B]\times\Upsilon,$ i.e., for any $(V_1,\upsilon_1),(V_2,\upsilon_2)\in [-B,B]\times\Upsilon,$ $\gamma'>0,\gamma''>0,$ and we have
\begin{equation}
    |h(V_1,\upsilon_1)-h(V_2,\upsilon_2)|\leq\gamma'',\notag
\end{equation}
if 
\begin{equation}
    \sqrt{|V_1-V_2|^2+|\upsilon_1-\upsilon_2|^2} \leq \gamma'\notag
\end{equation}
holds. Taking $\upsilon_1=\upsilon_2=\upsilon$ yields that $V_1 = \E [V_{0,\upsilon}(S_0,0;\theta)]$ and $V_2=\E [V^{\phi}_{0,\upsilon}(S_0,0;\theta)]$. Hence, the conclusion follows. 
\end{proof}

\begin{proof}[Proof of Lemma \ref{l2}]
If $\mathscr{U}^*=\Upsilon,$ the conclusion is trivial. We only need to consider the case where $\mathscr{U}^*$ is a proper subset of $\Upsilon.$ We prove by contradiction and suppose that there exist $\gamma_0>0$ and $\left\{(\phi_n,\theta_n)\right\}_{n=1}^{\infty}$ such that
\begin{equation}
\label{limassumption}
    \lim_{n\to\infty} \underset{\upsilon \in \Upsilon}{\text{sup}} |h(\mathbb E [V^*_{0,\upsilon}(S_0,0)],\upsilon) - h(\mathbb E [V_{0,\upsilon}^{\phi_n}(S_0,0;\theta_n)],\upsilon)| = 0,
\end{equation}
and $d\left(\mathscr{U}^*,\upsilon^*({\theta_n},\phi_n)\right)>\gamma_0.$ We denote
\begin{equation}
    l_0 = \inf\left\{h(\mathbb E [V^*_{0,\upsilon}(S_0,0)],\upsilon):\upsilon\in\Upsilon\right\}\notag,
\end{equation}
and
\begin{equation}
    l_1 = \inf\left\{h(\mathbb E [V^*_{0,\upsilon}(S_0,0)],\upsilon):\upsilon\in\Upsilon,d\left(\mathscr{U}^*,\upsilon\right)>\gamma_0\right\}.\notag
\end{equation}
In view that $\upsilon\mapsto h(\mathbb E[ V^*_{0,\upsilon}(S_0,0)],\upsilon)$ is continuous on $\Upsilon,$ we have $l_1>l_0.$ Fix $\upsilon'\in \mathscr{U}^*.$ It holds that
\begin{align}
    \liminf_{n\to\infty}h(\mathbb E [V_{0,\upsilon'}^{\phi_n}(S_0,0;\theta_n)],\upsilon')&\geq \liminf_{n\to\infty}h(\mathbb E [V_{0,\upsilon^*({\theta_n},\phi_n)}^{\phi_n}(S_0,0;\theta_n)],\upsilon^*({\theta_n},\phi_n))\notag\\
    &\geq l_1>l_0=h(\mathbb E [V^*_{0,\upsilon'}(S_0,0)],\upsilon'),\notag
\end{align}
which leads to a contradiction to \eqref{limassumption}.\end{proof}

\begin{proof}[Proof of Theorem \ref{t2}]
Because $\mathcal{S},\mathcal{Y},\Upsilon$ are all compact and $f(\cdot,\cdot)$ is continuous, we have $|V_{0,\upsilon}(s_0,0;\theta)|$ and $|V^{\phi}_{0,\upsilon}(s_0,0;\theta)|$ are bounded for $(s_0,\upsilon)\in\mathcal{S}\times\Upsilon$. Denote the upper bound by $B.$ If the third condition holds, we have
\begin{align}
      \underset{\upsilon\in\Upsilon}{\text{sup}} |\mathbb E [V_{0,\upsilon}(S_0,0;\theta)]- \mathbb E[V^{\phi}_{0,\upsilon}(S_0,0;\theta)]| &\leq \mathbb E[\underset{\upsilon\in\Upsilon}{\text{sup}}| V_{0,\upsilon}(S_0,0;\theta) - V^{\phi}_{0,\upsilon}(S_0,0;\theta)|]\notag\\
       &=\int_{\mathcal{S}_1}\underset{\upsilon\in\Upsilon}{\text{sup}}| V_{0,\upsilon}(s_0,0;\theta) - V^{\phi}_{0,\upsilon}(s_0,0;\theta)|p(s_0)\d s_0 \notag\\
       & \quad+\int_{\mathcal{S}
       \backslash \mathcal{S}_1}\underset{\upsilon\in\Upsilon}{\text{sup}}| V_{0,\upsilon}(s_0,0;\theta) - V^{\phi}_{0,\upsilon}(s_0,0;\theta)|p(s_0)\d s_0\notag\\
       &\leq 2B\epsilon_1'+\gamma_1'.\notag
   \end{align}
For any $\gamma'>0,$ we take $\gamma_1' = \frac{\gamma'}{4},$ $\epsilon_1'=\frac{\gamma'}{4B}$ and have
\begin{align}
    \underset{\upsilon\in\Upsilon}{\text{sup}} |\mathbb E[V^*_{0,\upsilon}(S_0,0)] - \mathbb E[V^{\phi}_{0,\upsilon}(S_0,0;\theta)]|&\leq \underset{\upsilon\in\Upsilon}{\text{sup}} |\mathbb E [V_{0,\upsilon}(S_0,0;\theta)] - \mathbb E[V^{\phi}_{0,\upsilon}(S_0,0;\theta)]|\notag\\
    &\quad+ \sup_{\upsilon\in \mathcal{Y}}|\mathbb E [V_{0,\upsilon}^{\phi}(S_0,0;\theta)]-\mathbb E[V^*_{0,\upsilon}(S_0,0)]| \notag\\
    &< (2B\epsilon_1'+\gamma_1') + \gamma_1'\notag\\
    &=\gamma'.\notag
\end{align}
By Lemmas \ref{l0}-\ref{l2},  the desired conclusion holds.
\end{proof}

\begin{proof}[Proof of Theorem \ref{thm:am_converge}]
From \eqref{argmin}, it follows that $\E [V^{\pi_{n,\epsilon}}_{0,\upsilon_{n+1}}(S_0,0)]\leq\E [V^{\pi_{n,\epsilon}}_{0,\upsilon_{n}}(S_0,0)]$ holds for all $n\geq1$. Hence, we have 
\begin{align}
\label{chain_neq}h(\E [V^{\pi_{n,\epsilon}}_{0,\upsilon_{n+1}}(S_0,0)],\upsilon_{n+1})\leq &h(\E [V^{\pi_{n,\epsilon}}_{0,\upsilon_{n}}(S_0,0)],\upsilon_n)\notag\\
< &h(\E [V^{*}_{0,\upsilon_{n}}(S_0,0)],\upsilon_n)+\frac{\epsilon}{n^2}\notag\\
\leq &h(\E [V^{\pi_{n-1,\epsilon}}_{0,\upsilon_{n}}(S_0,0)],\upsilon_n)+\frac{\epsilon}{n^2}\notag\\
\leq &h(\E [V^{\pi_{n-1,\epsilon}}_{0,\upsilon_{n}}(S_0,0)],\upsilon_n)+\frac{\epsilon}{n-1}-\frac{\epsilon}{n},
\end{align}
    which implies \begin{equation}
        h(\E [V^{\pi_{n,\epsilon}}_{0,\upsilon_{n+1}}(S_0,0)],\upsilon_{n+1})+\frac{\epsilon}{n} < h(\E [V^{\pi_{n-1,\epsilon}}_{0,\upsilon_{n}}(S_0,0)],\upsilon_{n})+\frac{\epsilon}{n-1},\notag
    \end{equation}
    for all $n\geq 1.$ As a result, $\left\{A^{\epsilon}_n= h(\E [V^{\pi_{n,\epsilon}}_{0,\upsilon_{n+1}}(S_0,0)],\upsilon_{n+1})+\frac{\epsilon}{n}\right\}_{n=1}^{\infty}$ is a decreasing sequence with the lower bound $\min_{\upsilon\in\Upsilon}h(\E [V^{*}_{0,\upsilon}(S_0,0)],\upsilon)$, which converges to some $A^{\epsilon}\geq \min_{\upsilon\in\Upsilon}h(\E [V^{*}_{0,\upsilon}(S_0,0)],\upsilon)>-\infty,$ i.e., 
    \begin{equation}
        \lim_{n\to\infty}h(\E [V^{\pi_{n,\epsilon}}_{0,\upsilon_{n+1}}(S_0,0)],\upsilon_{n+1})=\lim_{n\to\infty}h(\E [V^{\pi_{n,\epsilon}}_{0,\upsilon_{n+1}}(S_0,0)],\upsilon_{n+1})+\frac{\epsilon}{n} = A^{\epsilon}.\notag
    \end{equation}
    Fix a $\gamma>0$. For any $N\geq 1,$ there exists some $\epsilon_N>0$ such that \begin{equation}
          h(\E [V^{\pi_{n,\epsilon}}_{0,\upsilon}(S_0,0)],\upsilon)\leq h(\E [V^{\pi}_{0,\upsilon}(S_0,0)],\upsilon)+\delta(\upsilon,\upsilon_n)+\frac{\gamma}{(n+1)^2},\notag
     \end{equation}
     for any $\upsilon\in\Upsilon,\pi\in\tilde{\Pi},0<\epsilon<\epsilon_N$ and $1\leq n\leq N.$ Thus, we have \begin{equation}
         \delta(\upsilon,\upsilon_{n+1})+h(\E [V^{\pi_{n,\epsilon}}_{0,\upsilon_{n+1}}(S_0,0)],\upsilon_{n+1})\leq h(\E [V^{\pi_{n,\epsilon}}_{0,\upsilon}(S_0,0)],\upsilon)\leq h(\E [V^{\pi}_{0,\upsilon}(S_0,0)],\upsilon)+\delta(\upsilon,\upsilon_n)+\frac{\gamma}{(n+1)^2},\notag
 \end{equation}
 which yields that
 \begin{equation}
 \label{recursiv_neq}
    h(\E [V^{\pi_{n,\epsilon}}_{0,\upsilon_{n+1}}(S_0,0)],\upsilon_{n+1})-h(\E [V^{\pi}_{0,\upsilon}(S_0,0)],\upsilon)\leq\delta(\upsilon,\upsilon_n)-\delta(\upsilon,\upsilon_{n+1})+\frac{\gamma}{(n+1)^2},
      \end{equation}
for any $\upsilon\in\Upsilon,\pi\in\tilde{\Pi},0<\epsilon<\epsilon_N$ and $1\leq n\leq N.$ 
Summing equation \eqref{recursiv_neq} from $n=1$ to $n = N$, we deduce that
\begin{align}
   \sum_{n=1}^{N}[ h(\E [V^{\pi_{n,\epsilon}}_{0,\upsilon_{n+1}}(S_0,0)],\upsilon_{n+1})-h(\E [V^{\pi}_{0,\upsilon}(S_0,0)],\upsilon)] \leq& \delta(\upsilon,\upsilon_1)-\delta(\upsilon,\upsilon_{N+1})+\sum_{n=1}^N\frac{1}{(n+1)^2}\gamma\notag\\ \leq& \delta(\upsilon,\upsilon_1)-\delta(\upsilon,\upsilon_{N+1})+\gamma.\notag
\end{align}
Sending $N$ to infinity, we get
\begin{equation}
    \limsup_{N\to\infty} \sum_{n=1}^{N} [h(\E [V^{\pi_{n,\epsilon}}_{0,\upsilon_{n+1}}(S_0,0)],\upsilon_{n+1})-h(\E [V^{\pi}_{0,\upsilon}(S_0,0)],\upsilon)]\leq \delta(\upsilon,\upsilon_1)+\gamma<\infty,\notag
\end{equation}
which implies 
\begin{equation}
    \limsup_{N\to\infty,\epsilon\to0} h(\E [V^{\pi_{N,\epsilon}}_{0,\upsilon_{N+1}}(S_0,0)],\upsilon_{N+1})=\limsup_{N\to\infty,\epsilon<\epsilon_N} h(\E [V^{\pi_{N,\epsilon}}_{0,\upsilon_{N+1}}(S_0,0)],\upsilon_{N+1})\leq h(\E [V^{\pi}_{0,\upsilon}(S_0,0)],\upsilon).\notag
\end{equation}
By the arbitrariness of $\upsilon$ and $\pi,$ we obtain that 
\begin{equation}
    \limsup_{n\to\infty,\epsilon\to 0 } h(\E [V^{\pi_{n,\epsilon}}_{0,\upsilon_{n+1}}(S_0,0)],\upsilon_{n+1})\leq h(\min_{\upsilon\in\Upsilon}\E [V^{*}_{0,\upsilon}(S_0,0)],\upsilon)\leq \liminf_{n\to\infty,\epsilon\to 0 }h(\E [V^{\pi_{n,\epsilon}}_{0,\upsilon_{n+1}}(S_0,0)],\upsilon_{n+1}),\notag
\end{equation}
which completes the proof.
\end{proof}

The proof of Proposition \ref{thm:limit} requires the following lemma. 
\begin{lemma}
\label{p_qa}
If $f(y,\upsilon) = \upsilon+\frac{1}{1-\alpha}(y-\upsilon)^+,$ for any $\pi\in\tilde{\Pi},$ we have 
\begin{equation}
     \E [V^{\pi}_{0,\upsilon}(S_0,0)]|_{\upsilon=q_{\alpha}(F_{\pi})} = \min_{\upsilon\in \mathbb{R}} \E [V^{\pi}_{0,\upsilon}(S_0,0)].\notag
\end{equation}
\end{lemma}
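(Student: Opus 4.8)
The plan is to recognize Lemma \ref{p_qa} as the classical Rockafellar--Uryasev identity for Expected Shortfall, now expressed through the augmented-state value function. First I would unwind the definition of the value function at $t=0$ with $Y_0=0$: since $V^{\pi}_{0,\upsilon}(s,0)=\mathbb{E}_{\pi}[f(\sum_{k=0}^{T-1}c(S_k,A_k,S_{k+1}),\upsilon)\mid S_0=s,Y_0=0]$, taking the expectation over $S_0$ and writing $C:=\sum_{k=0}^{T-1}c(S_k,A_k,S_{k+1})$ for the total cost under $\pi$ yields
\begin{equation}
\E[V^{\pi}_{0,\upsilon}(S_0,0)] = \upsilon + \frac{1}{1-\alpha}\,\mathbb{E}_{\pi}\big[(C-\upsilon)^+\big],\notag
\end{equation}
where the law of $C$ (conditioned on $Y_0=0$) has distribution function $F_{\pi}$. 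Note that $C$ is bounded because $c$ is bounded on the compact domain and $T<\infty$, so all expectations are finite and $q_\alpha(F_\pi)$ is a finite real number.

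Next, by Lemma \ref{simpleProposition1} the map $g(\upsilon):=\E[V^{\pi}_{0,\upsilon}(S_0,0)]$ is convex, so it suffices to verify the subdifferential optimality condition at $\upsilon^*=q_\alpha(F_\pi)$. I would compute the one-sided derivatives of $g$ via the representation $\mathbb{E}_{\pi}[(C-\upsilon)^+]=\int_\upsilon^\infty(1-F_\pi(y))\,\d y$, obtaining
\begin{equation}
g'_+(\upsilon)=1-\frac{1-F_\pi(\upsilon)}{1-\alpha},\qquad g'_-(\upsilon)=1-\frac{1-F_\pi(\upsilon-)}{1-\alpha},\notag
\end{equation}
where right-continuity of $F_\pi$ is used for $g'_+$ and $F_\pi(\upsilon-)=\mathbb{P}_{\pi}(C<\upsilon)$.

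The convex optimality condition $g'_-(\upsilon^*)\le 0\le g'_+(\upsilon^*)$ then rearranges to $F_\pi(\upsilon^*-)\le\alpha\le F_\pi(\upsilon^*)$. Finally I would check that $\upsilon^*=q_\alpha(F_\pi)=\inf\{y:F_\pi(y)\ge\alpha\}$ satisfies both inequalities: right-continuity of $F_\pi$ gives $F_\pi(q_\alpha)\ge\alpha$, while $F_\pi(y)<\alpha$ for every $y<q_\alpha$ yields $F_\pi(q_\alpha-)\le\alpha$ upon letting $y\uparrow q_\alpha$. Hence $q_\alpha(F_\pi)$ lies in the set of minimizers of the convex function $g$ over $\mathbb{R}$, which is precisely the claim.

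The main obstacle is that, since we do not assume continuity of the MDP, $F_\pi$ may have atoms, so $g$ need not be differentiable at $q_\alpha(F_\pi)$ and the naive first-order condition $F_\pi(\upsilon^*)=\alpha$ can fail. This is exactly why I work with the two-sided subdifferential characterization $F_\pi(\upsilon^*-)\le\alpha\le F_\pi(\upsilon^*)$ rather than a single derivative, and why one must carefully distinguish $F_\pi(\upsilon-)=\mathbb{P}_{\pi}(C<\upsilon)$ from $F_\pi(\upsilon)=\mathbb{P}_{\pi}(C\le\upsilon)$ throughout.
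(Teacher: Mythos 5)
Your proof is correct, and it follows the same basic route as the paper --- the Rockafellar--Uryasev first-order condition for the convex map $\upsilon\mapsto\E[V^{\pi}_{0,\upsilon}(S_0,0)]$ --- but your execution is more careful in a way that matters. The paper differentiates under the expectation to get $\frac{\partial}{\partial\upsilon}\E[V^{\pi}_{0,\upsilon}(S_0,0)]=\frac{1}{1-\alpha}(F_{\pi}(\upsilon)-\alpha)$ and then simply asserts that the minimum is attained where $F_{\pi}(\upsilon)=\alpha$. That statement is loose on two counts: the objective need not be differentiable at atoms of $F_{\pi}$, and when $F_{\pi}$ jumps over the level $\alpha$ there is no $\upsilon$ with $F_{\pi}(\upsilon)=\alpha$ at all, so the paper's characterization of the minimizer does not literally apply --- precisely the situation that cannot be excluded here, since the paper deliberately avoids continuity assumptions on the MDP. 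Your two-sided subdifferential condition $F_{\pi}(\upsilon^*-)\le\alpha\le F_{\pi}(\upsilon^*)$, together with the verification that $q_{\alpha}(F_{\pi})$ satisfies it via right-continuity and the definition of the quantile, closes exactly this gap and establishes the stated identity $\E[V^{\pi}_{0,\upsilon}(S_0,0)]|_{\upsilon=q_{\alpha}(F_{\pi})}=\min_{\upsilon\in\mathbb{R}}\E[V^{\pi}_{0,\upsilon}(S_0,0)]$ without any regularity on $F_{\pi}$. In short: same idea, but your version is the one that is actually watertight in the generality the paper claims to work in.
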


\begin{proof}[Proof of Lemma \ref{p_qa}]
  We have \begin{align}
       &\frac{\partial}{\partial \upsilon}\mathbb{E}_{\pi} \left.\left[\upsilon+\frac{1}{1-\alpha}\left(\sum_{k=0}^{T-1}c(S_k,A_k,S_{k+1})-\upsilon\right)^+\right|Y_0=0\right]\notag\\
       =&\mathbb{E}_{\pi}  \left.\left[1-\frac{1}{1-\alpha}\mathds{1}\left(\sum_{k=0}^{T-1}c(S_k,A_k,S_{k+1})-\upsilon>0\right)\right|Y_0=0\right]\notag\\
       =&1-\frac{1}{1-\alpha}\mathbb{P}_{\pi} \left.\left(\sum_{k=0}^{T-1}c\left(S_{k},A_k,S_{k+1}\right)> \upsilon\right|Y_0=0\right)\notag\\
       =& \frac{1}{1-\alpha}(F_{\pi}(\upsilon)-\alpha).\notag
\end{align}
Clearly, the function $\upsilon\mapsto \E[V^{\pi}_{0,\upsilon}(S_0,0)]$ attains its minimum at $F_{\pi}(\upsilon)=\alpha.$ 

\end{proof}

\begin{proof}[Proof of Proposition \ref{thm:limit}]
Taking $\delta(\upsilon_1,\upsilon_2)=\frac{\Delta}{2}(\upsilon_1-\upsilon_2)^2$, together with Theorem \ref{thm:am_converge} and Lemma \ref{p_qa}, leads to the conclusion.
\end{proof}

\ \\
\noindent
{\bf Acknowledgements}
Y. Liu acknowledges financial support from the National Natural Science Foundation of China (Grant No. 12401624), the Chinese University of Hong Kong (Shenzhen) research startup fund (Grant No. UDF01003336) and Shenzhen Science and Technology Program (Grant No. RCBS20231211090814028) and is partly supported by the Guangdong Provincial Key Laboratory of Mathematical Foundations for Artificial Intelligence (Grant No. 2023B1212010001). X. Yu is supported by the Hong Kong RGC General Research Fund (GRF) under grant no. 15211524, the Hong Kong Polytechnic University under grant no. P0045654 and the Research Centre for Quantitative Finance at the Hong Kong Polytechnic University under grant no. P0042708.

\small
\bibliographystyle{apalike}
\bibliography{ref}

\end{document}